\newcommand{\Elts}{\ensuremath{E}\xspace}
\newcommand{\Feasible}[1][]{\ensuremath{%
\ifthenelse{\equal{#1}{}}{{\cal F}}{{\cal F}|_{#1}}}\xspace}
\newcommand{\SetSys}{\ensuremath{(\Elts,\Feasible)}\xspace}
\newcommand{\SetSysP}{\ensuremath{(\Elts',\Feasible')}\xspace}
\newcommand{\SetSysPar}[1]{\ensuremath{(\Elts_#1,\Feasible_#1)}\xspace}
\newcommand{\diag}{\ensuremath{\mathop{\mathrm{diag}}}\xspace}
\newcommand{\COST}{\ensuremath{c}\xspace}
\newcommand{\BID}{\ensuremath{b}\xspace}
\newcommand{\NBID}{\ensuremath{x}\xspace}
\newcommand{\PAY}{\ensuremath{p}\xspace}
\newcommand{\vct}[1]{\ensuremath{\mathbf{#1}}\xspace}
\newcommand{\Cost}[2][]{\ensuremath{%
\ifthenelse{\equal{#1}{}}{\COST_{#2}}{\COST^{(#1)}_{#2}}}\xspace}
\newcommand{\Bid}[2][]{\ensuremath{%
\ifthenelse{\equal{#1}{}}{\BID_{#2}}{\BID^{(#1)}_{#2}}}\xspace}
\newcommand{\BidP}[2][]{\ensuremath{%
\ifthenelse{\equal{#1}{}}{\BID'_{#2}}{\BID'^{(#1)}_{#2}}}\xspace}
\newcommand{\NBid}[2][]{\ensuremath{%
\ifthenelse{\equal{#1}{}}{\NBID_{#2}}{\NBID^{(#1)}_{#2}}}\xspace}
\newcommand{\Pay}[2][]{\ensuremath{%
\ifthenelse{\equal{#1}{}}{\PAY_{#2}}{\PAY^{(#1)}_{#2}}}\xspace}
\newcommand{\CostSum}[2][]{\ensuremath{%
\ifthenelse{\equal{#1}{}}{\COST(#2)}{\COST^{(#1)}(#2)}}\xspace}
\newcommand{\BidSum}[2][]{\ensuremath{%
\ifthenelse{\equal{#1}{}}{\BID(#2)}{\BID^{(#1)}(#2)}}\xspace}
\newcommand{\BidSumP}[2][]{\ensuremath{%
\ifthenelse{\equal{#1}{}}{\BID'(#2)}{\BID'^{(#1)}(#2)}}\xspace}
\newcommand{\NBidSum}[2][]{\ensuremath{%
\ifthenelse{\equal{#1}{}}{\NBID(#2)}{\NBID^{(#1)}(#2)}}\xspace}
\newcommand{\PaySum}[2][]{\ensuremath{%
\ifthenelse{\equal{#1}{}}{\PAY(#2)}{\PAY^{(#1)}(#2)}}\xspace}
\newcommand{\COSTVEC}[1][]{\ensuremath{%
\ifthenelse{\equal{#1}{}}{\vct{\COST}}{\vct{\COST}^{(#1)}}}\xspace}
\newcommand{\BIDVEC}[1][]{\ensuremath{%
\ifthenelse{\equal{#1}{}}{\vct{\BID}}{\vct{\BID}^{(#1)}}}\xspace}
\newcommand{\BIDVECP}[1][]{\ensuremath{%
\ifthenelse{\equal{#1}{}}{\vct{\BID'}}{\vct{\BID'}^{(#1)}}}\xspace}
\newcommand{\NBIDVEC}[1][]{\ensuremath{%
\ifthenelse{\equal{#1}{}}{\vct{\NBID}}{\vct{\NBID}^{(#1)}}}\xspace}
\newcommand{\BIDVECEXCL}[2][]{\ensuremath{%
\ifthenelse{\equal{#1}{}}{\vct{\BID}_{-#2}}{\vct{\BID}^{(#1)}_{-#2}}}\xspace}
\newcommand{\LNASH}{\ensuremath{\nu}\xspace}
\newcommand{\LNash}[2][]{\ensuremath{%
\ifthenelse{\equal{#1}{}}{\LNASH(#2)}{\LNASH_{#1}(#2)}}\xspace}
\newcommand{\SNASH}{\ensuremath{\nu^-}\xspace}
\newcommand{\SNash}[2][]{\ensuremath{%
\ifthenelse{\equal{#1}{}}{\SNASH(#2)}{\SNASH_{#1}(#2)}}\xspace}
\newcommand{\Unit}[1]{\ensuremath{\vct{1}_{#1}}\xspace}
\newcommand{\Tot}[1]{\ensuremath{\nu_{#1}}\xspace}
\newcommand{\ADJ}{\ensuremath{A}\xspace}
\newcommand{\DIAG}{\ensuremath{D}\xspace}
\newcommand{\MAT}{\ensuremath{K}\xspace}
\newcommand{\FR}[1][]{\ensuremath{%
\ifthenelse{\equal{#1}{}}{\alpha}{\alpha(#1)}}\xspace}
\newcommand{\PV}{\ensuremath{\vct{q}}\xspace}
\newcommand{\Pv}[1]{\ensuremath{q_{#1}}\xspace}
\newcommand{\MC}{\ensuremath{\vct{\COST'}}\xspace}
\newcommand{\Mc}[1]{\ensuremath{\COST'_{#1}}\xspace}
\newcommand{\MechPay}[2][]{\ensuremath{%
\ifthenelse{\equal{#1}{}}{P(#2)}{P_{#1}(#2)}}\xspace}
\newcommand{\MechRat}[1]{\ensuremath{\phi_{#1}}\xspace}
\newcommand{\SysRat}[1]{\ensuremath{\Phi_{#1}}\xspace}
\newcommand{\MECH}{\ensuremath{\mathcal{M}}\xspace}
\newcommand{\EVMECH}{\ensuremath{\mathcal{EV}}\xspace}
\newcommand{\CONCMECH}{\ensuremath{\mathcal{RCM}}\xspace}
\newcommand{\FLOWMECH}{\ensuremath{\mathcal{FM}}\xspace}
\newcommand{\CUTMECH}{\ensuremath{\mathcal{CM}}\xspace}
\newcommand{\SELRULE}{\ensuremath{\sigma}\xspace}
\newcommand{\SelRule}[1]{\ensuremath{\SELRULE(#1)}\xspace}
\newcommand{\BLOWUP}{\ensuremath{\kappa}\xspace}
\newcommand{\Weight}[1]{\ensuremath{w_{#1}}\xspace}
\newcommand{\VCG}[1][]{\ensuremath{%
\ifthenelse{\equal{#1}{}}{R}{R[#1]}}\xspace}
\newcommand{\MING}{\ensuremath{H}\xspace}
\newcommand{\Nodes}[1]{\ensuremath{N(#1)}\xspace}
\newcommand{\MPC}[2][1]{\ensuremath{%
\ifthenelse{\equal{#1}{}}{\pi(#2)}{\pi_{#1}(#2)}}\xspace}
\newcommand{\TG}{\ensuremath{G'}\xspace}
\newcommand{\MG}{\ensuremath{\tilde{G}}\xspace}
\newcommand{\CYC}{\ensuremath{C}\xspace}
\newcommand{\Cyc}[1]{\ensuremath{\CYC_{#1}}\xspace}
\newcommand{\FLOW}{\ensuremath{F}\xspace}
\newcommand{\FLOWCOLL}{\ensuremath{\mathcal{E}}\xspace}
\newcommand{\CYSET}{\ensuremath{\mathcal{C}}\xspace}
\newcommand{\CYSETF}{\ensuremath{\mathcal{C}^{\rightarrow}}\xspace}
\newcommand{\CYSETB}{\ensuremath{\mathcal{C}^{\leftarrow}}\xspace}
\newcommand{\CYImage}[1]{\ensuremath{\phi(#1)}\xspace}
\newcommand{\EIm}[1]{\ensuremath{\gamma(#1)}\xspace}
\newcommand{\CYSETCOLL}{\ensuremath{\Gamma}\xspace}
\newcommand{\Length}[1]{\ensuremath{\ell_{#1}}\xspace}
\newcommand{\PLength}[1]{\ensuremath{\ell(#1)}\xspace}
\newcommand{\Dist}[1]{\ensuremath{d_{#1}}\xspace}
\newcommand{\dreaches}{\ensuremath{\rightarrow_0}\xspace}
\newcommand{\ALLPATH}{\ensuremath{\mathcal{P}}\xspace}
\newcommand{\ALLCUTS}{\ensuremath{\mathcal{C}}\xspace}
\newcommand{\MINCG}{H\xspace}
\begin{document}

\title{Frugal and Truthful Auctions for Vertex Covers, Flows, and
  Cuts\thanks{A preliminary version of this article appeared in the
    Proceedings of FOCS 2010 \cite{FlowMechanismsFOCS}.}}

\author{David Kempe\\
Department of Computer Science,\\ University of Southern California, CA 90089-0781, USA\\
\texttt{dkempe@usc.edu}
\and
Mahyar Salek\\
Department of Computer Science,\\ University of Southern California, CA 90089-0781, USA\\
\texttt{salek@usc.edu}
\and
Cristopher Moore\\
Computer Science Department and Department of Physics and Astronomy,\\
University of New Mexico, Albuquerque, NM 87131-0001, USA \\
and Santa Fe Institute, Santa Fe NM 87501, USA\\
\texttt{moore@cs.unm.edu} 
}

\maketitle

\begin{abstract}
We study truthful mechanisms for hiring a team of agents in three
classes of set systems: Vertex Cover auctions, $k$-flow auctions, and
cut auctions.
For Vertex Cover auctions, the vertices are owned by selfish and
rational agents, and the auctioneer wants to purchase a vertex cover
from them.
For $k$-flow auctions, the edges are owned by the agents, and
the auctioneer wants to purchase $k$ edge-disjoint $s$-$t$ paths, for
given $s$ and $t$.
In the same setting, for cut auctions, the auctioneer wants to
purchase an $s$-$t$ cut.
Only the agents know their costs, and the auctioneer needs to select a
feasible set and payments based on bids made by the agents.

We present constant-competitive truthful mechanisms for all three set systems.
That is, the maximum overpayment of the mechanism is within a constant
factor of the maximum overpayment of any truthful mechanism, for
\emph{every} set system in the class.
The mechanism for Vertex Cover is based on scaling each bid
by a multiplier derived from the dominant eigenvector of a certain
matrix. The mechanism for $k$-flows prunes the graph to be minimally
$(k+1)$-connected, and then applies the Vertex Cover mechanism.
Similarly, the mechanism for cuts contracts the graph until all $s$-$t$
paths have length exactly 2, and then applies the Vertex Cover
mechanism.

\end{abstract}

\section{Introduction}
Many tasks require the joint allocation of multiple resources
belonging to different bidders. For instance, consider the task of
routing a packet through a network whose edges are owned by different
agents. In this setting, it is necessary to obtain usage rights for
multiple edges simultaneously from the agents.
Similarly, if the agents own the vertices of a graph, and we want to
monitor all edges, we need the right to install monitoring devices on
nodes, and again obtain these rights from distinct agents.

Providing access to edges or nodes in such settings makes the
agents incur a cost \Cost{e}, which the agents should be paid for.
A convenient way to determine ``appropriate'' prices to pay the agents
is by way of \todef{auctions}, wherein the agents $e$ submit bids
\Bid{e} to an \todef{auctioneer}, who selects a \todef{feasible
subset} $S$ of agents to use, and determines prices \Pay{e} to pay the agents.
The most basic case is a single-item auction. The auctioneer requires
the service of any one of the agents, and their services are
interchangeable. Single-item auctions have a long history of study,
and are fairly well understood \cite{klemperer:guide,krishna:auction-theory}.
Motivated by applications in computer networks and electronic
commerce, several recent papers have considered the extension to
a setup termed \todef{hiring a team of agents} \cite{archer:tardos:path-mechanisms,elkind:goldberg:goldberg:frugality,elkind:sahai:steiglitz,BeyondVCG,talwar:price-of-truth}.
In this setting, there is a collection of \todef{feasible sets}, each
consisting of one or more agent.
The auctioneer, based on the agents' bids \Bid{e}, selects one feasible set
$S$, and pays each agent $e \in S$ a price \Pay{e}.

Some of the well-studied special cases of set systems are \todef{path
  auctions}
\cite{archer:tardos:path-mechanisms,elkind:sahai:steiglitz,BeyondVCG,nisan:ronen:algorithmic,yan:truthfulness},
in which the feasible sets are paths from a given source
$s$ to a given sink $t$, and \todef{spanning tree auctions}
\cite{bikchandani:devries:schummer:vohra,garg:kumar:rudra:verma,BeyondVCG,talwar:price-of-truth},
in which the feasible sets are spanning trees of a connected graph.
In both cases, the agents are the edges of the graph.
In this paper, we extend the study to more complex examples of
set systems, namely:
\begin{enumerate}
\item Vertex Covers: The agents are the \emph{vertices}
  of the graph $G$, and the auctioneer needs to select a vertex
  cover \cite{calinescu:truthful,elkind:goldberg:goldberg:frugality,talwar:price-of-truth}.
  Not only are vertex covers of interest in their own right, but they
  give a key primitive for many other set systems as well, an approach
  we explore in depth in this paper.
\item Flows: The agents are the edges of $G$, and the auctioneer
  wants to select $k$ edge-disjoint paths from $s$ to $t$. Thus, this
  scenario generalizes path auctions; the generalization turns
  out to require significant new techniques in the design and analysis
  of mechanisms.
\item Cuts: In the same setting as for flows, the auctioneer wants to
  purchase an $s$-$t$ cut.
\end{enumerate}

In choosing an auction mechanism for a set system, the auctioneer
needs to take into account that the agents are selfish.
Ideally, the auctioneer would like to know
the agents' true costs \Cost{e}. However, the costs are private information,
and a rational and selfish agent will submit a bid
$\Bid{e} \neq \Cost{e}$ if doing so leads to a higher profit.
The area of \todef{mechanism design}
\cite{mas-collel:whinston:green,nisan:ronen:algorithmic,papadimitriou:games}
studies the design of auctions for selfish and rational agents.

We are interested in designing \todef{truthful} (or
\todef{incentive-compatible}) auction mechanisms: auctions under which it is
always optimal for selfish agents to reveal their private costs \Cost{e}
to the auctioneer.
Such mechanisms are societally desirable, because they make the
computation of strategies a trivial task for the agents, and obviate
the need for gathering information about the costs or strategies of
competitors.
They are also desirable from the point of view of analysis, as they
allow us to identify  bids with costs, and let us dispense
with any kinds of assumptions about the distribution of agents' costs.  
Thus, the outcomes of truthful mechanisms are stable in a stronger
sense than Nash equilibria, and may
give bidders more confidence that the right outcome will be reached.
For this reason, truthful mechanism design has been a mainstay
of game theory for a long time.

It is well known that any truthful mechanism will have to pay agents
more than their costs at times; in this paper, we study mechanisms
approximately minimizing the ``overpayment.''
The ratio between the payments of the ``best'' truthful mechanism and
natural lower bounds has been termed the ``Price of Truth'' by Talwar
\cite{talwar:price-of-truth}, and studied in a number of recent papers
\cite{archer:tardos:path-mechanisms,bikchandani:devries:schummer:vohra,elkind:goldberg:goldberg:frugality,elkind:sahai:steiglitz,garg:kumar:rudra:verma,BeyondVCG,talwar:price-of-truth,yan:truthfulness}.
In particular, \cite{BeyondVCG} and \cite{elkind:goldberg:goldberg:frugality}
define and analyze different natural measures of lower bounds on
payments, and define the notions of frugality ratio and
competitiveness.
The \todef{frugality ratio} of a mechanism is the worst-case ratio of
payments to a natural lower bound (formally defined in Section
\ref{sec:preliminaries}), maximized over all cost vectors of the agents.
A mechanism is \todef{competitive} for a class of set systems if its
frugality ratio is within a constant factor of the frugality ratio of
the best truthful mechanism, for \emph{all} set systems in the class.

\subsection{Our Contributions}
In this paper, we present novel frugal mechanisms for three general
classes of set systems: Vertex Covers, $k$-Flows, and Cuts.
Vertex Cover auctions can be considered a very natural primitive for
more complicated set systems. Under the natural assumption that there
are no isolated vertices, they capture set systems with ``minimal
competition'': if the auction mechanism decides to exclude an agent
$v$ from the selected set, this immediately forces the mechanism to
include all of $v$'s neighbors, thus giving these neighbors a
monopoly. Thus, a different interpretation of
Vertex Cover auctions is that they capture any set system whose
feasible sets can be characterized by positive 2SAT formulas: each
edge $(i,j)$ corresponds to a clause $(x_i \vee x_j)$, stating that
any feasible set must include at least one of agents $i$ and $j$.

Our mechanism for Vertex Cover works as follows: based solely on the
structure of the graph $G$, we define an appropriate matrix \MAT and
compute its dominant eigenvector \PV. After agents submit their bids
\Bid{v}, the mechanism first scales each bid to
$\Mc{v} = \Bid{v}/\Pv{v}$, and then simply runs the VCG mechanism
\cite{vickrey:counterspeculation,clarke:multipart,groves:incentives}
with these modified bids. We prove that this mechanism has a frugality
ratio equal to the largest eigenvalue \FR of \MAT,
and that this is within a factor of 2 of the frugality ratio of any mechanism.
The lower bound is based on pairwise competition
between adjacent bidders for \emph{any} truthful mechanism, and in a sense
can be considered the natural culmination 
of the lower bound techniques of \cite{elkind:sahai:steiglitz,BeyondVCG}.
The upper bound is based on carefully balancing all possible worst
cases of a single non-zero cost against each other, and showing that
the worst case is indeed one of these cost vectors.
We stress here that the mechanism does not in general run in
polynomial time: the entries of \MAT are derived from fractional
clique sizes in $G$, which are known to be hard to compute, even
approximately.
We discuss the issue of polynomial time briefly in Section
\ref{sec:conclusions}.

Based on our Vertex Cover mechanism, we present a general methodology
for designing frugal truthful mechanisms. The idea is to take the
original set system, and prune agents from it until it has ``minimal
competition'' in the above sense; subsequently, the Vertex Cover
auction can be invoked. So long as the pruning is ``composable''
in the sense of \cite{aggarwal:hartline:knapsack} (see Section \ref{sec:vertex-cover}),
the resulting auction is truthful. The crux is then to prove that the
pruning step (which removes a significant amount of competition) does
not increase the lower bound on payments too much.
We illustrate the power of this approach with two examples.

\begin{enumerate}
\item For the $k$-flow problem, we show that pruning the graph to a
  minimum-cost $(k+1)$ $s$-$t$-connected graph \MING is composable,
  and increases the lower bound at most by a factor of $k+1$. Hence, we
obtain a $2(k+1)$-competitive mechanism. Establishing the bound of
$k+1$ requires significant technical effort.

\item For the cut problem, we show that pruning the graph to a
  minimum-cost set of edges such that each $s$-$t$ path is cut at
  least twice gives a composable selection rule. Furthermore,
  it increases the lower bound by at most a factor of 2, leading to a
  4-competitive mechanism. For the pruning step, we develop a
  primal-dual algorithm generalizing the Ford-Fulkerson Minimum-Cut
  algorithm.
\end{enumerate}

We note that while the Vertex Cover mechanism is in general not
polynomial, for both special cases derived here, the running time 
is in fact polynomial.

\subsection{Relationship to Past and Parallel Work}
As discussed above, a line of recent papers
\cite{archer:tardos:path-mechanisms,bikchandani:devries:schummer:vohra,elkind:goldberg:goldberg:frugality,elkind:sahai:steiglitz,garg:kumar:rudra:verma,BeyondVCG,talwar:price-of-truth,yan:truthfulness}
analyze frugality of auctions in the ``hiring a team'' setting, where
the auctioneer wants to obtain a feasible set of agents, while paying
not much more than necessary.
In this context, the papers by Karlin, Kempe, and Tamir~\cite{BeyondVCG}
and Elkind, Goldberg, and Goldberg~\cite{elkind:goldberg:goldberg:frugality}
are particularly related to our work.

Karlin et al.~\cite{BeyondVCG} introduce the definitions of frugality
and competitiveness which we use here. They also
give competitive mechanisms for path auctions, and for so-called
$r$-out-of-$k$ systems, in which the auctioneer can select any $r$ out
of $k$ disjoint sets of agents. At the heart of both mechanisms is a
mechanism for $r$-out-of-$(r+1)$ systems. Our mechanism for Vertex
Covers can be considered a natural generalization of this mechanism.
Furthermore, both $r$-out-of-$k$ systems and path auctions are special
cases of $r$-flows, since choosing an $r$-flow in a graph
consisting of $k$ vertex-disjoint $s$-$t$ paths is equivalent to an 
$r$-out-of-$k$ system.
Our approach of pruning the graph is similar in spirit to the approach
in \cite{BeyondVCG}, where graphs were also first pruned to be
minimally 2-connected, and set systems were reduced to
$r$-out-of-$(r+1)$ systems. However, the combinatorial structure of
$k$-flows makes this pruning (and its analysis) much more involved in our case.

Elkind et al.~\cite{elkind:goldberg:goldberg:frugality}
study truthful mechanisms for Vertex Cover. They
present a polynomial-time mechanism with frugality ratio bounded by
$2\Delta$, where $\Delta$ is the maximum degree of the graph,
and also show that there
exist graphs where the best truthful mechanism must have frugality
ratio at least $\Delta/2$. Notice, however, that this does not
guarantee that the mechanism is competitive. Indeed, there are graphs
where the best truthful mechanism has frugality ratio significantly
smaller than $\Delta/2$, and our goal is to have a mechanism which
is within a constant factor of best possible \emph{for every graph}.

Several recent papers have extended the problem of hiring a team of
agents in various directions.  
Cary, Flaxman, Hartline, and Karlin~\cite{cary:flaxman:hartline:karlin}
combine truthful auctions for hiring a team with revenue-maximizing auctions for selling items. 
Du, Sami, and Shi~\cite{du:sami:shi} and 
Iwasaki, Kempe, Saito, Salek, and Yokoo~\cite{FalseNameProof}
study path auctions under the additional requirement that not only
should they be truthful, but \todef{false-name proof}: agents owning
multiple edges have no incentive to claim that these edges belong to
different agents. Du et al.~show that there are no false-name proof
mechanisms that are also Pareto-efficient, and Iwasaki et al.~analyze
the frugality ratio of false-name proof mechanisms, showing
exponential lower bounds.

Results very similar to ours have been derived independently
and simultaneously by 
Chen, Elkind, Gravin, and Petrov~\cite{chen:elkind:gravin:petrov}.
Both papers first derive mechanisms for Vertex Cover auctions.
Our mechanism is based on scaling the agents' bids by 
the entries of the dominant eigenvector of a scaled adjacency matrix.
It has constant competitive ratio for all graphs,
but may not run in polynomial time. The mechanism of Chen et al., on the other
hand, uses eigenvectors of the unscaled adjacency matrix.
It may not be constant competitive on some inputs, but it always runs
in polynomial time. 

Chen et al.~also propose the approach of reducing other set systems to
Vertex Cover instances, called ``Pruning-Lifting Mechanisms'' there.
In particular, they derive the same mechanism as the present paper for
$k$-flows, with similar key lemmas in the proof.
While their Vertex Cover mechanism is different from ours in general,
on inputs derived from flow and cut problems, the scaling factor in
our matrix is the same for all entries, and the mechanisms therefore
coincide. In particular, the mechanisms in both papers are thus
competitive and run in polynomial time.

While the mechanism of Chen et al.~\cite{chen:elkind:gravin:petrov} 
may not always be competitive due to the lack of scaling factors in
the matrix, their proof of a lower bound involves a clever application
of Young's Inequality, and thus avoids losing the factor of 2 in our
lower bound. Thus, whenever their mechanism coincides with ours, both
mechanisms are optimal.
In particular, this also implies that the $k$-flow mechanism of the present
paper is $(k+1)$-competitive and our mechanism for $s$-$t$ cuts is
2-competitive.  
Moreover, they prove stronger
bounds on the $k$-flow mechanism: when compared against the lower
bound from \cite{elkind:goldberg:goldberg:frugality} (used in this
paper, and defined formally in Section \ref{sec:preliminaries}),
the mechanism is in fact optimal.

Finally, in collaboration with the authors of
\cite{chen:elkind:gravin:petrov}, we recently showed that 
Young's Inequality can be applied to the analysis of our Vertex
Cover mechanism, removing the factor of 2 from the lower bound. 
In other words, we show that our Vertex Cover mechanism is indeed
optimal for all Vertex Cover instances. 
This result will be included in a joint full version of both papers.

\section{Preliminaries} \label{sec:preliminaries}
A set system \SetSys has $n$ \todef{agents} (or \todef{elements}),
and a collection $\Feasible \subseteq 2^{\Elts}$ of \todef{feasible sets}.
We call a set system \todef{monopoly-free} if no element is in all
feasible sets, i.e., if $\bigcap_{S \in \Feasible} S = \emptyset$.
The three classes of set systems studied in this paper are:
\begin{enumerate}
\item Vertex Covers: here, the agents are the \emph{vertices} of a
  graph $G$, and \Feasible is the collection of all vertex covers of
  $G$. To avoid confusion, we will denote the agents by $u,v$ instead
  of $e$ in this case. Notice that every Vertex Cover set system is
  monopoly-free.
\item $k$-flows: here, we are given a graph $G$ with source $s$ and
  sink $t$. The agents are the \emph{edges} of $G$. A set of edges is
  feasible if it contains at least $k$ edge-disjoint $s$-$t$ paths.
  A $k$-flow set system is monopoly-free if and only if the minimum
  $s$-$t$ cut cuts at least $k+1$ edges.
\item Cuts: With the same setup as for $k$-flows, a set of edges is
  feasible if it contains an $s$-$t$ cut. Thus, the set system is
  monopoly-free if and only if $G$ contains no edge from $s$ to $t$.
\end{enumerate}

The set system \SetSys is common knowledge to the auctioneer and all agents.
Each agent $e \in E$ has a \todef{cost} $\Cost{e}$, which is private, i.e.,
known only to $e$.  
We write $\CostSum{S} = \sum_{e \in S} \Cost{e}$ for the
total cost of a set $S$ of agents, and also extend this notation to
other quantities (such as bids or payments).
A \todef{mechanism} for a set system proceeds as follows:

\begin{enumerate}
\item Each agent submits a sealed bid \Bid{e}.
\item Based on the bids \Bid{e}, the auctioneer selects a feasible set
$S \in \Feasible$ as the winner, and computes a payment $\Pay{e} \geq
\Bid{e}$ for each agent $e \in S$.
The agents $e \in S$ are said to \todef{win}, while all other agents \todef{lose}.
\end{enumerate}

Each agent, knowing the algorithm for computing the winning set and
the payments, will choose a bid \Bid{e} maximizing her own \todef{profit}, 
which is $\Pay{e} - \Cost{e}$ if the agent wins, and 0 otherwise.
We are interested in mechanisms where self-interested agents will bid
$\Bid{e} = \Cost{e}$. More precisely, a mechanism is \todef{truthful}
if, for any fixed vector \BIDVECEXCL{e} of bids by all other agents,
$e$ maximizes her profit by bidding $\Bid{e} = \Cost{e}$.
If a mechanism is known to be truthful, we can use $\Bid{e}$ and
$\Cost{e}$ interchangeably.
It is well-known \cite{archer:tardos:path-mechanisms,krishna:auction-theory}
that a mechanism is truthful only if its selection rule is
\todef{monotone} in the following sense: if all other agents' bids
stay the same, then a losing agent cannot become a winner by raising
her bid. Once the selection rule is fixed, there is a unique payment
scheme to make the mechanism truthful. Namely, each agent is
paid her \todef{threshold bid}: the supremum of all winning bids she
could have made given the bids of all other agents.


\subsection{Nash Equilibria and Frugality Ratios}
\label{sec:Nash}

To measure how much a truthful mechanism ``overpays,'' we need a
natural bound to compare the payments to.
Karlin et al.~\cite{BeyondVCG} proposed using as a bound the
solution of a natural minimization problem.
Let $S$ be the cheapest feasible set with respect to the
true costs \Cost{e}; ties are broken lexicographically.

\begin{LP}[eqn:cheapest-nash]{Minimize}{\SNash{\COSTVEC} := \sum_{e\in S} \NBid{e}}
\NBid{e} \geq \Cost{e} & \mbox{ for all } e \in S \\
\NBid{e} = \Cost{e} & \mbox{ for all } e \notin S\\
\sum_{e \in S} \NBid{e} \leq \sum_{e \in T} \NBid{e} & 
\mbox{ for all } T \in \Feasible\\[0.5ex]
\multicolumn{2}{l}{\mbox{For every $e\in S$, there is a $T_e \in
  \Feasible, e \notin T_e$ such that}}\\
\sum_{e' \in S} \NBid{e'} = \sum_{e'\in T_e} \NBid{e'}
\end{LP}

The intuition for this optimization problem is that it captures the
bids of agents in the cheapest ``Nash Equilibrium'' of a first-price
auction with full information, under the assumption that the actual
cheapest set $S$ wins, and the losing agents all bid their costs.
That is, the mechanism selects the cheapest set with respect to the
bids \NBid{e}, and pays each winning agent her bid \NBid{e}.
The first constraint captures individual rationality. The third
constraint states that the bids \NBid{e} are such that $S$
still wins, and the final
constraint states that for each winning agent, there is a
\todef{tight set} preventing her from bidding higher.  
That is, if $e$ increases her bid, the buyer will select a set $T$ excluding $e$ instead of $S$.
We say that a vector \NBIDVEC is \todef{feasible} if it satisfies all
these constraints. 

While this optimization problem is inspired by the 
analogy of Nash Equilibria, it should be noted that first-price
auctions do not in general have Nash Equilibria due to tie-breaking
issues (see a more detailed discussion in
\cite{immorlica:karger:nikolova:sami,BeyondVCG}).  

Elkind et al.~\cite{elkind:goldberg:goldberg:frugality} and Chen and
Karlin \cite{chen:karlin:cheap-labor} observed that the quantity 
\SNash{\COSTVEC} has several undesirable non-monotonicity properties. 
For instance, 
adding new feasible sets to the set system, and thus increasing 
the amount of competition between agents,
can sometimes lead to higher values of \SNash{\COSTVEC}. 
Similarly, lowering the costs of losing agents, or increasing the
costs of winning agents, can sometimes increase \SNash{\COSTVEC}.
Furthermore, \SNash{\COSTVEC} is NP-hard to compute even if the set
system is the set of all $s$-$t$ paths \cite{chen:karlin:cheap-labor}.

Instead, Elkind et al.~\cite{elkind:goldberg:goldberg:frugality}
propose replacing the minimization by a maximization in the above
optimization problem. 
An important advantage of this optimization problem is that the
maximization objective ensures that for every $e \in S$, there is a
tight set $T$. Thus, the maximization objective removes the need for
the final constraint, and turns the optimization problem into an 
instance of Linear Programming, which can
be solved in many cases. We thus obtain the following definition
(which \cite{elkind:goldberg:goldberg:frugality} refers to as 
$\mathrm{NTU}_{\max}$). 
Intuitively, this definition captures the bids in the most
\emph{expensive} Nash Equilibrium of a first-price auction, with the
same caveat as before about the non-existence of equilibria.

\begin{LP}[eqn:nash-def]{Maximize}{\LNash{\COSTVEC} := \sum_{e\in S} \NBid{e}}
(\mbox{i})\;\; \NBid{e} \geq \Cost{e} & \mbox{ for all } e \\
(\mbox{ii})\; \NBid{e} = \Cost{e} & \mbox{ for all } e \notin S \\
(\mbox{iii}) \sum_{e \in S} \NBid{e} \leq \sum_{e \in T} \NBid{e} &
\mbox{ for all } T \in \Feasible
\end{LP}

\noindent
As stated above, a consequence of this maximization is that, for every
$e$ in the winning set, there is a tight set $T$ excluding $e$ that
prevents $e$ from bidding higher: 
\begin{equation}
\label{eq:tight}
\forall e \in S : \exists T \in \Feasible: e \notin T \text{ and } \sum_{e' \in S} \NBid{e'} = \sum_{e' \in T} \NBid{e'} \, . 
\end{equation}

We will refer to the bounds $\SNash{\COSTVEC}$ and $\LNash{\COSTVEC}$
as \emph{buyer-optimal} and \emph{buyer-pessimal}, respectively, throughout the
paper. Moreover, due to the advantages discussed above, we will use
the quantity \LNash{\COSTVEC} as a natural lower bound for this paper.  
Despite the preceding discussion, in order to emphasize the intuition
behind the bounds, we will refer to the \NBid{e} values of the LP~\eqref{eqn:nash-def}
as the \todef{Nash Equilibrium bids} of agents $e$,
or simply the \todef{Nash bids} of $e$.

Notice that \LNash{\COSTVEC} is defined
for all monopoly-free set systems.
We now formally define the frugality ratio of a mechanism \MECH for a
set system \SetSys, and the notion of a competitive mechanism.

\begin{definition}[Frugality Ratio, Competitive Mechanism]
Let \MECH be a truthful mechanism for the set
system \SetSys, and let
\MechPay[\MECH]{\COSTVEC} denote the total payments of
\MECH when the vector of actual costs is \COSTVEC.

\begin{enumerate}
\item The \todef{frugality ratio} of \MECH is
\begin{eqnarray*}
\MechRat{\MECH}
& = & \sup_{\COSTVEC} \frac{\MechPay[\MECH]{\COSTVEC}}{\LNash{\COSTVEC}}.
\end{eqnarray*}

\item The frugality ratio of the set system \SetSys is
\begin{eqnarray*}
\SysRat{\SetSys}
& = & \inf_{\MECH} \MechRat{\MECH},
\end{eqnarray*}
where the infimum is taken over all truthful mechanisms \MECH for
\SetSys.

\item A mechanism \MECH is \todef{\BLOWUP-competitive} for a class of set
  systems \SET{\SetSysPar{1}, \SetSysPar{2}, \ldots} if
  \MechRat{\MECH} is within a factor \BLOWUP of
  \SysRat{\SetSysPar{i}} for all $i$.
\end{enumerate}
\end{definition}

\begin{remark}
The frugality ratio of a mechanism is defined as instance-based.
The frugality ratio of a set system captures the inherent structural
complexity of that instance, which can be ``exploited'' with careful
worst-case choices of costs.

Competitiveness, on the other hand, is defined over a class of set
systems. If a single mechanism, such as the ones defined in this
paper, is competitive, it does as well on each set system in the
class as the best mechanism, which could possibly be tailored to this
specific instance. The nomenclature ``competitive'' is motivated by
the analogy with online algorithms.

The instance-based definition \cite{BeyondVCG,elkind:goldberg:goldberg:frugality}
allows us a more fine-grained distinction between mechanisms than
earlier work (e.g., \cite{archer:tardos:path-mechanisms,nisan:ronen:algorithmic}),
where a lower bound in terms of a worst case over all instances was used.
\end{remark}

As discussed above, the motivation for the LPs 
\eqref{eqn:cheapest-nash} and \eqref{eqn:nash-def}
was that they provide ``natural lower bounds'' on the payments of any
truthful mechanism. However, to the best of our knowledge, it was
previously unknown whether the solutions do in fact provide lower bounds. 
Indeed, it is easy to define mechanisms that achieve arbitrarily lower
payments for particular cost vectors, albeit at the cost of
significantly higher payments on other cost vectors. Here, we
establish that the objective value of the LP \eqref{eqn:nash-def}
indeed does give a lower bound in terms of the frugality ratio.
This resolves an open question from the preliminary version 
of this paper \cite{FlowMechanismsFOCS}.

\begin{proposition}
Let \SetSys be an arbitrary set system and \MECH a truthful and
individually rational mechanism on \SetSys.
Then, $\MechRat{\MECH} \geq 1$.
\end{proposition}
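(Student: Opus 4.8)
The plan is to show that for any truthful, individually rational mechanism $\MECH$, there is at least one cost vector $\COSTVEC$ for which $\MechPay[\MECH]{\COSTVEC} \geq \LNash{\COSTVEC}$; this immediately gives $\MechRat{\MECH} \geq 1$. The natural candidate is to feed the mechanism the cost vector and then look at the \emph{threshold bids} it induces. Specifically, let $S$ be the winning set that $\MECH$ selects on input $\COSTVEC$. For each $e \in S$, let $\NBid{e}$ be the threshold bid of $e$ (the supremum of winning bids for $e$ given the other agents' bids fixed at their costs), and for $e \notin S$ set $\NBid{e} = \Cost{e}$. By individual rationality and monotonicity, $\NBid{e} \geq \Cost{e}$ for all $e$, and the total payment of $\MECH$ is exactly $\sum_{e \in S} \NBid{e}$ (each winner is paid her threshold bid).

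The crux is then to argue that this threshold-bid vector $\NBIDVEC$ is \emph{feasible} for the LP~\eqref{eqn:nash-def}, i.e., that it satisfies constraint (iii): $\sum_{e \in S} \NBid{e} \leq \sum_{e \in T} \NBid{e}$ for every $T \in \Feasible$. The key observation is monotonicity of the selection rule together with the definition of the threshold bid. Fix any feasible $T$ and pick any $e \in S \setminus T$ (if $S \subseteq T$ the inequality is trivial since $\NBid{e} \geq \Cost{e} \geq 0$ on the extra elements, using nonnegativity of costs). Consider raising $e$'s bid slightly above its threshold $\NBid{e}$ while keeping all other bids at their costs: then $e$ loses, so the mechanism must select some feasible set not containing $e$. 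One wants to conclude that $S$ remains ``competitive'' enough that the selected set $T'$ satisfies $\sum_{e' \in T'} (\text{bids}) \geq \sum_{e' \in S \setminus \{e\}} \NBid{e'}$ plus... — here care is needed, because the mechanism need not be the VCG / cheapest-set rule, so its selected set on the perturbed instance is not obviously related to an arbitrary $T$. The honest way to handle this: since the LP is a \emph{maximization} and we only need $\MechRat{\MECH}\ge 1$, it suffices to exhibit \emph{some} cost vector where payment meets the LP value; equivalently, it suffices to show $\NBIDVEC$ as constructed is LP-feasible for \emph{that} instance, because then $\LNash{\COSTVEC} \geq \sum_{e\in S}\NBid{e} = \MechPay[\MECH]{\COSTVEC}$ would go the wrong way. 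So instead the argument should run: take an arbitrary instance, let $\NBIDVEC^\ast$ be the LP-optimal (buyer-pessimal) bid vector, feed the cost vector $\COSTVEC$ to $\MECH$, and show $\MechPay[\MECH]{\COSTVEC} \geq \LNash{\COSTVEC} = \sum_e \NBid[\ast]{e}$ by charging each winner of $\MECH$'s chosen set against the tight-set structure~\eqref{eq:tight} of the LP optimum.

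Concretely, the main step I expect to carry out is: let $S$ be $\MECH$'s winning set on $\COSTVEC$. Each $e \in S$ is paid at least its threshold bid, which is at least $\Cost{e}$. To get the full LP value, use the tight-set property~\eqref{eq:tight} of the buyer-pessimal solution $\NBIDVEC^\ast$: for the losing agents $\MECH$ picks (none, if $S$ is the LP's winning set; but in general $S$ may differ) one shows, via monotonicity applied to bid vectors interpolating between $\COSTVEC$ and $\NBIDVEC^\ast$, that the threshold bid of each $e\in S$ under $\COSTVEC$ is at least $\NBid[\ast]{e}$ whenever $e$ is in the LP's winning set, and otherwise $S$ must overlap a tight set enough to compensate. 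The main obstacle is precisely this comparison between an \emph{arbitrary} monotone selection rule and the \emph{specific} combinatorial structure (cheapest set, tight sets) baked into the LP; the resolution is that the LP constraints were reverse-engineered from exactly the monotonicity/threshold property, so raising any winner's bid to its threshold keeps $S$ feasible-and-competitive, and summing these inequalities over a carefully chosen order of the winners yields $\sum_{e\in S}(\text{threshold bid of }e) \geq \LNash{\COSTVEC}$. I would present this via a one-shot argument: run $\MECH$ on the LP-optimal vector $\NBIDVEC^\ast$ itself (viewed as a cost vector); then every winner's threshold bid is $\geq$ her bid $\NBid[\ast]{e}$, and the LP's own constraint (iii) shows $S$ is a cheapest feasible set under $\NBIDVEC^\ast$, so $\MECH$'s payment is $\geq \sum_{e\in S}\NBid[\ast]{e} \geq \LNash{\NBIDVEC^\ast}/(\text{something})$ — and because $\NBIDVEC^\ast$ satisfies (iii) with equality on tight sets, no slack is lost, giving $\MechPay[\MECH]{\NBIDVEC^\ast} \geq \LNash{\NBIDVEC^\ast}$, hence $\MechRat{\MECH}\ge 1$.
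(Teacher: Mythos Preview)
Your final paragraph lands on the right idea --- run \MECH on the LP-optimal vector $\NBIDVEC^\ast$ viewed as a cost vector --- and this is exactly the paper's approach. But the argument as you present it has two concrete problems. First, you write ``constraint (iii) shows $S$ is a cheapest feasible set under $\NBIDVEC^\ast$,'' where $S$ is \MECH's winning set. That is backwards: constraint (iii) tells you the \emph{LP's} winning set $S^\ast$ is cheapest under $\NBIDVEC^\ast$; it says nothing about \MECH's choice. What you actually need is constraint (iii) applied with $T=S$, giving $\sum_{e\in S}\NBid[\ast]{e}\geq\sum_{e\in S^\ast}\NBid[\ast]{e}=\LNash{\COSTVEC}$. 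Combined with individual rationality (each $e\in S$ is paid at least $\NBid[\ast]{e}$), this yields $\MechPay[\MECH]{\NBIDVEC^\ast}\geq\LNash{\COSTVEC}$.

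Second, and this is the real gap, to conclude $\MechRat{\MECH}\geq 1$ you must divide by $\LNash{\NBIDVEC^\ast}$, not $\LNash{\COSTVEC}$, since $\NBIDVEC^\ast$ is the cost vector you fed the mechanism. Your ``$/(\text{something})$'' placeholder and the vague appeal to tight sets show you have not closed this. The missing step is that $\LNash{\NBIDVEC^\ast}=\LNash{\COSTVEC}$: any vector feasible for the LP with cost vector $\NBIDVEC^\ast$ is also feasible for the LP with cost vector $\COSTVEC$ (the constraints only get looser), and $\NBIDVEC^\ast$ itself is feasible for both, so the optima coincide. Once you add this line, the proof is complete and identical to the paper's. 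Everything before your final paragraph --- the attempt to show threshold bids are LP-feasible, the charging against tight sets, the interpolation between $\COSTVEC$ and $\NBIDVEC^\ast$ --- is unnecessary and should be discarded.
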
  

\begin{proof}
Let \COSTVEC be an arbitrary cost vector.
Let $S \in \Feasible$ be the set minimizing \CostSum{S}, 
and \NBIDVEC the solution to the LP \eqref{eqn:nash-def}.
Let $S' \in \Feasible$ be the winning set for \MECH with cost vector \NBIDVEC. 
Because \MECH is truthful and individually rational, its payment 
$\MechPay[\MECH]{\NBIDVEC}$ is at least $\sum_{e \in S'} \NBid{e}$.
By the third constraint of the LP \eqref{eqn:nash-def},
$\sum_{e \in S'} \NBid{e} \geq \sum_{e \in S} \NBid{e}$.
Finally, by construction, we have that $\LNash{\COSTVEC} = \LNash{\NBIDVEC}$. 
Taken together, this implies that
\[ 
\MechPay[\MECH]{\NBIDVEC} \; \geq \; \sum_{e \in S'} \NBid{e} 
\; \geq \; \sum_{e \in S} \NBid{e} 
\; = \; \LNash{\NBIDVEC}.
\]
By definition of the frugality ratio, this implies that 
$\MechRat{\MECH} \geq 1$.
\end{proof}

\section{Vertex Cover Auctions} \label{sec:vertex-cover}

In this section, we describe and analyze a constant-competitive
mechanism for Vertex Cover auctions. We then show how to use it as the
basis for a methodology for designing frugal mechanisms for other set
systems. The graph is denoted by $G=(V,E)$,
with $n$ vertices. We write $u \sim v$ to denote that $(u,v) \in E$.

Our mechanmism is based on certain modifications to the
well-known Vickrey-Clarke-Groves (VCG) mechanism
\cite{vickrey:counterspeculation,clarke:multipart,groves:incentives}.
Recall that VCG always selects the cheapest feasible set $S$ with
respect to the submitted bids \Bid{e}, and pays each agent her
threshold bid.

\subsection{Weighting the bids with an eigenvector}

The important change to VCG in our mechanism is that each
  agent's bid is scaled by an agent-specific multiplier.
The multipliers capture ``how important'' an agent is for the solution, 
roughly in the sense of how many other agents can be omitted
by including this agent. They are computed as entries of the
  dominant eigenvector of a certain matrix \MAT.
As we will see, the computation of \MAT is NP-hard itself, so
the mechanism will in general not run in polynomial time unless P=NP.

As a first step, our mechanism removes all isolated vertices.
We assume that the resulting graph $G$ is connected.
Let \Unit{v} (for any vertex $v$) be the vector with 1 in coordinate
$v$ and 0 in all other coordinates.
We define $\Tot{v} = \LNash{\Unit{v}} \geq 1$ to be
the total ``Nash Equilibrium'' payment of the first-price auction
in the sense of the LP~\eqref{eqn:nash-def} 
if agent $v$ has cost 1 and all other agents have cost 0.
Notice that in this case, $v$ loses.
We prove in Section 3.1 that \Tot{v} is exactly the fractional clique
number of the graph induced by the neighbors of v, without v itself.
This implies
that unless ZPP=NP,
\Tot{v} cannot be approximated to within a factor
$O(n^{1-\epsilon})$ in polynomial time, for any $\epsilon > 0$.
Our inability to compute \Tot{v} is the chief obstacle to a
constant-competitive polynomial-time mechanism.

Let \ADJ be the adjacency matrix of $G$ (with diagonal 0).
Define $\DIAG = \diag (1/\Tot{1}, 1/\Tot{2}, \ldots, 1/\Tot{n})$, and
$\MAT = \DIAG \ADJ$.  That is, 
\begin{eqnarray*}
\MAT_{u,v} & = & 
\begin{cases} 1/\Tot{u} & \mbox{if } u \sim v \\ 0 & \mbox{if } u \not\sim v \, . \end{cases}
\end{eqnarray*}
If we define $\MAT' = \DIAG^{-1/2} \MAT \DIAG^{1/2} = \DIAG^{1/2} \ADJ \DIAG^{1/2}$, 
then $\MAT$ and $\MAT'$ have the same eigenvalues, and the eigenvectors of \MAT are of the form 
$\DIAG^{1/2} \cdot \vct{e}$, where \vct{e} is an eigenvector of $\MAT'$.  
Moreover, 
\begin{eqnarray*}
\MAT'_{u,v} & = & \begin{cases} 1/\sqrt{\Tot{u} \Tot{v}} & \mbox{if } u \sim v 
                             \\ 0 & \mbox{if } u \not\sim v \, , \end{cases}
\end{eqnarray*}
so $\MAT'$ is symmetric and has non-negative entries.
By the Perron-Frobenius Theorem, the eigenvalues of \MAT' and \MAT are real.  
Since we assumed $G$ to be connected, the 
dominant eigenvector of $\MAT'$ is unique and has positive entries,
and the same holds for \MAT.

Let \FR be the largest eigenvalue of \MAT, and 
\PV the corresponding eigenvector.
Notice that given \MAT as input, \FR and \PV can be computed
efficiently and without knowledge of the agents' bids or costs.

The mechanism \EVMECH (which stands for ``Eigenvector
  Mechanism'') is now as follows:
after all nodes $v$ submit their bids \Bid{v},
the algorithm sets $\Mc{v} = \Bid{v}/\Pv{v}$,
and computes a minimum cost vertex cover $S$ with respect to the
costs \Mc{v} (ties broken lexicographically).
$S$ is chosen as the winning set, and each agent in $S$ is paid her
threshold bid. 
Notice that the second step of the mechanism again requires the
solution to an NP-hard problem.

\EVMECH is truthful since the selection rule
is clearly monotone, and the payments are the threshold bids.
Thus, we can assume without loss of generality that bids and costs coincide.
In the following, we analyze the frugality ratio of \EVMECH, and show
that \EVMECH is competitive.

\begin{lemma} \label{lem:vertex-frugality}
\EVMECH has frugality ratio at most \FR.
\end{lemma}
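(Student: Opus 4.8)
The plan is to bound the total payment $\MechPay[\EVMECH]{\COSTVEC}$ of the mechanism in terms of the eigenvalue $\FR$ and the lower bound $\LNash{\COSTVEC}$, for an arbitrary cost vector $\COSTVEC$. Since $\EVMECH$ runs VCG on the scaled costs $\Mc{v} = \Cost{v}/\Pv{v}$, the winning set $S$ is a minimum-cost vertex cover with respect to $\MC$, and the threshold payment to each winning agent $v$ is $\Pv{v}$ times the threshold with respect to the scaled costs. Concretely, the threshold for $v \in S$ in the scaled problem is the smallest value $\tau_v$ such that raising $\Mc{v}$ to $\tau_v$ makes $v$ dispary from some minimum vertex cover; this is governed by cheaper alternative covers that exclude $v$, hence by $v$'s neighbors. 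I would first write a clean closed form: $\Pay{v} = \Pv{v} \cdot \big( \CostSum{S \setminus (S \setminus N(v))\text{-type set}} \big)/\ldots$ — more usefully, $\Pay{v}/\Pv{v}$ equals the extra scaled cost forced by deleting $v$ from the cover, which is at most $\sum_{u \sim v} \Mc{u} = \sum_{u \sim v} \Cost{u}/\Pv{u}$ (since replacing $v$ by all of its neighbors always yields a valid cover). Summing, $\MechPay[\EVMECH]{\COSTVEC} \le \sum_{v \in S} \Pv{v} \sum_{u \sim v} \Cost{u}/\Pv{u}$.

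Next I would bring in the eigenvector. Using $\MAT \PV = \FR \PV$, i.e. $\sum_{u \sim v} \Pv{u}/\Tot{v} = \FR \Pv{v}$, together with the structural fact (proved in Section 3.1) that $\Tot{v}$ is the fractional clique number of $G[N(v)]$, I want to relate the crude bound above to $\FR \cdot \LNash{\COSTVEC}$. The key observation is that $\LNash{\COSTVEC} = \sum_{v \in S} \NBid{v}$ where the Nash bids satisfy, for each $v \in S$, the tightness condition \eqref{eq:tight}: there is a feasible cover $T_v \not\ni v$ with $\sum_{e \in S} \NBid{e} = \sum_{e \in T_v} \NBid{e}$; in a vertex cover, any such $T_v$ must contain $N(v)$, so $\NBid{v} \le \sum_{u \in N(v) \setminus S} \NBid{u} = \sum_{u \in N(v)\setminus S}\Cost{u}$, and more generally the Nash bids are controlled by a fractional-clique argument in $G[N(v)]$ — this is exactly where $\Tot{v}$ enters as the right normalization. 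So my plan is to show that the worst case for the ratio $\MechPay[\EVMECH]{\COSTVEC}/\LNash{\COSTVEC}$ is achieved when only a single agent has nonzero cost, reducing the general bound to a statement about the vectors $\Unit{v}$.

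For that reduction, I would argue (this is the balancing step the introduction alludes to) that $\MechPay[\EVMECH]{\cdot}$ is linear on the cone of cost vectors whose minimum cover is a fixed $S$ and whose VCG tie-breaking is fixed, while $\LNash{\cdot}$ is concave (it is a max of linear functionals over the LP feasible region, which moves with $\COSTVEC$ only through the right-hand sides) — or, more robustly, I would directly decompose an arbitrary $\COSTVEC$ supported on the losing agents $\overline{S}$ as $\COSTVEC = \sum_{u \notin S} \Cost{u} \Unit{u}$ and show $\MechPay[\EVMECH]{\COSTVEC} \le \sum_u \Cost{u} \MechPay[\EVMECH]{\Unit{u}}$ while $\LNash{\COSTVEC} \ge$ (something comparable), so it suffices to check $\MechPay[\EVMECH]{\Unit{u}} \le \FR \cdot \LNash{\Unit{u}} = \FR \Tot{u}$. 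The final single-agent computation then amounts to: with cost $1$ on $u$ and $0$ elsewhere, the scaled cost is $1/\Pv{u}$ on $u$ and $0$ elsewhere, so $u$ loses (cost $0$ covers suffice), and the payments go to the forced-in neighbors; tracking the eigenvector scaling, the total payment is $\sum_{v} \Pv{v} \cdot [\text{threshold of }v] $, which by the eigenvalue equation telescopes to $\FR \Tot{u}$.

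I expect the main obstacle to be the reduction to single-nonzero-cost vectors: proving that the supremum defining the frugality ratio is attained (in the limit) at some $\Unit{u}$. The payment function $\MechPay[\EVMechanism]$... $\MechPay[\EVMECH]{\cdot}$ is only piecewise linear (the winning set $S$ and the identity of the tight constraints jump as costs vary), and $\LNash{\cdot}$ is likewise piecewise linear but not obviously concave, so a naive convexity argument does not immediately close. The fix I anticipate is to fix the combinatorial data — the winning cover $S$, and for each $v \in S$ a tight alternative cover / fractional clique witness in $G[N(v)]$ — which makes both numerator and denominator linear in $\COSTVEC$ on the corresponding polyhedral cell; then on each cell the ratio is maximized at an extreme ray, and one checks these extreme rays are (scalar multiples of) the $\Unit{u}$ with $u \in \overline{S}$. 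Carrying this case analysis through carefully, and verifying that the fractional-clique identity $\Tot{v} = \alpha^*(G[N(v)])$ makes the per-cell bound exactly $\FR$ rather than something larger, is the technical heart; everything else is the bookkeeping of VCG threshold payments under a diagonal rescaling.
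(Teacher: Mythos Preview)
Your overall strategy is the paper's: bound each winner's threshold via the ``replace $v$ by all of its neighbors'' move, reduce to cost vectors $\Cost{v}\,\Unit{v}$ with a single nonzero entry, and close those with the eigenvalue identity $\sum_{u \sim v} \Pv{u} = \FR\,\Tot{v}\,\Pv{v}$. Your per-winner payment bound $\Pay{v} \le \Pv{v}\sum_{u \sim v}\Cost{u}/\Pv{u}$ and your single-vector computation $\MechPay{\Unit{v}} = \FR\,\Tot{v} = \FR\,\LNash{\Unit{v}}$ are exactly the paper's.

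Where you diverge is the reduction step, which you flag as the main obstacle and propose to attack by fixing combinatorial data, decomposing into polyhedral cells, and chasing extreme rays. That machinery is unnecessary, and the extreme-ray claim (that extreme rays of each cell are multiples of some $\Unit{u}$) is not obviously true. The paper dispatches the reduction with two one-line inequalities, and the second is the piece you left as ``$\LNash{\COSTVEC} \ge$ (something comparable)'':
\begin{itemize}
\item $\displaystyle \MechPay{\COSTVEC} \le \sum_v \Cost{v}\,\MechPay{\Unit{v}}$, which follows by summing your per-winner bound over all $u$ (not just $u \in S$; the bound is nonnegative) and swapping the order of summation.
\item $\displaystyle \LNash{\COSTVEC} \ge \sum_v \Cost{v}\,\LNash{\Unit{v}} = \sum_v \Cost{v}\,\Tot{v}$: if $\NBIDVEC[v]$ is optimal for the LP~\eqref{eqn:nash-def} with cost vector $\Unit{v}$, then the nonnegative combination $\sum_v \Cost{v}\,\NBIDVEC[v]$ is feasible for the LP with cost vector $\COSTVEC$, by linearity of the constraints. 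Since~\eqref{eqn:nash-def} is a maximization, feasibility of this point with value $\sum_v \Cost{v}\,\Tot{v}$ gives the bound.
\end{itemize}
Dividing and using the mediant inequality,
\[
\frac{\MechPay{\COSTVEC}}{\LNash{\COSTVEC}}
\;\le\;
\frac{\sum_v \Cost{v}\,\MechPay{\Unit{v}}}{\sum_v \Cost{v}\,\Tot{v}}
\;\le\;
\max_v \frac{\MechPay{\Unit{v}}}{\Tot{v}}
\;=\; \FR .
\]
So the ``obstacle'' dissolves once you observe that $\LNash{\cdot}$ is superadditive over nonnegative combinations --- no convexity, no cell decomposition, no fractional-clique witnesses per vertex are needed. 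Also note the decomposition should run over \emph{all} vertices $v$, not only the losers; the payment bound applies uniformly.
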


\begin{emptyproof}
We start by considering only cost vectors with exactly one
non-zero entry, i.e., of the form $\COSTVEC = \Cost{v} \cdot \Unit{v}$.
For such a cost vector, the mechanism will choose a subset of
$V \setminus \SET{v}$ as the winning set, and pay each $u$ in that
subset her threshold bid. We calculate the threshold bids of all these
agents $u$.

First, consider any agent $u \sim v$.
If $u$ were to raise her bid above $(\Pv{u} / \Pv{v}) \cdot \Cost{v}$,
while all agents besides $u$ and $v$ continued to bid 0, 
then the set $V \setminus \SET{u}$ would be cheaper than \SET{u}
with respect to the new bid vector \MC.
Therefore, $u$ would not be part of the winning vertex cover. 
Thus $u$'s threshold payment is at most
$(\Pv{u} / \Pv{v}) \cdot \Cost{v}$. 
 
Next, consider any agent $u \not\sim v$.
Because $V \setminus \SET{u,v}$ is a vertex cover, $u$ cannot raise
her bid above zero without losing, so her threshold bid is $0$.
Hence, the total payment of \EVMECH is at most
$\MechPay{\COSTVEC} = (1 / \Pv{v}) \cdot \Cost{v} \cdot \sum_{u \sim v} \Pv{u}$.
On the other hand, by the definition of \Tot{v} and linearity of \LNASH,
we have that $\LNash{\COSTVEC} = \Cost{v} \Tot{v}$, so the
frugality ratio for cost vectors of the form $\Cost{v} \cdot \Unit{v}$ is
\[ 
\frac{(1/\Pv{v}) \cdot \Cost{v} \cdot \sum_{u \sim v} \Pv{u}}{\Cost{v} \Tot{v}}
\; = \;
\frac{1}{\Pv{v}} \cdot \sum_{u \sim v} \frac{1}{\Tot{v}} \cdot \Pv{u}
\; = \; \frac{1}{\Pv{v}} \cdot \FR \cdot \Pv{v}
\; = \; \FR,
\]
where the second equality followed because the vector \PV is an
eigenvector of \MAT with eigenvalue \FR. Thus, for any cost vector
with only one non-zero entry, the frugality ratio is at most \FR.

Now consider an arbitrary cost vector \COSTVEC, and write it as
$\COSTVEC = \sum_v \Cost{v} \Unit{v}$.
We claim that
$\MechPay{\COSTVEC} \leq \sum_v \Cost{v} \MechPay{\Unit{v}}$.
For consider any vertex $u \in S$ winning with cost vector \COSTVEC.  
If the cost vector were $\Cost{v} \Unit{v}$ instead, $u$'s payment 
would be $(\Pv{u}/\Pv{v}) \cdot \Cost{v}$ if $u \sim v$ and 0 otherwise.
On the other hand, when the cost vector is \COSTVEC, if $u$ bids
strictly more than $\sum_{v \sim u} \Pv{u}/\Pv{v} \cdot \Cost{v}$, then
$u$ cannot be in the winning set, as replacing $u$ with all its neighbors
would give a cheaper solution with respect to the costs \MC.
Thus, each node $u$ gets paid at most $\sum_{v \sim u} \Pv{u}/\Pv{v} \cdot \Cost{v}$
with cost vector \COSTVEC, and the total payment is at most
\[ 
\MechPay{\COSTVEC}
\; = \; \sum_u \sum_{v \sim u} \frac{\Pv{u}}{\Pv{v}} \cdot \Cost{v}
\; = \; \sum_v \Cost{v} \cdot \sum_{u \sim v} \frac{\Pv{u}}{\Pv{v}}
\; = \; \sum_v \Cost{v} \MechPay{\Unit{v}}.
\] 

On the other hand, we have that
\[ 
\LNash{\COSTVEC}
\; \geq \; \sum_v \Cost{v} \LNash{\Unit{v}}
\; = \; \sum_v \Cost{v} \Tot{v},
\] 
because of the following argument: for each $v$, let \NBIDVEC[v] be a
an optimal solution for the LP~\eqref{eqn:nash-def} with cost vector
\Unit{v}. Then, simply by linearity, the vector
$\NBIDVEC = \sum_v \Cost{v} \NBIDVEC[v]$ is feasible for~\eqref{eqn:nash-def} with cost vector \COSTVEC, and achieves
the sum of the payments. Thus, the optimal solution to~\eqref{eqn:nash-def} with cost vector \COSTVEC can have no smaller
total payments.

Combining the results of the previous two paragraphs, we have the
following bound on the frugality ratio:
\[ 
\max_{\COSTVEC} \frac{\MechPay{\COSTVEC}}{\LNash{\COSTVEC}}
\; \leq \; \max_{\COSTVEC} \frac{\sum_v \Cost{v}
  \MechPay{\Unit{v}}}{\sum_v \Cost{v} \Tot{v}}
\; \leq \; \max_v \frac{\MechPay{\Unit{v}}}{\Tot{v}}
\; \leq \; \FR. \qquad \qquad \QED
\] 
\end{emptyproof}

Next, we prove that no other mechanism can do asymptotically better.
\begin{lemma} \label{lem:vertex-lower}
Let \MECH be any truthful vertex cover mechanism on $G$.
Then, \MECH has frugality ratio at least $\frac{\FR}{2}$.
\end{lemma}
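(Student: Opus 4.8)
The goal is a lower bound of $\FR/2$ on the frugality ratio of *any* truthful vertex cover mechanism $\MECH$ on $G$. The plan is to exhibit, for any such $\MECH$, a cost vector $\COSTVEC$ on which $\MechPay[\MECH]{\COSTVEC}/\LNash{\COSTVEC} \geq \FR/2$. The natural candidate family is again the single-non-zero-cost vectors $\COSTVEC = \Cost{v}\cdot\Unit{v}$, for which we already computed $\LNash{\Cost{v}\Unit{v}} = \Cost{v}\Tot{v}$ in the proof of Lemma~\ref{lem:vertex-frugality}. So it suffices to show that for some vertex $v$, any truthful $\MECH$ must pay at least $\tfrac{1}{2}\FR\,\Cost{v}\Tot{v}$ when $v$ bids $\Cost{v}$ and everyone else bids $0$.

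First I would recall that truthfulness forces $\MECH$ to use a monotone selection rule and pay threshold bids, so it suffices to reason about threshold payments. With cost vector $\Cost{v}\Unit{v}$, vertex $v$ loses (a vertex cover avoiding $v$ exists since $G$ has no isolated vertices), and $\MECH$ selects some vertex cover $S \subseteq V\setminus\SET{v}$; by minimality of a vertex cover this forces $S \supseteq \Nodes{v}$, the neighborhood of $v$ (wait — it forces $S$ to contain *at least* a cover of the edges incident to $v$, i.e.\ all neighbors of $v$). For each neighbor $u \sim v$, consider raising $u$'s bid: the key pairwise-competition argument is that if $u$ bids below the "price" at which $v$ becomes attractive, $u$ must still be selected, so $u$'s threshold payment is bounded below by something proportional to $\Cost{v}$. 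Concretely, I expect the argument to show: summed over all neighbors of $v$ (or over a suitable subset that still covers all $v$-incident edges), the total threshold payment is at least $\tfrac12 \Cost{v}\sum_{u\sim v} z_u$ for appropriate weights $z_u$, using the structure of how $\MECH$ must break competition between $u$ and $v$.

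The crux — and the step I expect to be the main obstacle — is choosing *which* vertex $v$ and *which* weighting to use so that the resulting lower bound matches $\tfrac12\FR\Tot{v}$ rather than something weaker like $\tfrac12\Delta$ (the bound of Elkind et al.). The idea is to use the dominant eigenvector $\PV$ of $\MAT$ itself: set up the argument so that, averaged appropriately against the eigenvector coordinates $\Pv{v}$, the pairwise-competition lower bounds aggregate via the eigenvalue equation $\sum_{u\sim v}\Pv{u}/\Tot{v} = \FR\,\Pv{v}$, exactly as in the upper bound but now in reverse. The factor of $2$ loss should come from the fact that a single cost vector $\Cost{v}\Unit{v}$ only "sees" one side of each pairwise competition, whereas the eigenvector balances both endpoints of each edge; one recovers half of $\FR$ from the $v$-side contribution alone. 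I would formalize this by picking $v$ to maximize the relevant ratio and arguing that $\max_v (\text{threshold payment with } \Cost{v}\Unit{v})/(\Cost{v}\Tot{v}) \geq \tfrac12\FR$ because the $\Pv{v}$-weighted average of these ratios is at least $\tfrac12\FR$ — using that for each edge $(u,v)$ at least one of the two endpoints, when designated as the "costly" vertex, captures its share of the edge's contribution to $\FR$.

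A secondary technical point to handle carefully: the threshold payment of neighbor $u$ depends on what $\MECH$ does as $u$'s bid varies with all others fixed at $0$ except $v$ at $\Cost{v}$; one must verify that the monotone selection rule indeed keeps $u$ winning up to the claimed threshold, which requires that replacing $u$ by $v$ (plus possibly other vertices $u$ had been covering for) is not cheaper until $u$'s bid is large. I would phrase this via the alternative feasible set $V\setminus\SET{u}$ versus $V\setminus\SET{v}$-type comparisons, mirroring the upper-bound proof. If a direct edge-by-edge accounting is awkward, a fallback is to derive the bound from the $\LNash$ lower-bound machinery itself: since $\MechRat{\MECH}\geq 1$ (Proposition above) for the bound $\LNash$, and since on single-non-zero-cost instances $\LNash{\Cost{v}\Unit{v}}=\Cost{v}\Tot{v}$ equals the $\MAT$-structured quantity, one argues $\MechRat{\MECH}$ is at least the largest ratio of a "one-sided" Nash bound to $\Tot{v}$, which is $\FR/2$.
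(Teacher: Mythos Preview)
Your overall plan---use single-nonzero-cost vectors $\Pv{v}\Unit{v}$, leverage the eigenvector \PV, and accept a factor $1/2$ loss from a one-sidedness argument---is the right shape and matches the paper. But there is a genuine gap at the step where you try to lower-bound the threshold payment to a neighbor $u$ when only $v$ has nonzero cost.

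You write that you will ``consider raising $u$'s bid'' and that $u$'s threshold is ``bounded below by something proportional to $\Cost{v}$.'' For an \emph{arbitrary} truthful mechanism \MECH, this is not justified: with cost vector $\Pv{v}\Unit{v}$, the threshold of a given neighbor $u$ could be arbitrarily small (the mechanism might kick $u$ out as soon as $u$'s bid is positive, preferring instead some cover containing $v$). Your comparison of $V\setminus\SET{u}$ versus $V\setminus\SET{v}$ only works for mechanisms that select a minimum-cost cover with respect to some fixed scaling, not for general monotone rules. Likewise, your fallback via $\MechRat{\MECH}\geq 1$ only yields $1$, not $\FR/2$.

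The missing idea is a \emph{two-vertex probe}. For each edge $(u,v)$, run \MECH on the cost vector with $\Cost{u}=\Pv{u}$, $\Cost{v}=\Pv{v}$, and all other costs zero. Since the output is a vertex cover, at least one of $u,v$ wins; orient the edge toward that winner. If $(v,u)$ is the orientation, then by monotonicity, when $u$'s cost is lowered to $0$ (yielding the single-cost vector $\Pv{v}\Unit{v}$), $u$ still wins and its threshold is at least $\Pv{u}$. This is how you pin down, edge by edge, which endpoint ``captures its share.'' The factor $1/2$ then comes from a concrete combinatorial fact: in any vertex-weighted orientation of $G$, some vertex $v$ has weighted out-degree at least its weighted in-degree, hence at least $\tfrac12\sum_{u\sim v}\Pv{u}$. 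For that $v$, the payment on $\Pv{v}\Unit{v}$ is at least $\tfrac12\sum_{u\sim v}\Pv{u}=\tfrac12\Tot{v}\cdot\FR\Pv{v}$, while $\LNash{\Pv{v}\Unit{v}}=\Tot{v}\Pv{v}$. Your averaging intuition is close, but without the two-cost probe you have no per-edge lower bound to average.
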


\begin{proof}
We construct a directed graph $G'=(V,E')$ from $G$ by directing each edge $e$
of $G$ in at least one direction. Consider any edge $e=(u,v)$ of $G$.
Let \COSTVEC be the cost vector in which
$\Cost{u} = \Pv{u}, \Cost{v} = \Pv{v}$, and $\Cost{i} = 0$ for all
$i \neq u,v$.
When \MECH is run on the cost/bid vector \COSTVEC, at least one of $u$
and $v$ must be in the winning set $S$; otherwise, it would not be a
vertex cover.
If $u \in S$, then add the directed edge $(v,u)$ to $E'$. Similarly,
if $v \in S$, then add $(u,v)$ to $E'$. If both $u,v \in S$, then
add both directed edges. By doing this for all edges $e \in G$, we
eventually obtain a graph $G'$.

Now give each node $v$ a weight $\Pv{v}$.
Each node-weighted directed graph $(V,E')$
contains at least one node $v$ such that 
\begin{eqnarray*}
\sum_{u: (v,u) \in E'} \Pv{u} 
& \geq & \sum_{u: (u,v) \in E'} \Pv{u} \, , 
\end{eqnarray*}
(see, e.g., the proof of Lemma 11 in \cite{BeyondVCG}), and hence
\begin{eqnarray*}
\sum_{u: (v,u) \in E'} \Pv{u} 
& \geq & \frac{1}{2} \sum_{u: u \sim v} \Pv{u} \, .
\end{eqnarray*}
Fix any such node $v$ in $G'$ with respect to the weights \Pv{v}.

Now consider the cost vector \COSTVEC with $\Cost{v} = \Pv{v}$ and
$\Cost{i} = 0$ for all $i \neq v$. By monotonicity of the
selection rule of \MECH (which follows from the truthfulness of \MECH),
at least all nodes $u$ such that $(v,u) \in G'$ must be part of the
selected set $S$ of \MECH, and must be paid at least \Pv{u}.
Therefore, the total payment of \MECH is at least
\[ 
\sum_{u: (v,u) \in G'} \Pv{u}
\; \geq \; \half \sum_{u \sim v} \Pv{u}
\; = \; \half \Tot{v} \sum_{u \sim v} \frac{1}{\Tot{v}} \Pv{u}
\; = \; \half \Tot{v} \cdot \FR \Pv{v},
\] 
where the last equality followed from the fact that \PV is an
eigenvector of the matrix \MAT.

On the other hand, as in the proof of Lemma \ref{lem:vertex-frugality},
$\LNash{\COSTVEC} = \Tot{v} \Pv{v}$ for our cost vector \COSTVEC, so
the frugality ratio 
is at least $\half \FR$, when the cost vector is \COSTVEC.
\end{proof}

Combining Lemma \ref{lem:vertex-frugality} and Lemma
\ref{lem:vertex-lower}, we have proved the following theorem:

\begin{theorem}
\EVMECH is 2-competitive for Vertex Cover auctions.
\end{theorem}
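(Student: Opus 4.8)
The plan is to read off the theorem directly from the two preceding lemmas, since ``$2$-competitive'' unwinds, by definition, to the assertion that $\MechRat{\EVMECH} \le 2\,\SysRat{\SetSys}$ for every set system $\SetSys$ in the class. So I would fix an arbitrary Vertex Cover auction $\SetSys$, given by a graph $G$, and combine the two bounds: Lemma~\ref{lem:vertex-frugality} gives $\MechRat{\EVMECH} \le \FR$, where $\FR$ is the dominant eigenvalue of the matrix $\MAT$ built from $G$; and Lemma~\ref{lem:vertex-lower} gives $\MechRat{\MECH} \ge \FR/2$ for \emph{every} truthful mechanism $\MECH$ on $G$, so that taking the infimum over all such $\MECH$ yields $\SysRat{\SetSys} \ge \FR/2$. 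Chaining the two,
\[
\MechRat{\EVMECH} \;\le\; \FR \;=\; 2 \cdot \frac{\FR}{2} \;\le\; 2\,\SysRat{\SetSys},
\]
which is precisely the competitiveness bound for the instance $G$.

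Because $\FR$, and hence both sides of this chain, are defined per instance, and the argument invokes nothing about $G$ beyond the hypotheses of the two lemmas, the same inequality holds simultaneously for all graphs $G$ --- which is exactly what it means for the single mechanism $\EVMECH$ to be $2$-competitive over the whole class of Vertex Cover auctions. The only bookkeeping point I would spell out is the preprocessing built into $\EVMECH$ (removing isolated vertices and assuming connectedness): isolated vertices lie in no inclusion-minimal vertex cover, so discarding them changes neither the mechanism's outcome and payments nor $\LNash{\COSTVEC}$; and the selection rule, the threshold payments, and $\LNASH$ all decompose over connected components, so the per-component bounds of the two lemmas compose into the stated bound for an arbitrary graph.

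I do not anticipate a genuine obstacle: the statement is a repackaging of results already in hand, and the factor $2$ is simply the gap between the upper bound $\FR$ of Lemma~\ref{lem:vertex-frugality} and the lower bound $\FR/2$ of Lemma~\ref{lem:vertex-lower}. All of the real work sits inside those two lemmas --- the upper bound's reduction to cost vectors with a single non-zero entry, together with the Perron-eigenvector identity $\sum_{u \sim v} \Pv{u} = \FR\,\Tot{v}\,\Pv{v}$ coming from $\MAT \PV = \FR \PV$; and the lower bound's edge-orientation argument producing a vertex whose outgoing weight is at least half its total incident weight --- and both are established above.
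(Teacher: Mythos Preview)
Your proposal is correct and takes essentially the same approach as the paper, which simply states that combining Lemma~\ref{lem:vertex-frugality} and Lemma~\ref{lem:vertex-lower} proves the theorem. Your added bookkeeping about isolated vertices and connected components is a welcome clarification of a point the paper glosses over, but the core argument is identical.
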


\begin{remark}
The lower bound of $\half \FR$ on the frugality ratio of any
  mechanism can potentially be large. For instance, for a complete
  bipartite graph $K_{n,n}$, we have $\FR = \Theta(n)$. Thus, such
  large overpayments are inherent in truthful mechanisms in general.
  However, truthful mechanisms may be much more frugal on specific
  classes of graphs.
\end{remark}
 
\begin{remark}
\EVMECH in general does not run in polynomial time. For the
  final step, computing a minimum-cost vertex cover with respect to
  the scaled costs, we could use a monotone 2-approximation, as
  suggested by Elkind et al.~\cite{elkind:goldberg:goldberg:frugality}.
  The hardness of computing \MAT is more severe. However, notice that
  for specific classes of graphs, such as degree-bounded or
  triangle-free graphs, \MAT can be computed efficiently, giving us
  non-trivial polynomial-time mechanisms for Vertex Cover on those
  classes. This issue is discussed more in Section \ref{sec:conclusions}.
\end{remark}


\subsection{Nash Equilibria and the Fractional
 Clique Problem} \label{sec:fractional-clique}

In this section, we show that the Nash Equilibrium values \Tot{v} used
for scaling of the matrix actually have a natural interpretation.
To state the result, recall that the \todef{fractional clique
number} is the solution to the linear program
\begin{LP}[eqn:fractional-clique]{Maximize}{\sum_{u} \NBid{u}}
\sum_{u \in I} \NBid{u} \leq 1 & \mbox{ for all independent sets } I \\
\NBid{u} \geq 0 & \mbox{ for all } u\\
\end{LP}%
The \todef{fractional chromatic number} is the solution of the dual problem, 
where we have a variable $y_I$ for each independent set $I$ and a constraint 
$\sum_{I \ni u} y_I \geq 1$ for each vertex $u$, and we minimize $\sum_I y_I$.  
By LP duality, the fractional clique number and the fractional chromatic number are equal.

\begin{proposition}
\label{prop:clique}
Let $G_v$ be the subgraph induced by the neighborhood of $v$ but
without $v$ itself. Then, $\Tot{v}$ is exactly the fractional clique
number, and thus the fractional chromatic number, of $G_v$.
\end{proposition}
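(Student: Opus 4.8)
The plan is to unfold the definition $\Tot{v} = \LNash{\Unit{v}}$ by specializing the LP~\eqref{eqn:nash-def} to the cost vector $\Unit{v}$, eliminating the variables that are pinned to fixed values, and observing that what remains is \emph{literally} the fractional clique LP~\eqref{eqn:fractional-clique} on $G_v$; the statement about the fractional chromatic number then follows from the LP duality noted just before the proposition.

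First I would identify the winning set $S$. Under the costs $\Unit{v}$ (namely $\Cost{v}=1$ and all other costs $0$), every vertex cover that avoids $v$ has cost $0$, which is clearly minimal, so $v \notin S$; I would take $S = V \setminus \{v\}$ (one can check that the value of the LP is the same for any cheapest cover, so the lexicographic tie-break is immaterial here, but fixing this $S$ is cleanest). With this $S$, constraint~(ii) forces $\NBid{v}=1$, constraint~(i) gives $\NBid{u}\ge 0$ for every $u\ne v$, and the objective is $\sum_{u\ne v}\NBid{u}$.

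Next I would rewrite constraint~(iii). For any vertex cover $T$, splitting off the contribution of $v$ gives $\sum_{e\in T}\NBid{e} = \mathbf{1}[v\in T] + \sum_{u\in T\setminus\{v\}}\NBid{u}$, so the constraint $\sum_{u\ne v}\NBid{u}\le \sum_{e\in T}\NBid{e}$ is equivalent to $\sum_{u\in (V\setminus T)\setminus\{v\}}\NBid{u}\le \mathbf{1}[v\in T]$. I would then split into two cases. If $v\notin T$, the right-hand side is $0$; here $V\setminus T$ is an independent set containing $v$, hence disjoint from $N(v)$, so $(V\setminus T)\setminus\{v\}$ is an independent set avoiding $N(v)\cup\{v\}$, and as $T$ ranges over these covers the sets $(V\setminus T)\setminus\{v\}$ realize every singleton $\{u_0\}$ with $u_0\notin N(v)\cup\{v\}$ (take $T=V\setminus\{v,u_0\}$). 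Since the $\NBid{u}$ are nonnegative, this whole family of constraints is equivalent to "$\NBid{u}=0$ for every $u\notin N(v)\cup\{v\}$", after which every $v\notin T$ constraint is automatically satisfied. If $v\in T$, then $(V\setminus T)\setminus\{v\} = V\setminus T$ is an independent set of $G$ not containing $v$; using that $\NBid{}$ is now supported on $N(v)$, the constraint reads $\sum_{u\in (V\setminus T)\cap N(v)}\NBid{u}\le 1$, and the sets $(V\setminus T)\cap N(v)$ arising this way are exactly the independent sets of $G_v$ (given $J$ independent in $G_v$, take $T=V\setminus J$, which is a vertex cover containing $v$).

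Putting the two cases together, the LP defining $\Tot{v}$ becomes: maximize $\sum_{u\in N(v)}\NBid{u}$ subject to $\NBid{u}\ge 0$ and $\sum_{u\in J}\NBid{u}\le 1$ for every independent set $J$ of $G_v$ — which is precisely~\eqref{eqn:fractional-clique} for $G_v$ (the variables outside $N(v)$ being forced to $0$ and absent from the objective). Hence $\Tot{v}$ equals the fractional clique number of $G_v$, and therefore its fractional chromatic number by the LP duality already recorded. The only real content, as opposed to bookkeeping, is the exact translation of the constraint family~(iii): showing that the $v\notin T$ constraints amount to "$\NBid{}$ is supported on $N(v)$" and that the $v\in T$ constraints range over \emph{exactly} the independent sets of $G_v$ (no more and no fewer); I expect verifying this correspondence in both directions to be the main, though not difficult, obstacle. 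Everything else is routine.
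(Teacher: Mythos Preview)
Your proposal is correct and follows essentially the same approach as the paper: both arguments specialize the LP~\eqref{eqn:nash-def} to the cost vector $\Unit{v}$, use the vertex cover $V\setminus\{u,v\}$ to force $\NBid{u}=0$ for $u\notin N(v)$, and identify the remaining constraints with the independent-set constraints of $G_v$. Your presentation is slightly more systematic in splitting constraint~(iii) by whether $v\in T$, whereas the paper phrases it as a two-way feasibility correspondence, but the content is identical.
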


\begin{proof}
Let \NBIDVEC be any bid vector feasible for the LP~\eqref{eqn:nash-def}. 
First, for all vertices $u$ that do not share an
edge with $v$, we must have $\NBid{u} = 0$, because
$V \setminus \SET{u,v}$ is a feasible set.
So we can restrict our attention to $G_v$.  

For a set $I$, we write $\NBidSum{I} = \sum_{u \in I} \NBid{u}$ for
the total bids of the vertices in $I$.
If $I$ is an independent set in $G_v$, then $\NBidSum{I} \leq 1$.
The reason is that the set $V \setminus I$ is also feasible, and would
cost less than $V \setminus \SET{v}$ if \NBidSum{I} exceeded 1.
Thus, any feasible bid vector \NBIDVEC induces a feasible solution to
the LP~\eqref{eqn:fractional-clique}, of the same total cost.

Conversely, if we have a feasible solution to the LP~\eqref{eqn:fractional-clique}, 
we can extend it to a bid vector for
all agents by setting $\NBid{v} = 1$, and $\NBid{u} = 0$ for all
vertices $u$ outside $v$'s neighborhood.
We need to show that each feasible set $T$, i.e., each vertex cover,
has total bid $\NBidSum{T}$ at least as large as the set $V \setminus \SET{v}$.
If $T$ does not contain $v$, it must contain all of $v$'s neighbors;
it thus has the same bid as $V \setminus \SET{v}$ by definition.
Otherwise, because $V \setminus T$ is an independent set,
the feasibility of \NBIDVEC for
the LP~\eqref{eqn:fractional-clique} implies that
$\NBidSum{V \setminus T} \leq 1$.
Thus, $\NBidSum{T} \geq \NBidSum{V} -1 = \NBidSum{V \setminus \SET{v}}$, 
and the two LPs~\ref{eqn:nash-def} and
\eqref{eqn:fractional-clique} have the same value.
\end{proof}

Standard randomized rounding arguments (see, e.g., \cite{lund:yannakakis:minimization}) 
imply that for any graph, the chromatic number and the fractional
chromatic number are within a factor $O(\log n)$ of each other.
Therefore,
any approximation hardness results for Graph Coloring also apply to the
Fractional Clique Problem with at most a loss of logarithmic
factors. In particular, the result of Feige and
Kilian~\cite{feige:kilian:zero-knowledge} implies that
unless ZPP=NP, \Tot{v} cannot be approximated to within a factor
$O(n^{1-\epsilon})$ in polynomial time, for any $\epsilon > 0$.


\subsection{Composability and a General Design Approach}
\label{sec:strongly-monotone}

Vertex Cover auctions can be used naturally as a way to deal with
other types of set systems.  First, pre-process the set system by
removing a subset of agents, turning the remaining set system into a
Vertex Cover instance; then, run \EVMECH on that instance.

The important part is to choose the pre-processing rule to ensure
that the overall mechanism is both truthful and competitive.
A condition termed \todef{composability} in
\cite[Definition 5.2]{aggarwal:hartline:knapsack} is sufficient to
ensure truthfulness.   We show here that a comparison between lower
bounds is sufficient to show competitiveness.

\begin{definition}[Composability \cite{aggarwal:hartline:knapsack}]
Let \SELRULE be a selection rule mapping bid vectors to 
subsets of (remaining) agents. We say that \SELRULE is \todef{composable}
if $\SelRule{\BIDVEC} = T$ implies that
$\SelRule{\BidP{e}, \BIDVECEXCL{e}} = T$ for
any $e \in T$ and $\BidP{e} \leq \Bid{e}$.
In other words, not only can a winning agent not become a loser by
bidding lower; she cannot even change \emph{which} set containing
her wins.
\end{definition}

Formally, when we talk about ``removing'' a set of agents from a set
system, we are replacing \SetSys with $(T, \Feasible[T])$, where
$T = \SelRule{\BIDVEC}$, and
$\Feasible[T] := \Set{S \in \Feasible}{S \subseteq T}$.

\begin{theorem} \label{thm:reduction}
Let \SELRULE be a composable selection rule with the following
additional property:
For all monopoly-free set systems \SetSys in the class,
and all cost vectors \COSTVEC, writing
$\SetSysP := (\SelRule{\COSTVEC}, \Feasible[\SelRule{\COSTVEC}])$:
\begin{enumerate}
\item \SetSysP is a Vertex Cover instance, and
\item $\LNash[\SetSysP]{\COSTVEC}
\leq \BLOWUP \cdot \LNash[\SetSys]{\COSTVEC}$.
\end{enumerate}

Let the \todef{Remove-Cover Mechanism} \CONCMECH consist of running
\EVMECH on \SetSysP.
Then \CONCMECH is a truthful $2\BLOWUP$-competitive mechanism.
\end{theorem}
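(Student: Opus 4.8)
The plan is to establish the two required properties of \CONCMECH separately: truthfulness, and the $2\BLOWUP$ bound on the frugality ratio. For truthfulness, I would invoke the composability of \SELRULE together with the truthfulness of \EVMECH. The key point is that an agent $e$ who is retained by \SELRULE and then wins under \EVMECH faces a two-stage decision, but composability guarantees that lowering her bid keeps her in the \emph{same} retained set $T = \SelRule{\BIDVEC}$, so the set system $\SetSysP$ on which \EVMECH is run does not change as she lowers her bid. Hence her ability to win, and the threshold bid she is paid, is governed entirely by \EVMECH on a fixed instance, and monotonicity of \EVMECH's selection rule plus threshold payments gives monotonicity of \CONCMECH. (An agent not retained by \SELRULE is a loser and is paid $0$; I should check she cannot profitably deviate to get retained, which again follows from composability applied in the contrapositive — raising her bid cannot bring her into a winning retained set, since composability says winners stay winners as bids drop, equivalently losers stay losers as bids rise. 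This is exactly the content of \cite[Definition 5.2]{aggarwal:hartline:knapsack}, so I would cite it rather than reprove it.)

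For the frugality bound, fix an arbitrary cost vector \COSTVEC and let $\SetSysP = (\SelRule{\COSTVEC}, \Feasible[\SelRule{\COSTVEC}])$ be the pruned Vertex Cover instance. By Lemma~\ref{lem:vertex-frugality}, \EVMECH run on \SetSysP pays at most $\FR_{\SetSysP} \cdot \LNash[\SetSysP]{\COSTVEC}$, where $\FR_{\SetSysP}$ is the relevant eigenvalue for \SetSysP; but more usefully, \EVMECH is $2$-competitive, so its payment on \SetSysP is at most twice the frugality ratio \SysRat{\SetSysP} of that instance times \LNash[\SetSysP]{\COSTVEC}. Actually the cleanest route is: the payment of \CONCMECH on \COSTVEC equals the payment of \EVMECH on the instance \SetSysP, which by Lemma~\ref{lem:vertex-frugality} is at most $\FR_{\SetSysP}\cdot\LNash[\SetSysP]{\COSTVEC}$, and by hypothesis~(2) this is at most $\FR_{\SetSysP}\cdot\BLOWUP\cdot\LNash[\SetSys]{\COSTVEC}$. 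To control $\FR_{\SetSysP}$ I would instead pass through competitiveness directly: Lemma~\ref{lem:vertex-lower} shows every truthful mechanism on \SetSysP has frugality ratio at least $\FR_{\SetSysP}/2$, so $\FR_{\SetSysP} \le 2\,\SysRat{\SetSysP}$. Combining, the payment of \CONCMECH is at most $2\,\SysRat{\SetSysP}\cdot\BLOWUP\cdot\LNash[\SetSys]{\COSTVEC}$.

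The remaining step is to relate \SysRat{\SetSysP} back to \SysRat{\SetSys}. Here I would argue that any truthful mechanism \MECH' for the pruned instance \SetSysP lifts to a truthful mechanism for \SetSys with no worse frugality ratio: run \SELRULE on the bids to obtain \SetSysP, then run \MECH' on it — composability makes this composite truthful, exactly as in the truthfulness argument above, and its payments on cost vector \COSTVEC equal those of \MECH' while the denominator only grows (by hypothesis~(2), $\LNash[\SetSysP]{\COSTVEC}\le\BLOWUP\cdot\LNash[\SetSys]{\COSTVEC}$, so the ratio can only shrink when we pass to the full instance). Hence $\SysRat{\SetSys} \ge \SysRat{\SetSysP}/\BLOWUP$, i.e. $\SysRat{\SetSysP}\le \BLOWUP\cdot\SysRat{\SetSys}$. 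Wait — I need to be careful about the direction and about the fact that \SetSysP depends on \COSTVEC, so "the frugality ratio of \SetSysP" is not a single quantity across cost vectors; the honest way is to bound $\MechPay[\CONCMECH]{\COSTVEC}$ pointwise in \COSTVEC by $2\BLOWUP\cdot\SysRat{\SetSys}\cdot\LNash[\SetSys]{\COSTVEC}$, using that on the fixed instance \SetSysP the optimal truthful mechanism — which includes the lift-back of any mechanism designed for \SetSys restricted appropriately — cannot beat \SysRat{\SetSys}. The delicate part, and what I expect to be the main obstacle, is making this last comparison rigorous: \SELRULE's output varies with \COSTVEC, so one must argue uniformly that no truthful mechanism on the full system \SetSys can have frugality ratio below roughly $\FR_{\SetSysP}/(2\BLOWUP)$ for the particular worst-case \COSTVEC, which is precisely where Lemma~\ref{lem:vertex-lower} (a lower bound that holds for \emph{every} truthful mechanism on the Vertex Cover instance) combined with hypothesis~(2) (transferring that lower bound, up to the factor \BLOWUP, to the full system) must be knit together. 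Once that is done, taking the supremum over \COSTVEC yields $\MechRat{\CONCMECH}\le 2\BLOWUP\cdot\SysRat{\SetSys}$ for every set system in the class, which is the definition of $2\BLOWUP$-competitiveness.
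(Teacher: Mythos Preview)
Your truthfulness argument and the upper-bound chain
$\MechPay[\CONCMECH]{\COSTVEC}\le \FR[\SetSysP]\cdot\BLOWUP\cdot\LNash[\SetSys]{\COSTVEC}$
are correct and match the paper. The gap is in the lower-bound half of competitiveness, where you try to relate $\SysRat{\SetSysP}$ to $\SysRat{\SetSys}$.

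Your first attempt---lifting a mechanism $\MECH'$ designed for $\SetSysP$ to a mechanism on $\SetSys$ by ``run $\SELRULE$, then run $\MECH'$''---cannot work as stated: on a generic cost vector, $\SELRULE$ need not return the particular $\SetSysP$ that $\MECH'$ was designed for, so the lift is not even a well-defined mechanism on $\SetSys$. And even if it were, the inequality it would yield is $\SysRat{\SetSys}\le\BLOWUP\cdot\SysRat{\SetSysP}$, which is the wrong direction for what you need.

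Your self-correction gestures toward restricting mechanisms on $\SetSys$ down to $\SetSysP$, which \emph{is} the right direction, but then you try to invoke hypothesis~(2) to transfer the lower bound ``up to the factor $\BLOWUP$.'' This is where the argument goes astray. The paper's lower bound does not use hypothesis~(2) at all. Instead, it fixes a Vertex Cover sub-instance $\SetSysP^*$ (one maximizing $\FR[\SetSysP]$) and considers cost vectors with $\Cost{e}=\infty$ for $e\notin\Elts'$. For such vectors, any truthful mechanism $\MECH$ on $\SetSys$ is forced to select within $\SetSysP^*$, and crucially $\LNash[\SetSys]{\COSTVEC}=\LNash[\SetSysP^*]{\COSTVEC}$ \emph{exactly}, with no factor $\BLOWUP$. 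So $\MECH$ is literally a mechanism on the Vertex Cover instance $\SetSysP^*$, and Lemma~\ref{lem:vertex-lower} gives $\MechRat{\MECH}\ge\FR[\SetSysP^*]/2$, hence $\SysRat{\SetSys}\ge\FR[\SetSysP^*]/2$. Combined with the upper bound $\MechRat{\CONCMECH}\le\BLOWUP\cdot\sup_{\COSTVEC}\FR[\SetSysP_{\COSTVEC}]\le\BLOWUP\cdot\FR[\SetSysP^*]$, this gives $2\BLOWUP$-competitiveness. In short: hypothesis~(2) enters only in the upper bound; the lower bound comes from the infinity-cost embedding, and that is the missing ingredient in your proposal.
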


\begin{proof}
Truthfulness is proved in \cite[Lemma 5.3]{aggarwal:hartline:knapsack}.
The proof is short, and we include a version here for completeness.
Consider any agent $e$, and a bid vector \BIDVECEXCL{e} for agents
other than $e$. Because \SELRULE is composable, and thus also
monotone, there is a threshold bid $\tau_e$ such that $e$ wins iff her
bid is at most $\tau_e$.  Furthermore, whenever $\Bid{e} \leq \tau_e$,
the set \SelRule{\BIDVEC} is uniquely determined, and independent of
\Bid{e}. Thus, whenever $\Bid{e} \leq \tau_e$, \EVMECH will be run on
the same set system
$(\SelRule{\BIDVEC}, \Feasible[\SelRule{\BIDVEC}])$, and the selection
rule of \EVMECH on this set system is monotone. Hence, the overall
selection rule of \CONCMECH is monotone for $e$, 
implying directly that \CONCMECH is truthful.

The upper bound on the frugality ratio of \CONCMECH follows simply
from Lemma \ref{lem:vertex-frugality} and the assumption of the theorem:
\[ 
\MechPay[\CONCMECH]{\COSTVEC}
\; \leq \; \FR[\SetSysP] \cdot \LNash[\SetSysP]{\COSTVEC}
\; \leq \; \FR[\SetSysP] \cdot \BLOWUP \cdot \LNash[\SetSys]{\COSTVEC}.
\] 

To prove the lower bound, let \MECH be any truthful mechanism for
\SetSys, and let \SetSysP be the Vertex Cover set system maximizing
\FR[\SetSysP].
We consider cost vectors \COSTVEC with $\Cost{e} = \infty$ (or some
very large finite values) for $e \notin \Elts'$.
For such cost vectors, we can safely disregard all elements $e \notin
\Elts'$ altogether, as they will not affect the solutions to the LP~\eqref{eqn:nash-def}, 
nor be part of any solution selected by \MECH.

But then, \MECH is exactly a mechanism selecting a feasible solution
to the Vertex Cover instance \SetSysP.
By Lemma \ref{lem:vertex-lower}, \MECH thus has frugality ratio at
least $\FR[\SetSysP]/2$, completing the proof.
\end{proof}

\Omit{
We note that composability is crucial in the proof of Theorem
\ref{thm:reduction}.
If \SELRULE were merely monotone, then an agent $e$'s bid might
alter the set \SelRule{\BIDVEC} in such a way that the subsequent
threshold bid of $e$ in \EVMECH increases, and $e$ thus gets paid
more. While it may in principle be possible to still prove
monotonicity of \CONCMECH in some special cases, such a proof would
likely be much more involved than our proposed approach.
}

A simple general way to obtain a composable rule is to 
choose the set with the minimum total cost, from some subset of the feasible sets:

\begin{lemma} \label{lem:composable}
Let \SELRULE be any rule with consistent tie breaking
selecting a set $S$ minimizing \BidSum{S} over all sets $S$ with a
certain property $P$. 
Then \SELRULE is composable.
\end{lemma}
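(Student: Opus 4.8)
The plan is to verify the definition of composability directly. Suppose $\SelRule{\BIDVEC} = T$; by hypothesis this means $T$ minimizes $\BidSum{T}$ over the collection $\mathcal{S}_P$ of all sets possessing property $P$, with ties broken according to the prescribed consistent rule. Fix an agent $e \in T$ and a lower bid $\BidP{e} \leq \Bid{e}$, set $\delta = \Bid{e} - \BidP{e} \geq 0$ and $\BIDVECP = (\BidP{e}, \BIDVECEXCL{e})$, and show $\SelRule{\BIDVECP} = T$.

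The first observation I would make is that property $P$ is a property of \emph{sets}, not of bids, so the candidate family $\mathcal{S}_P$ is unchanged under $\BIDVECP$; only the total bids of these sets change. Lowering $e$'s bid decreases the total bid of exactly those candidates that contain $e$, each by precisely $\delta$. Hence $\BidSumP{S} = \BidSum{S} - \delta$ for every $S \in \mathcal{S}_P$ with $e \in S$ (in particular for $S = T$), and $\BidSumP{S} = \BidSum{S}$ for every $S \in \mathcal{S}_P$ with $e \notin S$.

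Next I would compare $T$ against an arbitrary $S \in \mathcal{S}_P$ under $\BIDVECP$. If $e \in S$, then $\BidSumP{T} = \BidSum{T} - \delta \leq \BidSum{S} - \delta = \BidSumP{S}$. If $e \notin S$, then $\BidSumP{T} = \BidSum{T} - \delta \leq \BidSum{T} \leq \BidSum{S} = \BidSumP{S}$. So $T$ still attains the minimum total bid, and the new minimum value equals $\BidSum{T} - \delta$. Consequently the set $M'$ of minimizers under $\BIDVECP$ is: when $\delta > 0$, exactly those previous minimizers that contain $e$; and when $\delta = 0$, exactly the old set of minimizers $M$. In either case $M' \subseteq M$ and $T \in M'$.

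Finally, since the tie-breaking rule is consistent — it picks a set out of any collection of optimal sets using a fixed criterion that does not depend on the bids (so it satisfies the standard property that its choice from a subcollection containing its original choice is unchanged) — and $T$ was its selection from $M$, it is also its selection from the smaller collection $M' \subseteq M$ with $T \in M'$. Therefore $\SelRule{\BIDVECP} = T$, which is precisely composability. There is no genuinely hard step; the only point demanding a little care is the tie-breaking, where one must observe that dropping $e$'s bid can only shrink the set of optimal candidates while keeping $T$ in it, so a consistent tie-break cannot switch away from $T$.
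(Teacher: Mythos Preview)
Your proof is correct and follows essentially the same approach as the paper's: lowering $e$'s bid by $\delta$ decreases the cost of the winning set $T$ by exactly $\delta$ while decreasing every other set's cost by at most $\delta$, so $T$ remains optimal and consistent tie-breaking keeps it selected. Your version is simply more explicit about the tie-breaking step (observing that the set of minimizers can only shrink while still containing $T$), which the paper leaves implicit.
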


\begin{proof}
Consider any agent $e$ who is part of the winning set $S$
with respect to \BIDVEC.
If $e$'s bid decreases by $\epsilon$, the cost of $S$
decreases by $\epsilon$, while the costs of all other sets decrease by
at most $\epsilon$. Thus, because ties are broken consistently, 
$S$ will still be selected.
\end{proof}

\Omit{
\subsubsection{Limitations of Composable Mechanisms}

As we will show for both flows and cuts, implementation of  \SELRULE may require solving an optimization problem.  Therefore in order to achieve a polynomial-time mechanism, we need to deal with two problems: a polynomial-time implementation of the \SELRULE and running the \EVMECH mechanism in polynomial time on the reduced instance. In this paper, we will achieve both for both flows and cuts. However, one natural question is whether approximation techniques could be applied to deal with NP-complete selection rules. The following theorem links the hardness of approximation of a selection rule to the existence of any approximate composable mechanism using that selection rule.

\begin{theorem}
Suppose \SELRULE is APX-hard. Let $\MECH_{\SELRULE}$ be the set of all remove-cover mechanisms with a selection rule that best approximates \SELRULE.  Then $\MECH_{\SELRULE} = \emptyset$ unless P=NP.
\end{theorem}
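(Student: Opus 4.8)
We read ``a selection rule that best approximates \SELRULE'' as a \emph{polynomial-time composable} selection rule whose approximation ratio for the underlying minimization problem \SELRULE (``find a minimum-cost set with property $P$'') is as small as that of \emph{any} polynomial-time algorithm for \SELRULE. Since \SELRULE is APX-hard this best ratio is some constant $\rho^\ast>1$, and $\MECH_{\SELRULE}$ is the set of polynomial-time remove-cover mechanisms \CONCMECH whose pruning rule attains it. The only way such a pruning rule enters a remove-cover mechanism is as the composable selection rule of Theorem~\ref{thm:reduction} --- composability, not mere monotonicity, is what the truthfulness of \CONCMECH rests on (cf.\ the discussion after that theorem) --- so it suffices to reason about composable rules.

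The plan is a reduction. First I would establish a rigidity statement: a polynomial-time composable rule $\SELRULE'$ that always returns a set with property $P$ and is a \emph{bounded}-ratio approximation to \SELRULE must in fact return a \emph{minimum}-cost property-$P$ set on every bid vector --- among composable rules there is nothing strictly between solving \SELRULE exactly and having unbounded ratio. Granting this, any $\CONCMECH\in\MECH_{\SELRULE}$ would carry a polynomial-time exact algorithm for \SELRULE, hence (APX-hardness implies NP-hardness of the exact optimization problem) a polynomial-time algorithm for an NP-hard problem, forcing $\mathrm{P}=\mathrm{NP}$; thus $\MECH_{\SELRULE}=\emptyset$ unless $\mathrm{P}=\mathrm{NP}$.

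For the rigidity statement I would argue by contradiction. Suppose $\SELRULE'$ is composable, polynomial-time, always outputs a property-$P$ set, has ratio at most $\alpha<\infty$, yet on some bid vector \BIDVEC it outputs $T=\SelRule{\BIDVEC}$ while some property-$P$ set $T^\ast$ has $\BidSum{T^\ast}<\BidSum{T}$. One may assume $T\setminus T^\ast\neq\emptyset$ (otherwise $T\subsetneq T^\ast$, contradicting $\BidSum{T^\ast}<\BidSum{T}$). Now exploit the rigidity of composability: lowering the bids of the agents in $T\cap T^\ast$ to $0$ leaves the output equal to $T$ (by repeated application of the composability axiom), and then scaling up the bids of the agents in $T\setminus T^\ast$ makes $T$ arbitrarily costlier than $T^\ast$, so past some scaling factor $\SELRULE'$ must abandon $T$ for a set $T'$ of cost at most $\alpha\,\BidSum{T^\ast}$. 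Tracking how the output changes as these bids are raised one coordinate at a time --- together with condition~1 of Theorem~\ref{thm:reduction}, that the pruned instance be a Vertex Cover instance --- one pins down a bid vector on which $\SELRULE'$ is forced to violate either composability or the ratio bound $\alpha$.

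The hard part is exactly this last step: upgrading ``composable $+$ bounded ratio'' to ``exact'' is the technical heart, and it is where the structural assumptions on \SELRULE (that it is a min-cost selection over an upward-closed family and that it outputs a Vertex Cover instance) must be used carefully --- a merely monotone rule, as the remark after Theorem~\ref{thm:reduction} notes, need not be this rigid. If the rigidity statement cannot be pushed through in full generality, the theorem still holds under the weaker reading of ``best approximates'' as ``computes \SELRULE exactly'' (ratio~$1$, which is best possible for any rule whatsoever): then $\MECH_{\SELRULE}\neq\emptyset$ immediately yields a polynomial-time exact algorithm for an NP-hard problem, and again $\mathrm{P}=\mathrm{NP}$.
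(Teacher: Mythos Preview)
Your plan and the paper's converge on the same idea---exploit composability to force any bounded-ratio composable selection rule to solve \SELRULE exactly, whence $\mathrm{P}=\mathrm{NP}$---and both stall at the same place. In fact the paper's own ``proof'' is explicitly unfinished: the authors write \emph{``Here is the point that I am not sure how to resolve\ldots What do you think?''}, and the entire subsection containing this theorem was wrapped in \verb|\Omit| and dropped from the published version. So there is no completed argument in the paper to compare against; you are attempting to fill a gap the authors themselves could not close.

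The concrete hole in your rigidity step is this. Composability says only that \emph{lowering} the bid of a \emph{winning} agent leaves the selected set unchanged. It places no constraint when you \emph{raise} a winner's bid: the rule is then free to switch to any output $T'$ with $e\notin T'$, and that is perfectly consistent with the definition (indeed it is exactly what the min-cost rules of Lemma~\ref{lem:composable} do). So when you scale up the bids of agents in $T\setminus T^\ast$ ``one coordinate at a time'', at every switch the rule may simply drop the element whose bid you just raised; you end up at some $T'$ with $b(T')\le\alpha\,b(T^\ast)$ and no composability violation in sight. Invoking condition~1 of Theorem~\ref{thm:reduction} (that the pruned instance be a Vertex Cover instance) does not obviously help, since that condition constrains the \emph{structure} of the output, not the trajectory of outputs under bid changes. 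The paper's attempt founders on a mirror-image difficulty: it lowers the bid of an agent in $T\setminus T^\ast$, which by composability keeps the output at $T$, but then cannot guarantee that the \emph{optimum} stays put---and even if it did, obtaining ratio strictly below the threshold on \emph{one} perturbed instance does not contradict a worst-case inapproximability bound.

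Your fallback reading (``best approximates'' $=$ ratio $1$) yields a one-line argument, but it trivializes the statement and is plainly not the intended meaning. As things stand, the theorem is an open question rather than a proved result, and your proposal---like the paper's sketch---does not close it.
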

\begin{proof}
The proof is by contradiction. Let $c > 1$ be such that there is no algorithm that approximate $\SELRULE$ better than a factor of $c$. Suppose $\MECH' \in \MECH_{\SELRULE}$ with $\SELRULE'$ selection rule. Let $\SetSys$ be the set system consisting of elements \Elts along with their bids represent a tight instance for $\MECH'$ i.e. the total bids of surviving set system $\SetSysP$ is $c \sum_{e \in \Elts_{\SELRULE} } \Bid{e}$ where $\Elts_{\SELRULE}$ are the agents selected by \SELRULE. Let $e$ be and agent with positive bid in $\Elts' \setminus \Elts_{\SELRULE}$. (Note that such an agent exists.) Then, by composability we have: $\SelRule{\BidP{e}, \BIDVECEXCL{e}}' = \Elts'$ where $\BidP{e} = \Bid{e} - \epsilon$ for some $\epsilon > 0$. It is sufficient to show that $\SelRule{\BidP{e}, \BIDVECEXCL{e}} = \Elts_{\SELRULE}$, then we will derive a contradiction since $\SELRULE'$ approximates \Elts for a factor better than $c$.

In order to establish the desired inequality, we will make sure that the \SELRULE optimization problem has a unique optimal solution on \SetSys. \emph{Here is the point that I am not sure how to resolve. If we take an edge in an arbitrary OPT which is not part of our $\Elts'$ and decrease the bid by some small $\epsilon'$, it is not clear whether we will preserve the tightness of our instance. Another approach is to increase the bid of all but one optimal solution. But, in that case, we may end up raising the bid of some agent in $\Elts'$ as well. This again may ruin the tightness of our instance. It looks like if for all $e \in \Elts'$ $e$ is part of some optimal solution we are stuck. What do you think?}

\end{proof}

}


\section{A Mechanism for Flows} \label{sec:flows}
We apply the methodology of Theorem
\ref{thm:reduction} to design a mechanism \FLOWMECH for purchasing $k$
edge-disjoint $s$-$t$ paths. We are given a (directed) graph
$G=(V,E)$, source $s$, sink $t$, and target number $k$.
As discussed earlier, the agents are edges of $G$.
We assume that $G$ is monopoly-free, which is equivalent to saying
that the minimum $s$-$t$ cut contains at least $k+1$ edges.
For convenience, we will refer to a set of $k$ edge-disjoint $s$-$t$ paths
simply as a $k$-flow, and omit $s$ and $t$.

To specify \FLOWMECH, all we need to do is describe a composable
pre-processing rule \SELRULE. Our rule is simple:
Choose $(k+1)$ edge-disjoint $s$-$t$ paths, of minimum total bid with
respect to \BIDVEC, breaking ties lexicographically.
We call such a subgraph a \todef{$(k+1)$-flow}, where it is implicit
that we are only interested in integer flows, and identify the flow
with its edge set.
Call the minimum-cost $(k+1)$-flow \MING.
(In Section \ref{sec:strongly-monotone}, we generically referred to
this set system as \SetSysP.)

\begin{theorem} \label{thm:flowmech}
The mechanism \FLOWMECH is truthful and $2(k+1)$-competitive and runs
in polynomial time.
\end{theorem}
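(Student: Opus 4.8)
The plan is to apply Theorem~\ref{thm:reduction} to the selection rule \SELRULE defined above (pick a minimum-bid set of $k+1$ edge-disjoint $s$-$t$ paths, ties broken lexicographically), with blow-up factor $\BLOWUP=k+1$. This requires checking three things: (i) \SELRULE is composable; (ii) for every monopoly-free $k$-flow instance \SetSys and every cost vector, the pruned set system \SetSysP on \MING (whose feasible sets are the $k$-flows contained in \MING) is a Vertex Cover instance; and (iii) $\LNash[\SetSysP]{\COSTVEC}\le(k+1)\cdot\LNash[\SetSys]{\COSTVEC}$ for all \COSTVEC. Given (i)--(iii), Theorem~\ref{thm:reduction} immediately yields that \FLOWMECH is truthful and $2(k+1)$-competitive. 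The polynomial running time is a separate matter, handled at the end.

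\textbf{Composability and the Vertex Cover structure.} Item (i) is immediate from Lemma~\ref{lem:composable} with the property $P$ being ``contains $k+1$ edge-disjoint $s$-$t$ paths.'' For item (ii), note first that lexicographic tie-breaking makes \MING edge-minimal among subgraphs carrying a $(k+1)$-flow; hence by Menger's theorem every edge of \MING lies in an $s$-$t$ cut of size exactly $k+1$, and since \MING itself carries a $(k+1)$-flow its minimum $s$-$t$ cut has size exactly $k+1$. Define the conflict graph $\tilde G$ whose vertices are the edges of \MING, with an edge joining $e$ and $f$ whenever some minimum $s$-$t$ cut of \MING contains both. I claim $Y\subseteq E(\MING)$ carries a $k$-flow if and only if $Y$ is a vertex cover of $\tilde G$, which realizes \SetSysP as the Vertex Cover instance on $\tilde G$ (note $\tilde G$ has no isolated vertices; if $\tilde G$ is disconnected one runs \EVMECH on each component, which does not affect the analysis). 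The forward direction is easy: if $e,f\notin Y$ lie together in a minimum cut $C$, then $|C\cap Y|\le k-1$, so $Y$ cannot contain $k$ edge-disjoint paths. The reverse direction amounts to showing that checking the minimum cuts suffices, i.e. that if $E(\MING)\setminus Y$ meets every minimum cut in at most one edge, then it meets every cut $C$ in at most $|C|-k$ edges; this uses the structure of minimally $(k+1)$-connected graphs — in particular that each minimum cut meets each of the $k+1$ paths in exactly one edge — together with an uncrossing argument on the $s$-$t$ cut function.

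\textbf{The lower-bound comparison (the crux).} Item (iii) is the technical heart. First, a minimum-cost $(k+1)$-flow contains a minimum-cost $k$-flow (by the successive-shortest-paths structure of minimum-cost flows), so with consistent tie-breaking the winning set $S^\ast$ is the same in the LP~\eqref{eqn:nash-def} for \SetSys and for \SetSysP, and non-winning edges bid their costs in both. Thus the only difference between the two LPs is that \SetSys imposes constraint~(iii) of~\eqref{eqn:nash-def} for \emph{every} $k$-flow $T$ in $G$, whereas \SetSysP imposes it only for $k$-flows $T\subseteq\MING$; in particular $\LNash[\SetSys]{\COSTVEC}\le\LNash[\SetSysP]{\COSTVEC}$ holds trivially, and the content is the reverse bound up to the factor $k+1$. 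Let \NBIDVEC be an optimal solution of the \SetSysP-LP, supported on \MING, and fix a decomposition of \MING into $k+1$ edge-disjoint $s$-$t$ paths $P_1,\dots,P_{k+1}$, so each $\MING\setminus P_i$ is a $k$-flow in \MING (and in $G$). The idea is to spread the bid mass $\NBid{e}$ of the winning edges $e\in S^\ast$ across these $k+1$ paths — using the tight sets guaranteed by~\eqref{eq:tight} for \SetSysP — to obtain a bid vector that is feasible for the \SetSys-LP and has objective at least $\tfrac1{k+1}\sum_{e\in S^\ast}\NBid{e}$. The delicate point, and the main obstacle, is certifying feasibility against the constraints coming from $k$-flows $T$ that use edges outside \MING: here one exploits that \MING is a \emph{minimum-cost} $(k+1)$-flow, which forces any such $T$ to be expensive enough not to be violated by the rescaled vector. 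This is exactly the step flagged as requiring significant technical effort, and it is where the loss of $k+1$ genuinely occurs.

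\textbf{Polynomial running time.} The pre-processing step (a minimum-cost $(k+1)$-flow) is a standard minimum-cost flow computation. For the call to \EVMECH on \MING: the minimum $s$-$t$ cuts of \MING form a distributive lattice and can be enumerated in polynomial time, so $\tilde G$ is explicitly computable; moreover, on instances arising this way the scaling factors $\Tot{v}$ all coincide — one shows $\Tot{v}=k$ for every edge $v$ of \MING by a fractional-coloring argument on the neighborhood of $v$ in $\tilde G$ — so \MAT is a scalar multiple of the adjacency matrix of $\tilde G$, and its dominant eigenvalue \FR and eigenvector \PV are obtained by linear algebra. Finally, by the equivalence established above, a minimum-cost vertex cover of $\tilde G$ with respect to the scaled bids is exactly a minimum-cost $k$-flow inside \MING, again computable by a minimum-cost flow algorithm, and the threshold payments reduce similarly to flow/min-cut computations. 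Hence \FLOWMECH runs in polynomial time.
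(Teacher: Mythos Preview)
Your overall framework---apply Theorem~\ref{thm:reduction} after verifying composability, the Vertex Cover structure of \SetSysP, and the $(k+1)$-factor lower-bound comparison---matches the paper exactly, and your treatments of composability and polynomial running time are essentially the paper's (the paper also shows $\Tot{u_e}=k$ for every $e\in\MING$, sandwiching between clique number and chromatic number). But the two substantive steps are not actually proved.

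For item (iii), which you rightly call the crux, you propose to start from an optimal \NBIDVEC for the \MING-LP and ``spread the bid mass'' of winning edges across the $k+1$ paths to produce a feasible vector for the $G$-LP with at least $\tfrac1{k+1}$ of the objective. You never specify the spreading, and it is unclear how any such rescaling can simultaneously preserve $\NBid'_e\ge\Cost{e}$ on $S$ and the constraints $\NBidSum{S}\le\NBidSum{T}$ for $k$-flows $T$ reaching outside \MING; invoking ``\MING is minimum-cost'' is not by itself enough, because edges outside \MING bid their costs, not scaled quantities. The paper goes in the \emph{opposite} direction: it starts from an optimal \NBIDVEC for the $G$-LP, forms the graph \TG of $S$ together with all tight $k$-flows, and proves the key structural Lemma~\ref{lem:equal-path-lengths} that every $s$-$t$ path in \TG has the same \NBIDVEC-length (via an Eulerian multigraph built from forward copies of the tight flows and reversed copies of $S$, in which every cycle is shown to have weight zero). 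Combined with the explicit formula $\LNash[\MING]{\COSTVEC}=k\cdot\MPC[\MING]{\COSTVEC}$ (Corollary~\ref{cor:lnash-snash}) and the cost-minimality of \MING against the $(k+1)$-flow inside \TG, the factor $k+1$ then drops out in two lines. None of this machinery appears in your sketch. For item (ii), the forward direction is fine, but labeling the converse ``an uncrossing argument on the $s$-$t$ cut function'' is not a proof; the paper's argument is different and concrete: Proposition~\ref{prop:cut-reachability} shows two edges of \MING share a minimum cut iff neither is reachable from the other, so the edges of a violating cut that lie outside $Y$ form a tournament under reachability, hence all lie on a single $s$-$t$ path $P$, and then $\MING\setminus P$ is a $k$-flow avoiding the cut---a contradiction.
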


We show this theorem in three parts. First, we establish that the
$k$-flow problem on \MING indeed forms a Vertex Cover instance
(Lemma \ref{lem:vertex-cover-flow-equivalence}).
By far the most difficult step is showing that the lower bound
satisfies
$\LNash[\MING]{\COSTVEC} \leq (k+1) \cdot \LNash[G]{\COSTVEC}$
for all cost vectors \COSTVEC (Lemma \ref{lem:nash-comparison}).
The composability of \SELRULE follows from Lemma \ref{lem:composable}.
Together, these three facts allow us to apply Theorem
\ref{thm:reduction}, and conclude that \FLOWMECH is a truthful
$2(k+1)$-competitive mechanism.
Finally, we verify that \FLOWMECH runs in polynomial time
(Lemma \ref{lem:flow-polynomial}).

\begin{lemma} \label{lem:vertex-cover-flow-equivalence}
The instance \SetSysP whose feasible sets are all $k$-flows on 
\MING is a Vertex Cover set system.
\end{lemma}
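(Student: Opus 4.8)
I want to show that the set system $\SetSysP$ whose ground set is the edge set of the minimum-cost $(k+1)$-flow $\MING$, and whose feasible sets are exactly the $k$-flows contained in $\MING$, is (isomorphic to) a Vertex Cover instance. The natural way to do this is to exhibit a graph $\MINGP$ on which vertex covers correspond bijectively to $k$-flows in $\MING$, in a cost-preserving way. The key structural fact I would use is that $\MING$ consists of exactly $k+1$ edge-disjoint $s$-$t$ paths, and that it is \emph{minimal} with this property: no proper subgraph of $\MING$ contains $k+1$ edge-disjoint $s$-$t$ paths. Minimality should force $\MING$ to have a very rigid ``series-parallel-like'' decomposition — in particular, I expect that the edges of $\MING$ group into ``bundles'' such that removing any one edge from $\MING$ destroys exactly one of the $k+1$ paths, and the feasible $k$-flows are exactly the subsets obtained by deleting, from each bundle, edges that kill at most one path-worth of capacity.

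**Key steps, in order.** (1) First establish the clean characterization of $\MING$: because it is a minimum-cost (hence minimal) $(k+1)$-flow, by Menger/flow-decomposition it decomposes into $k+1$ edge-disjoint $s$-$t$ paths, and every edge of $\MING$ lies on at least one such path; moreover every edge is ``critical'' in the sense that $\MING$ minus that edge has max-flow exactly $k$. (2) Show that a subgraph $S \subseteq \MING$ is a feasible $k$-flow iff it is \emph{not} the case that $S$ misses a ``cut of size $1$ in $\MING$ relative to the $(k+1)$-flow'' — more precisely, $S$ contains a $k$-flow iff for every $s$-$t$ cut $C$ of $\MING$, $S$ contains at least $|C|-1$ edges of $C$ (since $|C|\ge k+1$ and we only need $k$ units across every cut). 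This is just Menger's theorem applied inside $\MING$. (3) Translate this into a Vertex Cover condition: the complement $\overline S = \MING \setminus S$ is ``admissible'' iff for every $s$-$t$ cut $C$ of $\MING$, $|\overline S \cap C| \le 1$. Equivalently, $\overline S$ is an independent set in the graph $\MINGP$ whose vertices are the edges of $\MING$ and whose edges join two $\MING$-edges whenever they lie in a common $s$-$t$ cut of $\MING$ (equivalently, whenever no $s$-$t$ path of $\MING$ uses both, i.e. they are ``parallel''). Then $S$ is a feasible $k$-flow iff $S$ is a vertex cover of $\MINGP$, which is exactly what we want; and the correspondence is identity on the ground set, hence cost-preserving and compatible with the selection rule.

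**The main obstacle.** The delicate point is step (2)–(3): proving that ``$S$ contains a $k$-flow'' is equivalent to the purely local condition ``$\overline S$ hits every $s$-$t$ cut of $\MING$ at most once,'' and in particular that this local condition is captured by a \emph{graph} (2SAT / pairwise) rather than needing higher-arity constraints. The $\ge$ direction (if $S$ contains a $k$-flow then it meets every cut in $\ge k$ edges, so $\overline S$ meets every cut in $\le 1$ edge) is immediate from the fact that a $k$-flow saturates every cut. The hard direction is the converse: if $\overline S$ meets every $s$-$t$ cut of $\MING$ in at most one edge, then $S$ must contain $k$ edge-disjoint paths. Here I would invoke the structure of $\MING$ as a \emph{minimal} $(k+1)$-flow: minimality should imply that ``$\overline S$ meets every cut at most once'' forces $\overline S$ to lie within a single one of the $k+1$ paths in some flow decomposition, or more generally within a sub-structure whose removal costs only one unit of connectivity, so $S$ still carries a $k$-flow by Menger. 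If the minimality argument does not immediately yield the pairwise (2SAT) structure, the fallback is to show directly, using the max-flow/min-cut LP or a direct exchange argument on flow decompositions of $\MING$, that the family of admissible complements $\overline S$ is exactly the family of independent sets of the ``parallelism'' graph $\MINGP$ defined above — this is the identification I expect the authors also make, and it is the crux of the lemma.
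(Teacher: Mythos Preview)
Your overall plan matches the paper's: build a graph \MINGP on the edges of \MING, with $e \sim e'$ exactly when they are ``parallel,'' and show that vertex covers of \MINGP coincide with $k$-flows in \MING. However, step (2)--(3) contains a real error, and the hard direction is missing its key idea.

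The error is the claimed equivalence ``$S$ contains a $k$-flow iff $|\overline S \cap C| \le 1$ for every $s$-$t$ cut $C$ of \MING.'' Menger gives $|S \cap C| \ge k$, i.e.\ $|\overline S \cap C| \le |C|-k$, which is \emph{not} $\le 1$ for non-minimum cuts --- and minimal $(k+1)$-flows do have non-minimum cuts. For instance, take $\MING$ to be the three edge-disjoint paths $s\!\to\!a\!\to\!t$, $s\!\to\!b\!\to\!t$, and $s\!\to\!c\!\to\!a\!\to\!d\!\to\!t$ (so $k=2$). The cut with $s$-side $\{s,a\}$ has four crossing edges $(s,b),(s,c),(a,t),(a,d)$, and the complement of the valid $2$-flow $P_1\cup P_2$ meets it in two edges $(s,c),(a,d)$. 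So your condition fails for a genuine $k$-flow. Relatedly, your two descriptions of adjacency in \MINGP are not equivalent: ``share some $s$-$t$ cut'' and ``no $s$-$t$ path uses both'' disagree on $(s,c),(a,d)$ above. The correct relation is ``share a common \emph{minimum} cut,'' and its equivalence with ``no path from one to the other'' is itself a (short but separate) lemma in the paper.

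For the hard direction (vertex cover $\Rightarrow$ $k$-flow), your intuition that an independent $\overline S$ can be threaded onto a single $s$-$t$ path is correct, but ``minimality should imply this'' is not an argument. The missing mechanism, which the paper supplies, is: once adjacency is defined via minimum cuts, independence of $\overline S$ means every pair $e,e' \in \overline S$ has a directed path from one to the other in \MING; hence the reachability relation on $\overline S$ is a tournament, which has a Hamiltonian path, so a single $s$-$t$ path $P$ passes through all of $\overline S$. Then $\MING \setminus P$ is a $k$-flow contained in $S$. Without this tournament/Hamiltonian step the proof does not close; it is the crux you flagged but did not resolve.
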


\begin{proof}
Recall that \MING is a minimal $(k+1)$-flow, a fact that we exploit
repeatedly in this proof.
The edges of \MING are the vertices in the Vertex Cover instance.
For clarity, consider explicitly the graph \VCG, which contains a
vertex $u_e$ for each edge $e \in \MING$, and an edge between
$u_e, u_{e'}$ if and only if removing $e$ would create a monopoly for
$e'$. This is the case iff there exists at least one minimum $s$-$t$
cut in \MING containing both $e$ and $e'$; 
in particular, \VCG is symmetric.
The construction is illustrated for the case $k=2$
in Figure \ref{fig:flow-to-cover}.
An alternative characterization of the edges in \VCG is given in
Proposition \ref{prop:cut-reachability} below, and will be used as part of this proof.
For any set of edges $E'$ in \MING, let \Nodes{E'} be the
corresponding set of nodes in \VCG.
Thus, for any minimum $s$-$t$ cut $E'$, the set \Nodes{E'} forms a
clique in \VCG.

\begin{figure}[htb]
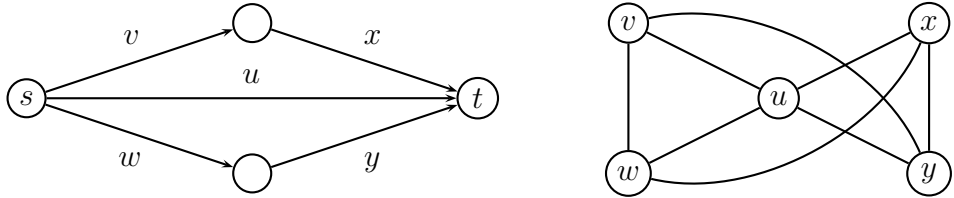

\begin{center}
\psset{unit=1.0cm} \pspicture(-3.5,-1)(9.5,1.5)
\cnodeput(-3,0){s}{$s$} \cnodeput(3,0){t}{$t$}
\cnodeput(0,1){a}{\phantom{$s$}} \cnodeput(0,-1){b}{\phantom{$s$}}

\ncline{->}{s}{t}\Aput{$u$} 
\ncline{->}{s}{a}\Aput{$v$} 
\ncline{->}{s}{b}\Bput{$w$} 
\ncline{->}{a}{t}\Aput{$x$}
\ncline{->}{b}{t}\Bput{$y$}

\cnodeput(7,0){u}{$u$} \cnodeput(5,1){v}{$v$} \cnodeput(5,-1){w}{$w$}
\cnodeput(9,1){x}{$x$} \cnodeput(9,-1){y}{$y$}

\ncline{-}{u}{v} \ncline{-}{u}{w} \ncline{-}{u}{x} \ncline{-}{u}{y} 
\ncline{-}{v}{w} \ncline{-}{x}{y} 
\ncarc[arcangleA=40,arcangleB=40]{-}{v}{y}
\ncarc[arcangleA=-40,arcangleB=-40]{-}{w}{x}
\endpspicture
\caption{A minimally 3-connected graph (left) and the resulting vertex cover
  instance for $k=2$ (right).} \label{fig:flow-to-cover}
\end{center}
\end{figure}

If $E'$ is a $k$-flow, then for any pair of edges $e,e'$ that lie on
a minimum $s$-$t$ cut, $E'$ must contain at least one of $e,e'$.
Thus, \Nodes{E'} is a vertex cover of \VCG.

Conversely, let $E'$ be a set of edges in \MING such that \Nodes{E'}
is a vertex cover of \VCG. We will show that for every $s$-$t$ cut
$F \subseteq E$, at least $k$ edges of $E'$ cross $F$, i.e.,
$\SetCard{E' \cap F} \geq k$. This will imply that $E'$ is a $k$-flow.
Assume for contradiction that $\SetCard{E' \cap F} < k$.
Because \Nodes{E'} is a vertex cover of \VCG, there can be no edge
between any pair of vertices in \Nodes{F \setminus E'} in \VCG. By definition,
this means that for any pair $e, e' \in F \setminus E'$, there is no
minimum $s$-$t$ cut containing both $e$ and $e'$.
By Proposition \ref{prop:cut-reachability} below, this is equivalent
to saying that for each pair $e, e' \in F \setminus E'$, the graph
\MING contains a path from $e$ to $e'$ or a path from $e'$ to $e$.

Consider a directed graph whose vertices are the edges
$F \setminus E'$, with an edge from $e$ to $e'$ whenever \MING
contains a path from $e$ to $e'$. By the above argument, this graph is
a tournament graph, and thus contains a Hamiltonian path. That is,
there is an ordering $e_1, \ldots, e_\ell$ of the edges in $F \setminus E'$
such that each $e_{i+1}$ is reachable from $e_i$ in \MING. By adding a
path from $s$ to $e_1$ and from $e_\ell$ to $t$, we thus obtain an
$s$-$t$ path $P$ containing all edges in $F \setminus E'$.
The graph $\MING \setminus P$ is a $k$-flow, so the set
$E' \cap F$, having size less than $k$, cannot be an $s$-$t$ cut in
$\MING \setminus P$. Let $P'$ be an $s$-$t$ path in $\MING \setminus P$
disjoint from $E' \cap F$. By construction, $P'$ is also disjoint from
$F \setminus E'$. Thus, we have found an $s$-$t$ path $P'$ in \MING
disjoint from $F$, contradicting the assumption that $F$ is an $s$-$t$
cut.
\end{proof}

\begin{proposition} \label{prop:cut-reachability}
Let \MING be a graph consisting of $k+1$ edge-disjoint $s$-$t$ paths,
and let $e=(u,v),e'=(u',v')$ be two edges of \MING.
Then, there is a minimum $s$-$t$ cut containing both $e$ and $e'$ if
and only if there is no path from $v$ to $u'$ and no path from $v'$ to
$u$.
\end{proposition}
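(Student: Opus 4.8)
Proposition \ref{prop:cut-reachability} characterizes when two edges $e=(u,v)$, $e'=(u',v')$ of a $(k{+}1)$-flow \MING lie on a common minimum $s$-$t$ cut: exactly when there is no path from $v$ to $u'$ and no path from $v'$ to $u$.

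Let me sketch a proof.
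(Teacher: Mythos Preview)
Your proposal contains no proof at all --- it merely restates the proposition and announces an intention to sketch an argument. There is nothing here to evaluate: neither direction of the biconditional is addressed, and no idea is offered for how to proceed.

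For reference, the paper's argument runs as follows. For the forward direction, if (say) there is a path from $v$ to $u'$, one constructs an $s$-$t$ path $P$ through both $e$ and $e'$ by concatenating an $s$-$u$ path, the edge $e$, the $v$-$u'$ path, the edge $e'$, and a $v'$-$t$ path. Then $\MING \setminus P$ is still a $k$-flow, so any $s$-$t$ cut in \MING must use at least $k$ edges from $\MING \setminus P$; a cut of size $k+1$ therefore cannot contain both $e$ and $e'$. For the converse, if no minimum cut contains both edges, then $\MING \setminus \{e,e'\}$ still has min-cut value $k$ and hence contains $k$ edge-disjoint $s$-$t$ paths; peeling these off from \MING leaves a single $s$-$t$ path that must contain both $e$ and $e'$, forcing one to be reachable from the other. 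You need to supply at least the skeleton of these two arguments (or some alternative) before there is anything to assess.
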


\begin{proof}
Assume that there is a path from $v$ to $u'$. Let $P$ be a
concatenation of an $s$-$v$ path using $e$, the path from $v$ to $u'$, and a
path from $u'$ to $t$ using $e'$.
Then, $\MING \setminus P$ is a $k$-flow, and therefore has $k$
edge-disjoint paths. Any $s$-$t$ cut in \MING must thus contain at
least $k$ edges from $\MING \setminus P$, and no $s$-$t$ cut with
fewer than $k+2$ edges can contain both $e$ and $e'$.

Conversely, if there is no minimum cut containing both $e,e'$, then
every minimum cut in $\MING \setminus \SET{e,e'}$ must contain
$k$ edges. Thus, $\MING \setminus \SET{e,e'}$ contains $k$
edge-disjoint $s$-$t$ paths. Removing these paths from \MING leaves us
with a 1-flow, i.e., one $s$-$t$ path. By construction, this path must
contain $e$ and $e'$; thus, at least one is reachable from the other.
\end{proof}

\Omit{
\begin{lemma} \label{lem:flow-truthful}
The selection rule is composable.
\end{lemma}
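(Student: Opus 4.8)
The plan is to obtain this as an immediate consequence of Lemma \ref{lem:composable}. Recall that \SELRULE selects, among all subsets of $E$ that form the edge set of $k+1$ edge-disjoint $s$-$t$ paths, one of minimum total bid $\BidSum{\cdot}$, with ties broken lexicographically. The key observation is that the family of candidate sets here --- the edge sets of $(k+1)$-flows --- is a purely combinatorial object determined by $G$, $s$, $t$, and $k$ alone, and does not depend on the bid vector \BIDVEC. Thus \SELRULE is exactly a rule of the form treated in Lemma \ref{lem:composable}, where the property $P$ is ``being the edge set of a $(k+1)$-flow'' and the (lexicographic) tie-breaking is consistent. Invoking Lemma \ref{lem:composable} then yields composability directly.

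For completeness I would also spell out the short argument in this setting. Suppose $\SelRule{\BIDVEC} = \MING$, fix an edge $e \in \MING$, and let $\BidP{e} \le \Bid{e}$, writing $\epsilon := \Bid{e} - \BidP{e} \ge 0$. Under the modified bids, the total bid of \MING decreases by exactly $\epsilon$, because $e \in \MING$; every other $(k+1)$-flow $F$ decreases in total bid by $\epsilon$ if $e \in F$ and by $0$ otherwise, hence by at most $\epsilon$ in all cases. Consequently \MING is still of minimum total bid. Furthermore, any flow $F$ that is tied with \MING under the new bids must already have satisfied $e \in F$ and $\BidSum{F} = \BidSum{\MING}$ under the old bids; since lexicographic tie-breaking picked \MING over such an $F$ before, it does so again now. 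Hence $\SelRule{\BidP{e}, \BIDVECEXCL{e}} = \MING$, which is precisely composability.

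There is no real obstacle to overcome: the statement is a specialization of the already-established Lemma \ref{lem:composable}. The only point meriting attention is the tie-breaking bookkeeping in the previous paragraph --- verifying that lowering $\Bid{e}$ cannot let some other $(k+1)$-flow overtake \MING through a newly created tie --- and this is handled automatically by the consistency of the lexicographic rule.
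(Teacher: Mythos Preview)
Your proposal is correct and matches the paper's approach exactly: the paper simply invokes Lemma~\ref{lem:composable} (noting that \SELRULE selects a minimum-bid set with a fixed combinatorial property), and the ``for completeness'' argument you spell out is precisely the content of the paper's earlier, now-omitted direct proof of this lemma.
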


\begin{proof}
Consider any agent $e$ who is part of the cheapest $(k+1)$-flow
\MING with respect to \COSTVEC.
If $e$'s cost decreases by $\epsilon$, the cost of \MING
decreases by $\epsilon$, while the cost of all other sets decrease by
at most $\epsilon$. Thus, so long as ties are broken consistently, the
same set \MING will be selected\footnote{Note that this proof in fact
  shows that any pre-processing rule simply selecting the cheapest set
with a certain property $\mathcal{P}$ is composable.}.
\end{proof}
}

\begin{lemma} \label{lem:nash-comparison}
$\LNash[\MING]{\COSTVEC} \leq (k+1) \cdot \LNash[G]{\COSTVEC}$
for all 
\COSTVEC.
\end{lemma}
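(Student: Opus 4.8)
My plan is to relate an optimal Nash bid vector for the pruned instance $\MING$ back to a feasible Nash bid vector for the original graph $G$, losing at most a factor of $k+1$ in the objective. Let $\NBIDVEC$ be an optimal solution to the LP~\eqref{eqn:nash-def} for the set system whose feasible sets are all $k$-flows on $\MING$, so $\LNash[\MING]{\COSTVEC} = \sum_{e \in \MING} \NBid{e}$, where the ``winning set'' in this instance is all of $\MING$ (since $\MING$ is a minimal $(k+1)$-flow, every proper subset that is still a $(k+1)$-flow would contradict minimality, so the cheapest $k$-flow-feasible set with respect to costs is indeed $\MING$ itself). Each edge $e \in \MING$ has a tight set $T_e$ by~\eqref{eq:tight}: a $k$-flow avoiding $e$ with $\NBidSum{T_e} = \NBidSum{\MING}$; equivalently, $\MING \setminus T_e$ is an $s$-$t$ cut (in $\MING$) of total bid at least\dots actually of bid exactly equal to the ``surplus'' $\NBidSum{\MING} - \NBidSum{T_e} = 0$ is wrong — rather, since $T_e$ is a $k$-flow and $\MING$ is a $(k+1)$-flow, $\MING \setminus T_e$ contains an $s$-$t$ cut, and tightness says the bids on $\MING \setminus T_e$ sum to\dots let me restate: tightness gives $\NBidSum{T_e} = \NBidSum{\MING}$, so $\NBidSum{\MING \setminus T_e} = 0$ only if $T_e \subseteq \MING$. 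The useful structural fact is that for each $e$ there is a minimum-cardinality $s$-$t$ cut of $\MING$ through $e$ on which the $\NBid{}$ values are ``charged.''

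The core construction: I want to build a bid vector $\BIDVECP$ on $G$ that is feasible for~\eqref{eqn:nash-def} on $G$ and whose objective (sum over the cheapest $k$-flow $S$ of $G$) is at least $\frac{1}{k+1}\LNash[\MING]{\COSTVEC}$. The natural idea is to keep $\BidP{e} = \NBid{e}$ for $e \in S$ (the true cheapest $k$-flow of $G$, which lies inside $\MING$ since $\MING$ contains a cheapest $k+1$ flow hence a cheapest $k$-flow) and $\BidP{e} = \Cost{e}$ for $e \notin S$; then verify constraint (iii), i.e., that $S$ remains cheapest among all $k$-flows of $G$ under $\BIDVECP$. For $k$-flows $T$ contained in $\MING$ this follows from optimality/feasibility of $\NBIDVEC$ on $\MING$. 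The delicate case is a $k$-flow $T$ of $G$ using edges outside $\MING$: here I would argue via an exchange/augmentation argument that any such $T$ can be ``rerouted'' inside $\MING$ without increasing cost, using that $\MING$ is a minimum-cost $(k+1)$-flow (so edges outside $\MING$ are at least as expensive as the cheapest way to route through $\MING$). To get the $(k+1)$ factor, I expect we cannot simply keep $\NBid{e}$ — a single edge's Nash bid on $\MING$ can be as large as roughly the cost of a whole $s$-$t$ path, and summed over $|\MING|$ edges this could exceed $(k+1)\LNash[G]{\COSTVEC}$ only if we're careless. So the actual argument is more likely: decompose $\MING$ into $k+1$ edge-disjoint $s$-$t$ paths $P_1,\dots,P_{k+1}$; for each $j$, the $k$ remaining paths form a $k$-flow of $G$, hence a feasible set, so the Nash-LP constraint on $G$ bounds $\sum_{e \in S} \NBid{e}$-type quantities; averaging the bound $\NBidSum{\MING} = \sum_j \NBidSum{P_j \cap \MING}$ appropriately and comparing each $\NBidSum{\MING \setminus P_j}$ (a $k$-flow) against $\LNash[G]{\COSTVEC}$ via a tight set in $G$ yields $\NBidSum{\MING} \le (k+1)\LNash[G]{\COSTVEC}$.

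The main obstacle I anticipate is precisely this last comparison: showing that $\NBidSum{\MING \setminus P_j}$, which is a valid bid vector for a $k$-flow $T_j := \MING \setminus P_j$ in $G$, can be charged against $\LNash[G]{\COSTVEC}$. This is not immediate because $T_j$ need not be the \emph{cheapest} $k$-flow in $G$, and the buyer-pessimal LP only directly controls the cheapest set. I would handle this by using the tight-set property~\eqref{eq:tight} of the optimal $\MING$-solution together with a careful identification: each minimum $s$-$t$ cut of $\MING$ that witnesses tightness of some edge must ``survive'' as (or contain) an $s$-$t$ cut in $G$, and the three LP constraints of~\eqref{eqn:nash-def} on $G$ applied to these cuts bound the relevant partial sums. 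The combinatorial heart — matching up minimum cuts of the minimally $(k+1)$-connected graph $\MING$ with the flow/cut structure of $G$, using Proposition~\ref{prop:cut-reachability} and the Hamiltonian-path trick from Lemma~\ref{lem:vertex-cover-flow-equivalence} — is where the real work lies, and is the step the paper flags as ``requiring significant technical effort.'' I would therefore set up the path decomposition and tight-set bookkeeping first, isolate the single inequality $\NBidSum{\MING \setminus P_j} \le \LNash[G]{\COSTVEC}$ (or a summed variant with an averaging factor) as a lemma, and attack that with the cut/reachability machinery, leaving the outer averaging as routine.
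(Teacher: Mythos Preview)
Your proposal has several genuine gaps, and the overall direction is inverted relative to what the paper does.

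\textbf{Wrong direction of the LP.} You start with an optimal Nash solution \NBIDVEC on the pruned instance \MING and try to construct a feasible solution on $G$. The paper does the reverse: it takes an optimal solution \NBIDVEC to the LP~\eqref{eqn:nash-def} \emph{on $G$}, builds the auxiliary graph \TG consisting of $S$ together with all edges in some tight feasible set, and then proves the key structural fact (Lemma~\ref{lem:equal-path-lengths}) that every $s$-$t$ path in \TG has the same total bid. From this, $\LNash[G]{\COSTVEC} = \NBidSum{S} \ge k\cdot\CostSum{P}$ for the most expensive path $P$ in \TG, and minimality of \MING among $(k+1)$-flows gives $\MPC[\MING]{\COSTVEC}\le(k+1)\CostSum{P}$. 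Combined with Corollary~\ref{cor:lnash-snash}, $\LNash[\MING]{\COSTVEC}=k\cdot\MPC[\MING]{\COSTVEC}$, this yields the bound. None of this equal-path-length machinery appears in your plan, and it is where the actual work sits.

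\textbf{Specific errors.} First, the winning set for the \MING instance is the cheapest $k$-flow \emph{within} \MING, not all of \MING; your identity $\LNash[\MING]{\COSTVEC}=\sum_{e\in\MING}\NBid{e}$ is wrong. Second, the claim that the cheapest $k$-flow of $G$ lies inside the cheapest $(k+1)$-flow \MING is false in general: augmenting a min-cost $k$-flow to a min-cost $(k+1)$-flow can cancel edges in the residual graph, so the former need not be a subgraph of the latter. Third, your averaging idea is circular: by Corollary~\ref{cor:lnash-snash}, for \emph{any} decomposition of \MING into $k+1$ paths $P_1,\dots,P_{k+1}$ one has $\NBidSum{\MING\setminus P_j}=k\cdot\MPC[\MING]{\COSTVEC}=\LNash[\MING]{\COSTVEC}$ for every $j$, so the ``sub-lemma'' $\NBidSum{\MING\setminus P_j}\le\LNash[G]{\COSTVEC}$ you isolate is strictly stronger than the statement you are trying to prove, and averaging over $j$ gains nothing. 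Finally, the ``rerouting/exchange'' step for $k$-flows $T$ of $G$ that leave \MING is asserted but not argued; verifying constraint~(iii) for all such $T$ under a bid vector transplanted from \MING is exactly the hard global comparison the equal-path-length lemma is designed to avoid.
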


\begin{proof}
Let $S$ be the cheapest $k$-flow in $G$ with respect to the costs
\COSTVEC.
Because \MING is a $(k+1)$-flow, Corollary \ref{cor:lnash-snash} below
implies that $\LNash[\MING]{\COSTVEC} = k \cdot \MPC[\MING]{\COSTVEC}$,
where \MPC[\MING]{\COSTVEC} is the cost of the most expensive $s$-$t$
path in \MING.

Let \NBIDVEC be a solution to the LP~\eqref{eqn:nash-def} 
with cost vector \COSTVEC on the graph $G$.
Define a graph \TG consisting of all edges in $S$, as well as all
edges that are in at least one tight feasible set $T$ (i.e., a set $T$
for which the constraint (iii) is tight,
meaning that $\NBidSum{T} = \NBidSum{S}$).

By definition, \TG contains at least $k+1$ edge-disjoint $s$-$t$ paths.
Lemma \ref{lem:equal-path-lengths} below (the key step) implies that all
$s$-$t$ paths in \TG have the same total bid with respect to \NBIDVEC.
Let $P$ be an $s$-$t$ path in \TG of maximum total cost \CostSum{P}.
By individual rationality (Constraint (i) in the
LP~\eqref{eqn:nash-def}), we have that
$\NBidSum{P} \geq \CostSum{P}$, and hence
$\NBidSum{P'} \geq \CostSum{P}$ for all $s$-$t$ paths $P'$.
In particular,
$\LNash[G]{\COSTVEC} = \NBidSum{S} \geq k \cdot \CostSum{P}$.
Since \MING is a minimum-cost $(k+1)$-flow, 
and because $\TG$ contains at least $k+1$
edge-disjoint $s$-$t$ paths, we have 
$\MPC[\MING]{\COSTVEC} \le (k+1) \,\CostSum{P}$.  
Thus,
\[ 
\LNash[G]{\COSTVEC} 
\; \geq \; k \cdot \CostSum{P}
\; \geq \; \frac{k}{k+1} \cdot \MPC[\MING]{\COSTVEC}
\; = \; \frac{1}{k+1} \cdot \LNash[\MING]{\COSTVEC},
\] 
which completes the proof.
\end{proof}

\begin{lemma} \label{lem:equal-path-lengths}
Let \NBIDVEC be a solution to the LP~\eqref{eqn:nash-def}, 
and \TG as in the proof of Lemma
\ref{lem:nash-comparison}. Let $v$ be an arbitrary node in $G$, and
$P_1,P_2$ two paths from $v$ to $t$.
Then, $\NBidSum{P_1} = \NBidSum{P_2}$.
\end{lemma}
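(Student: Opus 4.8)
The plan is to prove this by induction on the number of edges in $P_1 \cup P_2$, or more cleanly, to argue directly about how $P_1$ and $P_2$ diverge and reconverge. First I would reduce to the case where $P_1$ and $P_2$ are internally disjoint: if $P_1$ and $P_2$ share an intermediate node $w$, split both paths at the last node before their first shared vertex (walking from $v$) and recurse on the two sub-pairs; the bid sums add up, so it suffices to handle the case where $P_1, P_2$ meet only at $v$ and $t$. Throughout, the key structural fact I would use is the definition of \TG: every edge of \TG lies on $S$ or on some tight feasible set $T$ with $\NBidSum{T} = \NBidSum{S}$, and $S$ is a $k$-flow while each such $T$ is a $k$-flow (a feasible set for the $k$-flow system on $G$).

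The heart of the argument is an exchange/swap step. Suppose $P_1$ and $P_2$ are internally disjoint $v$-$t$ paths in \TG with $\NBidSum{P_1} > \NBidSum{P_2}$. Each edge of $P_1$ sits in $S$ or in some tight set $T$; I would take a tight set $T$ (or $S$ itself) containing a chosen edge $e$ of $P_1$, and consider the $k$ edge-disjoint $s$-$t$ paths witnessing that $T$ is a $k$-flow. By a flow-rerouting argument — since $v$ reaches $t$ via both $P_1$ and $P_2$ — I can exhibit another $k$-flow $T'$ obtained from $T$ (or $S$) by rerouting the path through $v$ from using (part of) $P_1$ to using (part of) $P_2$, so that $\NBidSum{T'} = \NBidSum{T} - (\text{cost of the }P_1\text{ segment}) + (\text{cost of the }P_2\text{ segment}) < \NBidSum{S}$. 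This contradicts constraint (iii) of the LP~\eqref{eqn:nash-def}, which asserts $\NBidSum{S} \le \NBidSum{T'}$ for every feasible $T'$. Symmetrically $\NBidSum{P_2} > \NBidSum{P_1}$ is impossible, forcing equality.

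The main obstacle I anticipate is making the rerouting precise: I need to guarantee that after swapping the $v$-$t$ portion of one of the $k$ edge-disjoint paths of $T$ (or $S$) from a $P_1$-segment to a $P_2$-segment, the result is still $k$ edge-disjoint $s$-$t$ paths — edge-disjointness could fail if $P_2$'s edges already appear among the other $k-1$ paths of $T$. Handling this cleanly likely requires choosing $e \in P_1$ and the tight set $T$ carefully (e.g., picking $e$ so that the relevant path of $T$ through $e$ actually traverses $v$, and using that $P_2 \subseteq \TG$ together with minimality/genericity of the flow decomposition), or phrasing the whole argument in terms of flow conservation: treat $\NBIDVEC$ as edge lengths, observe that tightness of every set in \TG forces the "distance from $v$ to $t$" to be well-defined, and derive that all $v$-$t$ paths in \TG have equal length as a shortest-path potential statement. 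I would aim for the flow-conservation phrasing, since it sidesteps ad hoc disjointness bookkeeping.
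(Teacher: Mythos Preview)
Your exchange intuition is right, and you correctly identify the obstacle, but the resolution you sketch does not close the gap. The direct swap fails for two related reasons. First, an edge $e\in P_1$ may lie in a tight $k$-flow $T$ whose path through $e$ does not follow $P_1$ all the way from $v$; there is no reason any single tight $T$ contains the entire $P_1$-segment you want to remove. Second, even if you find such a $T$, the edges of $P_2$ may already be used by the other $k-1$ paths of $T$, so the swapped object need not be a $k$-flow. Your ``flow-conservation phrasing'' is the right instinct, but as stated it is not a proof: tightness of the sets defining \TG does not by itself give you a potential function making all $v$--$t$ path lengths equal --- that is exactly the statement to be proved.

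The paper's fix is a global construction that you have not identified. One builds a directed multigraph \MG by taking, for every tight $k$-flow $T\neq S$, one forward copy of each edge of $T$, and adding $|\FLOWCOLL|$ reversed copies of each edge of $S$. This \MG is Eulerian. Giving forward edges weight $\NBid{e}$ and backward edges weight $-\NBid{e}$, the total weight is zero, and for any cycle set $\CYSET$ the associated $k$-flow $\CYImage{\CYSET}$ satisfies $\sum_{e\in\CYSET} w_e = \NBidSum{\CYImage{\CYSET}} - \NBidSum{S}$. The crucial step is that \emph{every} cycle in \MG has weight zero: a positive-weight cycle would leave an Eulerian remainder of negative total weight, which decomposes into cycles, one of which (by Pigeonhole) is negative --- yielding a $k$-flow cheaper than $S$, contradicting constraint~(iii). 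Once all cycles have weight zero, an induction on a reverse topological order of \TG, with a short case analysis on whether the first edges of $P_1,P_2$ correspond to forward or backward edges in \MG, finishes the lemma. The Eulerian-decomposition-plus-Pigeonhole step is the missing idea; it is what replaces your ad hoc swap and handles all edge-disjointness issues simultaneously.
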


\begin{emptyproof}
Let \FLOWCOLL be the collection of all tight $k$-flows from $s$ to $t$
except $S$, i.e., the set of all $\FLOW$ such that $\FLOW \neq S$, $\FLOW$
consists of exactly $k$ edge-disjoint $s$-$t$ paths, and
$\NBidSum{\FLOW} = \NBidSum{S}$.
We define a directed multigraph \MG as follows: for each
$\FLOW \in \FLOWCOLL$, we add to \MG a copy of each edge $e \in \FLOW$
(creating duplicate copies of edges $e$ which are in multiple flows
$\FLOW$). We call these edges \todef{forward edges}.
In addition, for each edge $e=(u,v) \in S$, we add
$\SetCard{\FLOWCOLL}$ copies of the \todef{backward edge}
$(v,u)$ to \MG, i.e., we direct $e$ the other way.

In the resulting multigraph,
each node $v$ has an in-degree equal to its out-degree.
For $v \neq s, t$ this follows since each edge set we added constitutes a flow. 
For $v=s,t$, it follows since each $\FLOW \in
\FLOWCOLL$ adds $k$ edges out of $s$ and into $t$, while the
$\SetCard{\FLOWCOLL}$ copies of $S$ add $k \SetCard{\FLOWCOLL}$ edges
into $s$ and out of $t$.  
As a result, \MG\ is Eulerian, a fact we use below.

We define a mapping \EIm{e}, which assigns to each edge $e \in \MG$ its
``original'' edge in $G$. As usual, we extend notation and write
$\EIm{R} = \Set{\EIm{e}}{e \in R}$ for any set $R$ of edges.

We will be particularly interested in analyzing collections of cycles
in \MG. We say that two cycles $\Cyc{1}, \Cyc{2}$ are
\todef{image-disjoint} if $\EIm{\Cyc{1}} \cap \EIm{\Cyc{2}} = \emptyset$.
A \todef{cycle set} is any set of zero or more image-disjoint cycles in
\MG (which we identify with its edge set),
and \CYSETCOLL denotes the collection of all cycle sets.
For a cycle set $\CYSET \in \CYSETCOLL$, let \CYSETF and \CYSETB
denote the set of forward and backward edges in \CYSET, respectively.
Then, we define
$\CYImage{\CYSET} = S \cup \EIm{\CYSETF} \setminus \EIm{\CYSETB}$.
It is easy to see that for each cycle set \CYSET,
\CYImage{\CYSET} is a $k$-flow in \TG.
Conversely, for every $k$-flow $\FLOW$ in \TG, there is a cycle set
$\CYSET \in \CYSETCOLL$ with $\CYImage{\CYSET} = \FLOW$.

We assign each edge $e \in \MG$ a weight \Weight{e}.
For forward edges $e$, we set $\Weight{e} = \NBid{\EIm{e}}$, while for
backward edges $e=(v,u)$, we set $\Weight{e} = -\NBid{\EIm{e}}$.
Notice that because each copy of $S$ contributes weight
$-\NBidSum{S}$, and each set $\FLOW \in \FLOWCOLL$ contributes
$\NBidSum{\FLOW} = \NBidSum{S}$, the sum of all weights in \MG is 0.

Now, let \CYSET be any cycle set, and $\FLOW = \CYImage{\CYSET}$ its
corresponding $k$-flow. We have
\[ 
\sum_{e \in \CYSET} \Weight{e}
\; = \; \NBidSum{\FLOW \setminus S} - \NBidSum{S \setminus \FLOW}
\; = \; \NBidSum{\FLOW} - \NBidSum{S},
\] 
Thus $\FLOW$ is tight, i.e., $\NBidSum{\FLOW} = \NBidSum{S}$, if and only if
$\sum_{e \in \CYSET} \Weight{e} = 0$.


We next show that for any cycle \CYC in \MG, the $k$-flow
\CYImage{\CYC} is tight, and therefore that $C$ has total weight zero, 
i.e., $\sum_{e \in \CYC} w_e=0$.
Assume for contradiction that this is not the case, and let \CYC be a cycle with
$\sum_{e \in \CYC} \Weight{e} \neq 0$. Let $\FLOW = \CYImage{\CYC}$ be
the corresponding $k$-flow.
Because we showed above that
$\sum_{e \in \CYC} \Weight{e} = \NBidSum{\FLOW} - \NBidSum{S}$,
we can rule out the possibility that $\sum_{e \in \CYC} \Weight{e} < 0$;
otherwise, $\NBidSum{\FLOW} < \NBidSum{S}$, which would violate Constraint
(iii) of the LP~\eqref{eqn:nash-def}.


If $\sum_{e \in \CYC} \Weight{e} > 0$, consider the multigraph $\MG'$ 
obtained by removing \CYC from \MG. Its total weight is
$\sum_{e \notin \CYC} \Weight{e} < 0$, because the sum of all
weights in \MG is 0 (as shown above).
Since \MG is Eulerian, so is $\MG'$,
and its edges can be partitioned into
a collection of edge-disjoint cycles
$\SET{\Cyc{1}, \ldots, \Cyc{\ell}}$. By the Pigeonhole
Principle, at least one of the \Cyc{i} must have negative total weight.  
But then $\NBidSum{\FLOW_i} < \NBidSum{S}$, where 
$\FLOW_i = \CYImage{\Cyc{i}}$, violating Constraint (iii) of~\eqref{eqn:nash-def} 
as in the previous case.    
This completes the proof that $\CYImage{\CYC}$ is tight for any cycle \CYC.  
By our observation above, the total weight of any cycle \CYC is zero.


Finally, we prove the statement of the lemma by
induction on a reverse topological sorting of the vertices $v$---that
is, an ordering in which the index of $v$ is at least as large as the index of any
$u$ such that $(v,u) \in \TG$. Because \TG is acyclic, such a
sorting exists. The base case $v=t$ is 
trivial. For $v \neq t$, let $P_1, P_2$ be two $v$-$t$ paths.
We distinguish three cases, based on the first edges $e_1=(v,u_1),
e_2=(v,u_2)$  of the paths $P_1, P_2$.

\begin{enumerate}
\item If \MG contains a forward edge $(v,u_1)$ and a backward edge
  $(u_2,v)$ (or vice versa), then since every set of edges added to \MG is a flow,
  \MG must contain a $v$-$t$ path $P'_1$ entirely consisting of
  forward edges and starting with $e_1$,
  and a $t$-$v$ path $P'_2$ entirely consisting of
  backward edges and ending with $e_2$ (backward).
  Applying the induction hypothesis to $u_1$ and $u_2$, 
  since $P_1$ and $P'_1$ share their first edges and similarly for 
  $P_2$ and $P'_2$, we have 
  $\NBidSum{\EIm{P'_1}} = \NBidSum{P_1}$ and
  $\NBidSum{\EIm{P'_2}} = \NBidSum{P_2}$.
  Because $P'_1 \cup P'_2$ forms a cycle, it has total weight 0.  
  Then $\NBidSum{\EIm{P'_2}} = -\Weight{P'_2} = \Weight{P'_1} =
  \NBidSum{\EIm{P'_1}}$, and so $\NBidSum{P_1} = \NBidSum{P_2}$.
  
\item If \MG contains forward edges $(v,u_1)$ and $(v,u_2)$, then it
  contains $v$-$t$ paths $P'_1, P'_2$ starting with $(v,u_1)$
and $(v,u_2)$, respectively, and consisting entirely of forward edges. 
Applying the induction hypothesis to $u_1$ and $u_2$, we have
  $\NBidSum{\EIm{P'_1}} = \NBidSum{P_1}$ and
  $\NBidSum{\EIm{P'_2}} = \NBidSum{P_2}$.
  Since every set of edges added to \MG  is a flow,
  \MG must contain an
  $s$-$v$ path $P$ consisting entirely of forward edges, and 
  \MG must also contain a $t$-$s$ path $P'$
  consisting entirely of backward edges.
  Because $P \cup P' \cup P'_i$ forms a cycle for each $i \in \{1,2\}$ 
  and has total weight zero,
  we obtain $\NBidSum{P_i} = \NBidSum{\EIm{P'_i}} = -\Weight{P \cup P'}$
  for each $i$.  In particular, $\NBidSum{P_1} = \NBidSum{P_2}$.
  
\item Finally, if \MG contains backward edges $(u_1,v)$ and $(u_2,v)$,
  we apply an argument similar to the previous case.
  By induction,
  $\NBidSum{\EIm{P'_1}} = \NBidSum{P_1}$, and
  $\NBidSum{\EIm{P'_2}} = \NBidSum{P_2}$.  
  Again using the fact that \MG\ consists of flows, it contains $t$-$v$ paths $P'_1, P'_2$ 
  with respective last edges $(u_1,v)$ and $(u_2,v)$.  In addition, \MG contains a $v$-$s$ path $P$
  consisting entirely of backward edges, and an $s$-$v$ path $P'$
  consisting entirely of forward edges. Then for each $i \in \{1,2\}$, 
  $P \cup P' \cup P'_i$ forms a cycle with total weight zero, so
  $\NBidSum{P_1} = \NBidSum{P_2}$.\QED
\end{enumerate}
\end{emptyproof}

As a corollary, we can derive a characterization of Nash Equilibria in
$(k+1)$-flows.

\begin{corollary} \label{cor:lnash-snash}
If $G$ is a $(k+1)$-flow, then a bid vector \NBIDVEC is a Nash Equilibrium
if and only if $\NBidSum{P} = \MPC[G]{\COSTVEC}$ for all $s$-$t$ paths $P$.
In particular, all Nash Equilibria have the same total cost
$\NBidSum{S} = k \cdot \MPC[G]{\COSTVEC}$, where $S$ is the winning set.
\end{corollary}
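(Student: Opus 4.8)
The plan is to derive the corollary from Lemma~\ref{lem:equal-path-lengths} (with $v=s$), using the fact that for a Nash Equilibrium the ``tight graph'' \TG of Lemma~\ref{lem:nash-comparison} is all of $G$, together with elementary bookkeeping on the constraints of the LP~\eqref{eqn:nash-def}. Write $\pi := \MPC[G]{\COSTVEC}$, let $S$ be the cheapest $k$-flow, and $Q := G\setminus S$. As a preliminary, one checks the following structural facts about a $(k+1)$-flow $G$: a cheapest $k$-flow $S$ is minimal, i.e., the edge-disjoint union of exactly $k$ paths $P_1,\dots,P_k$; $Q=G\setminus S$ is a single $s$-$t$ path; and $Q$ is a \emph{most expensive} $s$-$t$ path (else replacing $Q$ by a costlier path would yield a $k$-flow cheaper than $S$). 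In particular the losers are exactly the edges of $Q$, so constraint~(ii) gives $\NBid{e}=\Cost{e}$ on $Q$, hence $\NBidSum{Q}=\CostSum{Q}=\pi$ for every feasible bid vector.

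I would first pin down the value of the LP. For any feasible \NBIDVEC and any $i$, the set $(S\setminus P_i)\cup Q$ is again a $k$-flow, so constraint~(iii) applied to it gives $\NBidSum{S}\le \NBidSum{S}-\NBidSum{P_i}+\NBidSum{Q}$, i.e., $\NBidSum{P_i}\le\pi$; summing over $i$ yields $\NBidSum{S}=\sum_i\NBidSum{P_i}\le k\pi$, so $\LNash[G]{\COSTVEC}\le k\pi$. For the reverse inequality, since $G$ is acyclic (being a minimal $(k+1)$-flow), set $\phi(v)$ to be the largest cost of a $v$-$t$ path, $\phi(t)=0$, and $\NBid{(u,v)}:=\phi(u)-\phi(v)\ge\Cost{(u,v)}\ge 0$; this vector is individually rational and has $\NBidSum{P}=\phi(s)-\phi(t)=\pi$ for every $s$-$t$ path $P$, so it is feasible for~\eqref{eqn:nash-def} with $\NBidSum{S}=k\pi$ (constraint~(iii) holds because every $k$-flow contains $k$ edge-disjoint paths, each of bid $\pi$, and all bids are nonnegative). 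Thus $\LNash[G]{\COSTVEC}=k\pi$, which is the ``in particular'' claim $\NBidSum{S}=k\cdot\MPC[G]{\COSTVEC}$; moreover the identical check shows that \emph{any} bid vector with $\NBidSum{P}=\pi$ for all $s$-$t$ paths $P$ is feasible and attains this optimum, which proves the ``if'' direction of the corollary.

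For the ``only if'' direction, let \NBIDVEC be a Nash Equilibrium. Then $\NBidSum{S}=\LNash[G]{\COSTVEC}=k\pi$, and combined with $\NBidSum{P_i}\le\pi$ this forces $\NBidSum{P_i}=\pi$ for every $i$. Consequently, for each $i$ the $k$-flow $(S\setminus P_i)\cup Q$ has bid $k\pi-\pi+\pi=k\pi=\NBidSum{S}$ and hence is tight; taking the union of these tight sets (together with $S$ and $Q$) shows that \TG contains every edge of $G$, i.e., $\TG=G$. Now Lemma~\ref{lem:equal-path-lengths}, applied with $v=s$, tells us that all $s$-$t$ paths of $G$ carry the same total bid, and since we already know $\NBidSum{Q}=\pi$, this common value is $\pi$, as claimed.

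The step I expect to be most delicate is the structural preliminary: that a cheapest $k$-flow $S$ is minimal and that $Q=G\setminus S$ is a single $s$-$t$ path of maximum cost. A $(k+1)$-flow may identify internal vertices of its $k+1$ constituent paths, so it can contain many more $s$-$t$ paths than appear in any fixed flow decomposition, and a cheapest $k$-flow need not be obtained by deleting one of those paths. Establishing this cleanly takes an edge-count argument at $s$ (and symmetrically at $t$) together with minimality of the cheapest-$k$-flow choice under lexicographic tie-breaking. Once it is in hand, everything else — the two LP bounds, the potential construction, $\TG=G$, and the appeal to Lemma~\ref{lem:equal-path-lengths} — is routine.
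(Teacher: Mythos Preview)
Your proposal is correct and follows the same overarching plan as the paper --- reduce to \TG $= G$ and invoke Lemma~\ref{lem:equal-path-lengths} --- but the paper's execution is considerably shorter, and it sidesteps precisely the step you flag as delicate.

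For \TG $= G$, the paper uses a one-line argument: by the observation already made in the proof of Lemma~\ref{lem:nash-comparison}, the tight graph \TG always contains at least $k+1$ edge-disjoint $s$-$t$ paths; since $G$ is a $(k+1)$-flow and hence minimal with this property, \TG $= G$. You instead first compute $\LNash[G]{\COSTVEC} = k\pi$ via matching upper and lower bounds, deduce $\NBidSum{P_i} = \pi$ for each path $P_i$ in a decomposition of $S$, and then exhibit the explicit tight $k$-flows $(S \setminus P_i) \cup Q$ whose union is all of $G$. This works, but the cut argument behind ``\TG contains $k+1$ disjoint paths'' is already available and needs no prior knowledge of the LP value.

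To pin the common bid to $\pi$, the paper goes in the direction opposite to yours. Rather than starting from $S$ and arguing that $Q = G \setminus S$ is a single most-expensive path, it starts from a most-expensive path $\hat{P}$ and observes that $G \setminus \hat{P}$ is a $k$-flow of cost $\CostSum{G} - \pi$, hence the cheapest one; thus every edge of $\hat{P}$ loses, so $\NBidSum{\hat{P}} = \CostSum{\hat{P}} = \pi$ by constraint~(ii). The underlying structural fact is the same one you isolate --- for any minimal $k$-flow $F$ inside a $(k+1)$-flow, $G \setminus F$ is a single $s$-$t$ path (which is where acyclicity is used, as in Lemma~\ref{lem:equal-path-lengths}) --- but by starting from the path side the paper states it more economically and avoids separately decomposing $S$.

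Your potential construction and your treatment of the ``if'' direction go beyond what the paper proves; the paper's proof addresses only the ``only if'' direction and the ``in particular'' clause.
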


\begin{proof}
First, because $G$ is a $(k+1)$-flow, the graph \TG\  
defined in the proof of Lemma~\ref{lem:nash-comparison}
actually equals $G$, since it must contain $k+1$ edge-disjoint $s$-$t$
paths. If \NBIDVEC is a Nash Equilibrium, then
by Lemma \ref{lem:equal-path-lengths}, all $s$-$t$ paths $P$ have the same
total bid \NBidSum{P}.
Let $\hat{P}$ be an $s$-$t$ path maximizing \CostSum{P}, i.e.,
$\CostSum{\hat{P}} = \MPC[G]{\COSTVEC}$.
$G \setminus \hat{P}$ is a $k$-flow, and clearly the cheapest $k$-flow by
definition of $\hat{P}$. Therefore, all agents in $\hat{P}$ lose, and
$\NBidSum{\hat{P}} = \CostSum{\hat{P}}$ by Constraint
(ii) of the LP~\eqref{eqn:nash-def}.
\end{proof}

Finally, we show that the mechanism \EVMECH runs in polynomial time
for the special case of graphs derived from $k$-flows.

\begin{lemma} \label{lem:flow-polynomial}
For the Vertex Cover instance derived from computing a $k$-flow on
a $(k+1)$-flow, the mechanism \EVMECH runs in polynomial time.
\end{lemma}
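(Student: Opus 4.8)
The plan is to show that both computationally expensive steps of \EVMECH---building the matrix \MAT (equivalently, computing the values \Tot{v}) and computing a minimum-cost vertex cover with respect to the scaled costs---can be carried out in polynomial time on the specific graph \VCG that arises from the reduction in Lemma~\ref{lem:vertex-cover-flow-equivalence}. The key structural fact I would exploit is that in \VCG the independent sets are extremely restricted: by Proposition~\ref{prop:cut-reachability}, two edge-nodes $u_e,u_{e'}$ are \emph{non}-adjacent in \VCG exactly when one of $e,e'$ is reachable from the other in \MING. Hence an independent set in \VCG corresponds to a set of edges of \MING that is totally ordered by the reachability relation, i.e.\ a set of edges lying on a single $s$-$t$ path of \MING (since \MING is a disjoint union of $k+1$ $s$-$t$ paths, and a chain in the reachability order can be completed to a path). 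This means the complement structure of \VCG is essentially an interval/path-like structure, and in particular \VCG is a co-graph of paths, so its maximum (weighted) independent set, and therefore its minimum-cost vertex cover, is computable in polynomial time.

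First I would make the independent-set characterization precise: an independent set $I$ in \VCG is a set of edge-nodes whose underlying edges form a chain under reachability in \MING, and every maximal such chain corresponds to the edge set of one of the $k+1$ paths comprising \MING. From this I would conclude two things. (1) The fractional clique number \Tot{v} of the neighborhood graph $G_v$ (Proposition~\ref{prop:clique}) equals the size of the largest clique in $G_v$, because the relevant graphs are perfect (the clique cover / independent-set structure is that of an interval graph); so \Tot{v} is just the maximum number of neighbors of $v$ that pairwise lie on a common minimum $s$-$t$ cut, which can be found by a simple combinatorial search over the at most $k+1$ path-structure of \MING. Actually it is cleaner to observe directly that a clique in \VCG is a set of edges pairwise sharing a minimum cut, and by Proposition~\ref{prop:cut-reachability} such a set has at most one edge per path of \MING, hence size at most $k+1$; computing the scaling values then reduces to polynomial-size max-weight problems on this bounded-width structure. (2) Minimum-cost vertex cover on \VCG is the complement of maximum-weight independent set on \VCG, and since independent sets are chains along the $k+1$ paths, max-weight independent set is solvable by a straightforward dynamic program (or min-cost flow) over \MING in polynomial time.

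Then I would assemble the pieces: \MAT and its dominant eigenvector \PV are computed in polynomial time (eigenvector computation is polynomial once \MAT is known, as already noted in the text), the bids are scaled, and the winning set is the polynomial-time-computable minimum-cost vertex cover of \VCG with respect to the scaled costs; threshold payments are then computed by the usual $O(n)$ invocations of the selection rule. I would also note that \VCG itself has size polynomial in $|E|$ and can be constructed in polynomial time: its edge set is determined by Proposition~\ref{prop:cut-reachability}, i.e.\ by reachability in \MING, which is computed by graph search, and \MING is the minimum-cost $(k+1)$-flow, computable by standard min-cost flow algorithms.

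The main obstacle I anticipate is the independent-set / clique characterization in \VCG and the verification that the relevant graphs are perfect (so that fractional and integral clique numbers coincide and the optimization problems are tractable). Everything downstream---the dynamic program for minimum vertex cover, the eigenvector computation, the threshold payments---is routine once that combinatorial structure is in hand. A secondary point to be careful about is consistent lexicographic tie-breaking in the minimum-cost vertex cover step, which must be maintained so that \EVMECH's selection rule stays monotone; but this is a standard bookkeeping matter rather than a genuine difficulty.
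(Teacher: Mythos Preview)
Your outline is correct in its broad strokes, but the paper's argument is considerably more direct and avoids the perfection machinery. The paper observes that $\Tot{u_e}=k$ for \emph{every} node $u_e$: the lower bound comes from any minimum $s$-$t$ cut through $e$ (which gives a $k$-clique among $u_e$'s neighbors), and the upper bound from a $(k+1)$-coloring of \VCG obtained by decomposing \MING into $k+1$ edge-disjoint paths and giving each path one color (Proposition~\ref{prop:cut-reachability} shows this is proper). Thus \DIAG is a scalar matrix, and no per-vertex optimization is needed at all. For the second step, the paper simply invokes Lemma~\ref{lem:vertex-cover-flow-equivalence}: a vertex cover of \VCG is exactly a $k$-flow in \MING, so minimum-cost vertex cover is a min-cost flow computation.

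Your route via ``the graph is perfect, so fractional clique number equals clique number'' is valid---the complement of \VCG is the comparability graph of the reachability order on edges of \MING, hence perfect, and perfection passes to induced subgraphs---but note two inaccuracies in your write-up: the structure is co-comparability, not interval; and a maximal chain corresponds to \emph{some} $s$-$t$ path in \MING, not necessarily one of ``the $k+1$ paths'' of a fixed decomposition (the paths are only edge-disjoint). Your approach buys generality (it would work even if the $\Tot{u_e}$ were not all equal), while the paper's buys simplicity and the explicit value $\Tot{u_e}=k$, which also explains why \EVMECH coincides here with the unscaled mechanism of Chen et al.
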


\begin{proof}
There are two steps which are of concern:
computing the values \Tot{v}, and finding the cheapest vertex cover
with respect to the scaled bids. The latter is exactly a Minimum Cost
Flow problem by Proposition \ref{lem:vertex-cover-flow-equivalence},
and thus solvable in polynomial time with standard algorithms
\cite{ahuja:magnanti:orlin}. 

For the former, we claim that
$\Tot{u_e} = k$ for all $u_e \in \VCG$. By Proposition
\ref{prop:clique} and LP duality, \Tot{u_e} is upper bounded by the
chromatic number of $u_e$'s neighborhood, and lower bounded by its
clique number. Since each edge $e \in \MING$ is part of a minimum cut
of size $k+1$, and the edges of the minimum cut form a clique in \VCG,
the clique number of $u_e$'s neighborhood is at least $k$. 
On the other hand, we can 
decompose \MING into $k+1$ edge-disjoint paths, and color 
the vertices corresponding to each path with its own color in \VCG.
By Proposition \ref{prop:cut-reachability}, this is a valid
coloring, and shows that $u_e$'s neighborhood is $k$-colorable.
\end{proof}

\begin{remark} \label{rem:flows-tight}
The factor 2 in the result of Theorem \ref{thm:flowmech} comes from
the factor \half in the lower bound in Lemma \ref{lem:vertex-lower}.
Using a more refined lower bound based on Young's Inequality, Chen et
al.~\cite{chen:elkind:gravin:petrov} showed that for an unscaled
version of the Vertex Cover mechanism, the factor \half in the lower bound
is unnecessary. 
For the instances of Vertex Cover produced as a result of the pruning
in this section, the mechanism from \cite{chen:elkind:gravin:petrov}
coincides with \EVMECH, and hence \FLOWMECH is the same as 
the flow mechanim \cite{chen:elkind:gravin:petrov}.

Chen et al.~also showed that while \FLOWMECH is
$(k+1)$-competitive when compared against the buyer-optimal lower
bound \cite{BeyondVCG}, it is in fact optimal compared to the
buyer-pessimal version \cite{elkind:goldberg:goldberg:frugality}. 
\end{remark}


\section{A Mechanism for Cuts}

As a second application of our methodology, we give a competitive
mechanism \CUTMECH for purchasing an $s$-$t$ cut, given a (directed)
graph $G=(V,E)$, source $s$, and sink $t$.
Again, the agents are edges.
Here, the necessary monopoly-freeness is equivalent to $G$ not
containing the edge $(s,t)$.

As before, it suffices to specify and analyze a
composable pre-processing rule \SELRULE.
Our pre-processing rule is to compute a minimum-cost set $E'$ of edges
(with respect to the submitted bids \BIDVEC),
such that $E'$ contains at least two edges from each $s$-$t$ path.
We call such an edge set a \todef{double cut}.  
We show below restricting the set system to $E'$ gives a Vertex Cover instance, 
and at most increases the cost of the winning set by a factor of $2$.

\subsection{Restricting to a double cut}

To restrict the set system to $E'$, we contract all edges in $E \setminus E'$. 
Since no such edge will be cut, contracting it ensures that  
its endpoints will always lie on the same side of the cut. 
Let \MINCG denote the resulting graph.
We begin with a simple structural lemma about \MINCG.

\begin{lemma} \label{lem:length2}
In \MINCG, all $s$-$t$ paths have length exactly 2.
\end{lemma}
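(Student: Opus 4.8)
The plan is to argue that the double cut $E'$ has a specific combinatorial structure that forces every $s$-$t$ path to be cut exactly twice, and that this, together with the contraction of $E \setminus E'$, pins down all path lengths in $\MINCG$ to be exactly $2$. First I would observe the ``at least 2'' direction: since $E'$ is a double cut, every $s$-$t$ path in $G$ contains at least two edges of $E'$, and because contraction of an edge in $E \setminus E'$ cannot merge two endpoints of an edge in $E'$ into a single vertex in a way that destroys an $E'$-edge on the path, every $s$-$t$ path in $\MINCG$ still contains at least two edges. Hence no $s$-$t$ path in $\MINCG$ can have length $0$ or $1$; in particular there is no edge $(s,t)$ in $\MINCG$ (this also uses monopoly-freeness of the original instance).

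The crux is the ``at most 2'' direction, and this is where I expect the main obstacle to lie: I must use the \emph{minimality} of $E'$. Suppose for contradiction some $s$-$t$ path $P$ in $\MINCG$ has length at least $3$; lift it to an $s$-$t$ walk in $G$ whose $E'$-edges are exactly the edges of $P$ (the non-$E'$ edges of the walk were contracted away). Then $P$ uses at least three edges of $E'$. The idea is that one of these edges is ``wasted'': consider the edge $e$ of $P$ that is in the ``middle'' of the path, say with at least one $E'$-edge of $P$ strictly before it and at least one strictly after it along $P$. I want to show $E' \setminus \{e\}$ is still a double cut, contradicting minimality. For this I need: every $s$-$t$ path $Q$ in $G$ that uses $e$ must use at least two \emph{other} edges of $E'$. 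Here is where the argument requires care — a path $Q$ through $e$ decomposes as ($s$ to $e$) followed by ($e$ to $t$), and I would argue that each of these two sub-portions, when combined with the appropriate tail of $P$, forms an $s$-$t$ path in $G$ and hence contains at least two $E'$-edges; subtracting the shared edge $e$ and being careful about whether the two $E'$-edges forced on each side could coincide with $e$ or with each other, one deduces $Q$ contains at least two edges of $E' \setminus \{e\}$. The delicate bookkeeping is ensuring the ``two edges'' forced on the $s$-side and the ``two edges'' forced on the $t$-side don't collapse so much that fewer than two edges of $E' \setminus \{e\}$ remain; this is precisely where the choice of $e$ as a genuinely interior $E'$-edge of $P$ (with $E'$-edges strictly on both sides) is used, so that $P$ itself witnesses the needed extra edges on each side.

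An alternative, possibly cleaner route is to appeal to the primal-dual / Ford–Fulkerson-style structure the paper develops for computing the double cut: the minimum double cut is characterized by a flow of value two (or a pair of ``layers''), so that $\MINCG$ after contraction is exactly a two-layered graph $s \to (\text{middle layer}) \to t$, which makes length-exactly-$2$ immediate. If that machinery is available at this point in the paper I would use it; otherwise the combinatorial minimality argument above is self-contained. Either way, combining the two directions gives that every $s$-$t$ path in $\MINCG$ has length exactly $2$, proving the lemma.
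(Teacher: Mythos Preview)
Your overall plan matches the paper's: show every $s$-$t$ path in $H$ has length at least $2$ from the double-cut property, and at most $2$ from minimality of $E'$ by exhibiting a removable edge. The paper's proof is extremely short: for length $\ge 2$, an $(s,t)$ edge in $H$ lifts to an $s$-$t$ path in $G$ meeting $E'$ only once; for length $\le 2$, any path of length $\ge 3$ in $H$ contains an edge $(u,v)$ with $u \ne s$ and $v \ne t$, and the paper simply asserts this edge can be dropped from $E'$.

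Your detailed justification of the ``drop the middle edge'' step has a gap, though. Combining the $s$-side of $Q$ with the $t$-side tail of $P$ (or vice versa) yields an $s$-$t$ walk that already contains $e$ together with at least one further $E'$-edge coming from $P$'s tail, so the double-cut inequality for that walk is satisfied automatically and places no constraint whatsoever on $Q_s$ or $Q_t$. No bookkeeping rescues this: splicing with $P$ simply does not isolate the $E'$-edges of $Q$. The clean argument---which the paper leaves implicit---does not use $P$ at all. Since $H$ is obtained by contracting the connected components of $(V, E\setminus E')$, the condition $u\ne s$ says precisely that the tail of $e$ (back in $G$) lies in a different $(E\setminus E')$-component from $s$, so \emph{every} path in $G$ from $s$ to that tail must cross $E'$; symmetrically, $v\ne t$ forces every path from the head of $e$ to $t$ to cross $E'$. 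Hence any $s$-$t$ path through $e$ picks up one $E'$-edge strictly before $e$ and one strictly after, giving two edges of $E'\setminus\{e\}$; thus $E'\setminus\{e\}$ is still a double cut, contradicting minimality. Your alternative route via the primal-dual structure is not how the paper proceeds here; this lemma is proved before that machinery is developed.
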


\begin{proof}
If there were an $s$-$t$ path of length 1 in \MINCG, i.e., an edge
$(s,t)$, then consider the edge $(u,v)$ in the original graph
corresponding to $(s,t)$. Because $u$ was contracted with $s$, and $v$
with $t$, there must be an $s$-$u$ path and a $v$-$t$ path in $G$
using only edges from $E \setminus E'$.  
In that case, $(u,v)$ is the only edge on this path contained in $E'$, so 
$E'$ cannot have been a double cut.
Similarly, if there were an $s$-$t$ path $P$ of length at least 3, then at
least one edge $(u,v)$ of $P$ has neither $s$ nor $t$ as an
endpoint. This edge could be safely contracted, i.e., removed from $E'$, 
in which case $E'$ was not a minimum-cost double cut.
\end{proof}

\begin{theorem} \label{thm:cutmech-comp}
The double cut selection rule is composable and produces a Vertex
Cover instance with
$\LNash[\MINCG]{\COSTVEC} \leq 2 \LNash[G]{\COSTVEC}$.
Furthermore, both the selection rule and the subsequent Vertex Cover
mechanism can be computed in polynomial time.
Thus, \CUTMECH is a polynomial-time 4-competitive mechanism.
\end{theorem}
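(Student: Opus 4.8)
The plan is to obtain Theorem~\ref{thm:cutmech-comp} by checking the hypotheses of Theorem~\ref{thm:reduction} with blow-up parameter $\BLOWUP=2$ and separately verifying the two running-time claims; the $2\BLOWUP=4$ competitiveness of \CUTMECH then follows at once. So there are four things to prove: (a) the double-cut selection rule is composable; (b) restricting the cut set system to a minimum-cost double cut $E'$ gives a Vertex Cover instance; (c) $\LNash[\MINCG]{\COSTVEC}\le 2\,\LNash[G]{\COSTVEC}$ for every cost vector \COSTVEC; and (d) the selection rule and the ensuing run of \EVMECH on \MINCG are polynomial-time. Part (a) is immediate from Lemma~\ref{lem:composable}: a double cut is by definition a minimum-cost edge set having the property ``meets every $s$-$t$ path at least twice,'' and consistent tie-breaking is all that Lemma~\ref{lem:composable} requires.

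For (b) I would start from Lemma~\ref{lem:length2}: in \MINCG every $s$-$t$ path has length exactly $2$. Minimality of $E'$ moreover lets us assume (after discarding only harmless zero-cost edges that lie on no $s$-$t$ path) that the edges of \MINCG split, for each intermediate vertex $w$, into the bundle $L_w$ of edges $(s,w)$ and the bundle $R_w$ of edges $(w,t)$, and that an $s$-$t$ path through $w$ is exactly a pair $\{\ell,r\}$ with $\ell\in L_w$ and $r\in R_w$. Hence an $s$-$t$ cut of \MINCG is precisely a vertex cover of the edge-conflict graph, which is the disjoint union, over all intermediate vertices $w$, of the complete bipartite graphs on $L_w\cup R_w$. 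Each such component is a simple graph and therefore automatically monopoly-free, so the restricted system is a genuine Vertex Cover instance (one simply applies \EVMECH to each connected component).

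Part (c) is the crux. For the left-hand side, the decomposition of the conflict graph into the complete bipartite pieces on $L_w\cup R_w$ makes the LP~\eqref{eqn:nash-def} decompose accordingly, and for a single complete bipartite piece the buyer-pessimal value is just the cost of the more expensive side; hence $\LNash[\MINCG]{\COSTVEC}=\sum_w\max\bigl(\CostSum{L_w},\CostSum{R_w}\bigr)$. It then remains to show $\LNash[G]{\COSTVEC}\ge\tfrac12\sum_w\max(\CostSum{L_w},\CostSum{R_w})$. The structural lever is again minimality of $E'$: contracting $E\setminus E'$ partitions $V(G)$ into the component of $s$, the component of $t$, and one component per intermediate vertex $w$, and --- because \MINCG has no $s$-$t$ path of length $1$ or of length $\ge 3$ --- every $s$-$t$ path of $G$ leaves $s$'s component through exactly one edge of some $L_w$ and enters $t$'s component through exactly one edge of the matching $R_w$, meeting $E'$ in no other edge. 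Using this I would follow the template of Lemma~\ref{lem:nash-comparison} together with Lemma~\ref{lem:equal-path-lengths}: build an auxiliary (multi-)graph from the winning cut of $G$ and all of its tight cuts, establish the relevant uniformity of the Nash bids (the analogue of ``all $s$-$t$ paths of \TG have the same total bid''), and deduce that the most expensive Nash equilibrium of the cut auction on $G$ already pays at least $\tfrac12\sum_w\max(\CostSum{L_w},\CostSum{R_w})$. I expect this to be the main obstacle; as in the flow case (where the pruning cost a factor $k+1$ in Lemma~\ref{lem:nash-comparison}), the factor $2$ here is the price of replacing, at every intermediate vertex, the two real cuts $\bigcup_w L_w$ and $\bigcup_w R_w$ available in $G$ by the ``heavier of $L_w$ and $R_w$'' that \MINCG is forced to buy.

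For (d): the selection rule --- computing a minimum-cost edge set meeting every $s$-$t$ path at least twice --- is handled by a primal-dual algorithm generalizing the Ford--Fulkerson minimum-cut procedure. The run of \EVMECH on \MINCG is polynomial for two reasons. By Proposition~\ref{prop:clique}, $\Tot{v}$ is the fractional clique number of the subgraph induced by $v$'s neighborhood in the conflict graph; but that neighborhood is one entire side of a complete bipartite component, hence an independent set, whose fractional clique number is $1$. So $\Tot{v}=1$ for every vertex $v$, the matrix $\MAT$ is simply the adjacency matrix $\ADJ$ of the conflict graph, and its dominant eigenvalue $\FR$ and eigenvector are computable in polynomial time (component by component, each component being connected). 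And a minimum-cost vertex cover with respect to the scaled bids in a disjoint union of complete bipartite graphs is found in polynomial time --- one picks the cheaper side within each component --- or, exactly as in the flow case, by solving a minimum-cost flow. Assembling (a), (b), (c) through Theorem~\ref{thm:reduction} with $\BLOWUP=2$ shows that \CUTMECH is truthful and $4$-competitive, and (d) shows it runs in polynomial time, which is exactly the statement of Theorem~\ref{thm:cutmech-comp}.
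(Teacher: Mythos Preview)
Your handling of parts (a), (b), and (d) is sound and essentially matches the paper. In particular, your direct identification of the conflict graph as a disjoint union of complete bipartite pieces $K_{|L_w|,|R_w|}$ is correct (the paper instead recasts cuts on \MINCG as a $1$-flow on a minimally $2$-connected series graph and invokes Lemma~\ref{lem:vertex-cover-flow-equivalence}, but the two views are equivalent), and your observation that each $\Tot{v}=1$ because the neighborhood of $v$ is one full side of a bipartite component, hence independent, is exactly right and agrees with the paper's $k=1$ specialization of Lemma~\ref{lem:flow-polynomial}.

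The gap is in (c). You propose to transplant the Eulerian-multigraph machinery of Lemmas~\ref{lem:nash-comparison} and~\ref{lem:equal-path-lengths} to cuts, but you do not say what the multigraph is, what the ``uniformity'' statement would be, or how it would yield the factor~$2$; you yourself flag this as ``the main obstacle.'' The analogy does not transfer cleanly: for flows the nontrivial content of Lemma~\ref{lem:equal-path-lengths} is that all $s$-$t$ paths in \TG have equal Nash bid, which lets one assemble a cheap $(k{+}1)$-flow inside \TG; for cuts the statement ``all tight cuts have equal Nash bid'' is a tautology and does not by itself produce a double cut.

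The paper's argument for (c) is both different and much shorter, and it is the idea you are missing. Let \NBIDVEC solve the LP~\eqref{eqn:nash-def} on $G$, and let \ALLCUTS be the family of $s$-$t$ cuts that are minimum with respect to \NBIDVEC. By submodularity, $T^-:=\bigcap_{(T,\Compl{T})\in\ALLCUTS}T$ and $T^+:=\bigcup_{(T,\Compl{T})\in\ALLCUTS}T$ are again minimum cuts. The tight-set condition~\eqref{eq:tight} then forces $E(T^-,\Compl{T^-})$ and $E(T^+,\Compl{T^+})$ to be edge-disjoint: any common edge $e$ would cross every cut in \ALLCUTS, yet~\eqref{eq:tight} supplies a tight cut in \ALLCUTS avoiding $e$. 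Hence $\TG:=E(T^-,\Compl{T^-})\cup E(T^+,\Compl{T^+})$ is a double cut with $\NBidSum{\TG}=2\,\LNash[G]{\COSTVEC}$. Individual rationality gives $\CostSum{\TG}\le\NBidSum{\TG}$, cost-minimality of the selected double cut gives $\CostSum{\MINCG}\le\CostSum{\TG}$, and your own formula $\LNash[\MINCG]{\COSTVEC}=\sum_w\max(\CostSum{L_w},\CostSum{R_w})\le\CostSum{\MINCG}$ closes the chain. The key ingredient is thus the lattice structure of minimum cuts, not an adaptation of the flow argument.
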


\noindent
Composability follows from Lemma \ref{lem:composable}, and
the final conclusion then follows from Theorem \ref{thm:reduction}
once we establish the other claims.

We can obtain a Vertex Cover instance by imposing a graph structure on \MINCG, 
treating each edge as a vertex and adding an edge $(e,e')$ between any two edges 
that form an $s$-$t$ path.
A set of edges is an $s$-$t$ cut if and only if it contains at least one of $e, e'$ 
in each such pair, so it is a vertex cover of the resulting graph.

We can think of this in turn as a flow problem as follows. Lemma
\ref{lem:length2} implies that \MINCG is of the following form: in
addition to $s$ and $t$, there are vertices $v_1, \ldots, v_{\ell}$,
and for each $i=1, \ldots, \ell$, a set of parallel edges $E_i$ from
$s$ to $v_i$, and a set of parallel edges $E'_i$ from $v_i$ to $t$.
Any $s$-$t$ cut has to include, for each $i$, all of $E_i$ or all of
$E'_i$.  Thus, if we define a minimally
2-connected graph consisting of a series of vertices $u_0, u_1, \ldots, u_\ell$ with
two vertex-disjoint paths of length $\SetCard{E_i}$ and $\SetCard{E'_i}$ between 
$u_{i-1}$ and $u_{i}$ for each $i$, an $s$-$t$ cut in \MINCG is a 1-flow from $u_0$ to $u_\ell$.
 We can then apply Lemma \ref{lem:vertex-cover-flow-equivalence}. Notice that this
equivalence also establishes that \EVMECH runs in polynomial time on
the instances produced by this selection rule.

As before, the key part is to analyze the increase in the lower bound.
\begin{lemma}
For all cost vectors \COSTVEC,
$\LNash[\MINCG]{\COSTVEC} \leq 2 \LNash[G]{\COSTVEC}$.
\end{lemma}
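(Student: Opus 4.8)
The plan is to first compute $\LNash[\MINCG]{\COSTVEC}$ exactly, and then reduce the statement to the single inequality $\CostSum{E'}\le 2\,\LNash[G]{\COSTVEC}$, where $E'$ is the minimum-cost double cut that defines $\MINCG$.

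By Lemma~\ref{lem:length2}, $\MINCG$ has the parallel--series form: there are vertices $v_1,\dots,v_\ell$ and, for each $i$, a bundle $E_i$ of parallel edges from $s$ to $v_i$ and a bundle $E'_i$ from $v_i$ to $t$; the contraction is a cost-preserving bijection between $E(\MINCG)$ and $E'$, so $\CostSum{E'}=\sum_i(\CostSum{E_i}+\CostSum{E'_i})$. As explained right after Lemma~\ref{lem:length2}, an $s$--$t$ cut in $\MINCG$ is exactly a $1$-flow on the associated minimally $2$-connected graph, which has precisely two edge-disjoint source--sink paths; hence Lemma~\ref{lem:vertex-cover-flow-equivalence} and Corollary~\ref{cor:lnash-snash} (with $k=1$) give $\LNash[\MINCG]{\COSTVEC}=\sum_i\max\{\CostSum{E_i},\CostSum{E'_i}\}$, the cost of the most expensive path in that graph. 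Since $\max\{a,b\}\le a+b$ for $a,b\ge 0$, this is at most $\CostSum{E'}$, so it suffices to prove $\CostSum{E'}\le 2\,\LNash[G]{\COSTVEC}$.

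For that, let $S$ be the cheapest $s$--$t$ cut in $G$ under \COSTVEC and let \NBIDVEC be an optimal solution of the LP~\eqref{eqn:nash-def} for the cut instance on $G$, so $\LNash[G]{\COSTVEC}=\NBidSum{S}$ and $\CostSum{S}\le\NBidSum{S}$ by constraint~(i). The goal is to exhibit a second cut $T$ of $G$ with $\CostSum{T}\le\NBidSum{S}$ and such that $S\cup T$ is a double cut; the claim then follows at once, since $\CostSum{E'}\le\CostSum{S\cup T}\le\CostSum{S}+\CostSum{T}\le 2\,\NBidSum{S}=2\,\LNash[G]{\COSTVEC}$. The raw material for $T$ is the tightness~\eqref{eq:tight}: for each $e\in S$ there is a cut $T_e$ with $e\notin T_e$ and $\NBidSum{T_e}=\NBidSum{S}$. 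Every $s$--$t$ path $P$ meets $S$; and if $P$'s only $S$-edge is $e$, then $P$ also meets $T_e$ in an edge $\neq e$, so $S\cup T_e$ already double-cuts $P$.

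I expect the main obstacle to be assembling these tight cuts into one cut $T$ of cost at most $\NBidSum{S}$: a single $T_e$ only handles paths whose unique $S$-edge is $e$, while $\bigcup_{e\in S}T_e$ handles all of them but can be far too expensive. The resolution should exploit the parallel--series/primal--dual structure of the minimum double cut: group the edges of $S$ and the tight cuts $T_e$ by which series block of $\MINCG$ they control, and for each block substitute its cheaper side, so that one tight cut per block suffices and $\CostSum{T}$ telescopes to $\sum_i\max\{\CostSum{E_i},\CostSum{E'_i}\}=\LNash[\MINCG]{\COSTVEC}$, which in turn is bounded by $\NBidSum{S}$ by lifting the $\MINCG$-optimal Nash vector back to $G$ (contracted edges bidding their cost) and checking feasibility against all cuts of $G$. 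The delicate points there are that the cheapest $G$-cut need not be the image of $S$ and that $G$-cuts using contracted edges must be checked against constraint~(iii) of~\eqref{eqn:nash-def}; the factor $2$ in the statement is precisely the slack that absorbs these mismatches.
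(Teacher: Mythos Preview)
Your first two paragraphs are correct and match the paper exactly: you compute $\LNash[\MINCG]{\COSTVEC}=\sum_i\max\{\CostSum{E_i},\CostSum{E'_i}\}\le\CostSum{E'}$ and reduce the lemma to exhibiting a double cut of $G$ of true cost at most $2\,\LNash[G]{\COSTVEC}$. The gap is entirely in your last paragraph, where you try to assemble the tight cuts $T_e$ into a single cut $T$.

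That assembly, as you describe it, does not work. The ``group by series block of $\MINCG$'' plan presupposes that you already understand how the edges of $S$ and of the $T_e$ distribute over the blocks of $\MINCG$, but $\MINCG$ is defined by the \emph{cost}-minimum double cut $E'$, which has no a priori relation to $S$ or to the $\NBIDVEC$-tight cuts. More seriously, the step ``$\CostSum{T}\le\LNash[\MINCG]{\COSTVEC}\le\NBidSum{S}$'' is circular: the second inequality is the very statement you are trying to prove (and in fact a stronger version of it, without the factor $2$). Your lifting argument cannot establish it, because a cut of $G$ that uses a contracted edge can be strictly cheaper under the lifted bids than any cut of $\MINCG$, so constraint~(iii) of the LP~\eqref{eqn:nash-def} genuinely fails for the lifted vector. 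The last sentence of your proposal essentially concedes this.

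The missing idea is to abandon the per-edge tight cuts $T_e$ and instead use the lattice structure of minimum cuts with respect to $\NBIDVEC$: if $\mathcal{C}$ is the family of all $\NBIDVEC$-minimum $s$--$t$ cuts $(T,\Compl{T})$, then both $T^-:=\bigcap_{\mathcal{C}}T$ and $T^+:=\bigcup_{\mathcal{C}}T$ are again $\NBIDVEC$-minimum cuts (submodularity). Tightness~\eqref{eq:tight} is then used once, globally, to show that $E(T^-,\Compl{T^-})$ and $E(T^+,\Compl{T^+})$ are edge-disjoint: if an edge $e$ crossed both, then every $\NBIDVEC$-minimum cut would contain $e$, contradicting the existence of a tight cut avoiding $e$. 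Their union is therefore a double cut with $\NBIDVEC$-cost exactly $2\,\NBidSum{S}$, hence true cost at most $2\,\LNash[G]{\COSTVEC}$ by constraint~(i), and you finish as in your reduction.
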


\begin{proof}
Let $(S, \Compl{S})$ be the cheapest $s$-$t$ cut in $G$
with respect to the costs \COSTVEC, and
\NBIDVEC a solution to the LP~\eqref{eqn:nash-def} with cost vector \COSTVEC on the graph $G$.
Let \ALLCUTS be the set of all minimum $s$-$t$ cuts $(T, \Compl{T})$
with respect to the costs \NBIDVEC; thus, each of these cuts has cost
\NBidSum{E(S, \Compl{S})}.
Define $T^- = \bigcap_{(T, \Compl{T}) \in \ALLCUTS} T$,
and $T^+ = \bigcup_{(T, \Compl{T}) \in \ALLCUTS} T$.
Then, both $(T^-, \Compl{T^-})$ and $(T^+, \Compl{T^+})$
are minimum $s$-$t$ cuts as well (see, e.g.,
\cite[Exercise 6.39]{ahuja:magnanti:orlin}).

Furthermore, the edge sets $E(T^-, \Compl{T^-})$
and $E(T^+, \Compl{T^+})$ are disjoint. For assume that there is an
edge $e=(u,v)$ in common between these sets.
Then, $u \in \bigcap_{(T, \Compl{T}) \in \ALLCUTS} T$ and
$v \in \bigcap_{(T, \Compl{T}) \in \ALLCUTS} \Compl{T}$.
In particular, this implies that $u \in S$ and $v \in \Compl{S}$. \
As stated above in equation~\eqref{eq:tight}, 
since \NBIDVEC maximizes the LP~\eqref{eqn:nash-def},
Constraint (iii) must be tight for some feasible set excluding $e$, since 
otherwise the bid \NBid{e} could be increased.  
Let $(T, \Compl{T})$ be the corresponding cut.
Then $\NBidSum{E(T, \Compl{T})} = \NBidSum{E(S, \Compl{S})}$,
and $e$ does not cross $(T, \Compl{T})$.  
Thus, either both $u$ and $v$ are in $T$, or both are in \Compl{T}.
Since $(T, \Compl{T}) \in \ALLCUTS$, this gives a contradiction.


Now define
$\TG := E(T^-, \Compl{T^-}) \cup E(T^+, \Compl{T^+})$.
Because \TG consists of two disjoint $s$-$t$ cuts, it is a double cut,
and the cost-minimality of
\MINCG implies that $\CostSum{\TG} \geq \CostSum{\MINCG}$.  
These two cuts both have minimal cost, so $\LNash[G]{\COSTVEC}
= \NBidSum{\TG}/2$.
By the ``individual rationality'' LP constraint (i),
$\NBidSum{\TG} \geq \CostSum{\TG}$, and hence
\[ 
\LNash[G]{\COSTVEC}
\;=\; \frac{\NBidSum{\TG}}{2}
\;\geq\; \frac{\CostSum{\TG}}{2}
\;\geq\; \frac{\CostSum{\MINCG}}{2}
\;\geq\; \frac{\LNash[\MINCG]{\COSTVEC}}{2}.
\]
For the last inequality, notice that in the ``Nash Equilibrium'' on
\MINCG, for each $i$, the cheaper of $E_i$ and $E'_i$ will
collectively raise their bids to the cost of the more expensive one,
so the total bid of the winning set will be $\LNash[\MINCG]{\COSTVEC} = 
\sum_i \max(\CostSum{E_i}, \CostSum{E'_i}) \leq \CostSum{\MINCG}$.
\end{proof}

\subsection{A Primal-Dual Algorithm for Minimum Double-Cuts}

Finally, we present a polynomial time algorithm to compute a
minimum-cost double cut.
The minimum-cost double cut is characterized by integer solutions to
the following LP, where \ALLPATH denotes the set of all $s$-$t$ paths in $G$.

\begin{LP}[LP:min-2-cut]{Minimize}{\sum_{e\in E} \Cost{e} x_e}
\sum_{e \in P} x_e \geq 2 & \mbox{ for all } P \in \ALLPATH\\
x_e \leq 1 & \mbox { for all edges } e \in E\\
x_e \geq 0 & \mbox{ for all } e,
\end{LP}

\begin{remark}
It is not difficult to show that the constraint matrix for this LP is
totally unimodular.
By a well-known theorem \cite{papadimitriou:steiglitz},
because the right-hand sides of the constraints are integral,
total unimodularity implies that all the vertices of the LP's polytope
are integral.
Since there is a separation oracle for the LP (as well as an
equivalent polynomial-sized LP formulation), 
an integer solution can
be found in polynomial time, giving us a polynomial-time algorithm.  
However, the resulting algorithm is rather inefficient.
\end{remark}

Here, we present a more efficient primal-dual algorithm
generalizing the Ford-Fulkerson Max-Flow algorithm.
The dual of the LP is

\begin{LP}[LP:min-2-cut-dual]{Maximize}{%
2 \sum_{P \in \ALLPATH} f_P - \sum_e r_e}
\sum_{P: e \in P} f_P \leq c_e + r_e & \mbox{ for all $e \in E$} \\
f_P, r_e \geq 0 & \mbox{ for all $P \in \ALLPATH$ and all $e \in E$} \, .
\end{LP}

\noindent
We interpret the dual variables $f_P$ as describing a flow in the usual way.  
That is, the flow along each edge $e$ is 
\[
f_e = \sum_{P: e \in P} f_P \, .
\]
We say that $e$ is saturated if $f_e = c_e+r_e$.  
We call $r_e$ the \todef{relief} on $e$: 
in order to send more flow on an edge $e$, we can increase its capacity, 
but we pay for it in the objective
function.  It is worth augmenting the flow along a path so long as at
most one edge on the path is saturated, since increasing the flow and the
relief of the saturated edge at the same time increases the dual
objective.

Our primal-dual algorithm is similar to the Ford-Fulkerson algorithm,
and is based on the same concept of a residual graph. The residual graph
contains \todef{forward edges} for all edges $e$ in the original
graph, \emph{even when they are saturated}, because it is possible to
send more flow by adding relief. In addition, if $e = (u,v)$
carries flow $f_e$, then the residual graph, as usual, contains the \todef{backward edge}
$(v,u)$ with capacity $f_e$.
To capture how much relief would have to be added to augment the flow
along a path, we define, for each edge $e$ in the residual graph, a
\todef{length} \Length{e} as follows:
\begin{enumerate}
\item If $e$ is a saturated forward edge, then $\Length{e}=1$.
\item If $(v,u)$ is a backward edge such that $(u,v)$ has
  \emph{positive} relief, then $\Length{(v,u)} = -1$.
\item The lengths of all other edges are $\Length{e} = 0$.
\end{enumerate}
For paths $P$, we define $\PLength{P} = \sum_{e \in P} \Length{e}$.
We give our primal-dual algorithm as Algorithm~\ref{alg:min-2-cut}.

\begin{algorithm}
\caption{Flow computation for Minimum Double Cut} \label{alg:min-2-cut}
\begin{algorithmic}[1]
\STATE {\bf{Flow Computation:}}
\STATE Let $f$ be an arbitrary maximum flow on $G$.
\WHILE{there is an $s$-$t$ path $P$ with $\PLength{P} \leq 1$ in the residual
  graph $G_f$}
\STATE Let $P$ be such a path with minimum length \PLength{P}.
\STATE 
Augment the flow on $P$ by $\delta$, 
while simultaneously increasing the relief of 
any saturated edge by $\delta$, and decreasing the relief of any backward edge by $\delta$, 
for the smallest value of $\delta$ such that this action increases $\PLength{P}$, i.e., 
the smallest $\delta$ such that either a new forward edge becomes saturated or the relief 
on a backward edge becomes zero.
\ENDWHILE
\end{algorithmic}
\end{algorithm}

Notice that for any path $P$ of length at most 1, augmenting the flow
increases the dual objective. This follows since the total number of saturated edges
is at most one greater than the total number of backward edges with
relief; for the latter, each unit of flow reduces the total relief by
one unit, while for the former, each unit of flow increases the total
relief by one unit. Thus, the total increase in relief for sending
$\delta$ units of flow is at most $\delta$, while the first term of
the objective function, i.e., the value of the flow, increases by $2\delta$.

As with the Ford-Fulkerson algorithm, the running time could be
pseudo-polynomial with a poor choice of the augmenting path $P$. But breaking ties for the
smallest total number of edges in $P$ gives strongly polynomial
running time, as with the Edmonds-Karp algorithm.

When the algorithm terminates, we have a set $T$ of saturated edges, and a
subset $R \subseteq T$ of \todef{relief edges} $e$ with $r_e > 0$.
We will pick two edge-disjoint $s$-$t$ cuts $(S_1, \Compl{S_1})$,
$(S_2, \Compl{S_2})$ with the properties that:
\begin{enumerate}
\item Only saturated edges cross either of the cuts.
\item $S_1 \subset S_2$.
\item Each relief edge crosses one of the two chosen cuts.
\end{enumerate}
\noindent
This will naturally satisfy all complementary slackness conditions for
the two LPs, and thus prove optimality of the cuts.


To define and compute the two cuts, we focus on the graph $G'$
obtained from the residual graph by removing all forward edges $e$
with $f_e = 0$. Importantly, we use the same notion of length defined
above. From now on, all references to reachability, distances, etc.~are with
respect to $G'$.

For each node $v$, let \Dist{v} denote the minimum distance from $s$ to
$v$ in $G'$.  
We show next that $G'$ has no negative cycles,
so these distances are well-defined.

\begin{lemma}\label{lem:nonegcyc}
$G'$ has no path from the sink $t$ to the source $s$ of length
strictly less than $-1$. In particular, $G'$ contains no negative
cycles.
\end{lemma}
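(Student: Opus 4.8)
The plan is to argue that the algorithm maintains, as an invariant, the absence of any $t$-to-$s$ path of length $< -1$ in the graph $G'$ (equivalently, in the residual graph after deleting zero-flow forward edges). I would proceed by showing this holds after initialization and is preserved by each augmentation step. For initialization, $f$ is a maximum flow, so there is no augmenting path at all from $s$ to $t$ in the residual graph with positive residual capacity; I would translate this into a statement about $G'$ and lengths. A $t$-$s$ path in $G'$ consists of forward edges (length $0$ or $1$) and backward edges (length $0$ or $-1$); a backward edge $(v,u)$ has length $-1$ only when $(u,v)$ has positive relief, and immediately after initialization no edge has been given relief, so every backward edge has length $0$ and every edge of a $t$-$s$ path has length $\ge 0$, giving $\PLength{P}\ge 0 > -1$. (The cycle claim then follows because a negative cycle would have to pass through $s$ or use the structure forced by the algorithm — more precisely, a negative cycle combined with the shortest $s$-to-$v$ distances would contradict well-definedness; I will phrase the "in particular" as: if there were a negative cycle, one could append it to any $t$-$s$ path to drive its length below $-1$, or alternatively a negative cycle through $s$ directly yields a $t$-$s$ path of length $<-1$ by concatenation with a $0$-length segment — I will pick whichever concatenation argument is cleanest given reachability in $G'$.)

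For the inductive step, I would assume the invariant holds before an augmentation along a minimum-length path $P$ with $\PLength{P}\le 1$, and analyze how $G'$ and the length function change. The augmentation saturates at least one new forward edge or zeroes out the relief on some backward edge, and it may create new backward edges and delete some forward edges whose flow drops to zero; crucially, by the choice of $\delta$, the length of $P$ itself strictly increases, and one shows using the "shortest path tree" structure (as in the Edmonds–Karp analysis) that no new short $t$-$s$ path is created. The standard tool here is to track the distance labels $\Dist{v}$ and show they are monotonically nondecreasing under augmentations along shortest paths; a newly created backward edge $(v,u)$ corresponds to a forward edge $(u,v)$ that lay on a shortest $s$-$t$ path, so $\Dist{v} = \Dist{u} + \Length{(u,v)}$, from which the new edge $(v,u)$ with its length cannot produce a shortcut that would yield a $t$-$s$ path shorter than $-1$.

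The main obstacle I expect is the bookkeeping around the length function being allowed to take the value $-1$ on backward edges (so "distances" live in a setting with negative edges), combined with the fact that relief can be both added and removed, so an edge can change its length in both directions over the run of the algorithm. This means the clean Edmonds–Karp monotonicity argument needs to be adapted: I would need a potential/distance argument that is robust to length changes of $\pm 1$ on individual edges, and I would likely phrase it as: "if there were a $t$-$s$ path of length $<-1$, then combined with the existence of an $s$-$t$ path of length $\le 1$ (which is exactly the while-loop condition, or its negation at termination), we would get a negative cycle through the flow, which we rule out directly by a flow-conservation / capacity argument on the edges of that cycle." Concretely, a negative cycle in $G'$ would have more backward relief-edges than saturated forward edges, and rerouting flow around it would strictly decrease the dual objective while remaining feasible — contradicting that the algorithm only ever increases the dual objective and that the current dual solution is the one the algorithm has constructed. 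That rerouting argument is, I believe, the cleanest route and sidesteps delicate distance-label monotonicity entirely; the main work is verifying feasibility of the rerouted dual solution and that its objective strictly decreases, which is a short calculation mirroring the "each unit of flow on a short path increases the objective" remark already made for Algorithm~\ref{alg:min-2-cut}.
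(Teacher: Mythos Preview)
Your induction framework and the initialization argument are correct and match the paper: initially there is no relief, so every edge has nonnegative length and the invariant holds trivially. The reduction from ``no negative cycle'' to ``no $t$-$s$ path of length $< -1$'' also goes through essentially as in the paper (a negative cycle must contain a flow-carrying edge, whose residual backward edges give paths from $t$ into the cycle and from the cycle to $s$, so the cycle can be traversed repeatedly to drive the $t$-$s$ length to $-\infty$).

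The gap is in your inductive step. Your preferred ``rerouting'' argument does not yield a contradiction. Even granting that pushing flow around a negative cycle changes the dual objective (and in fact it \emph{increases} it, not decreases it, since more relief can be removed on the length-$(-1)$ backward edges than must be added on the length-$1$ saturated forward edges), this is not in tension with anything the algorithm guarantees. The termination condition is ``no $s$-$t$ path of length $\le 1$,'' not ``no improving direction for the dual,'' and at intermediate iterations there is no claim of dual optimality whatsoever. So the existence of an improving (or worsening) circulation simply does not contradict the algorithm's state, and you cannot invoke dual optimality here since the lemma is precisely a tool used later to \emph{establish} optimality.

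The paper's inductive step is a direct cycle-cancellation argument that your proposal does not reach. Suppose augmenting along $P$ (with $\ell(P) \le 1$ before augmentation) creates a $t$-$s$ path $P'$ of length $< -1$ afterward. Then $P \cup P'$ is a closed walk of total length $< 0$, where edges of $P$ carry their \emph{pre}-augmentation lengths and edges of $P'$ their \emph{post}-augmentation lengths. The only edges whose length decreased through the augmentation are the backward copies of edges in $P$ that just received relief; each such backward edge in $P'$ (post-length $-1$) is paired with its forward copy in $P$ (pre-length $1$). Removing all such forward/backward pairs leaves a union of edge-disjoint cycles, all of whose edges were present in the residual graph \emph{before} the augmentation, and whose total length is still negative. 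Hence one of them is a negative cycle that existed before the step, contradicting the induction hypothesis. This pairing-and-cancellation step is the key idea missing from your proposal; your Edmonds--Karp distance-label sketch is in the right spirit but would have to be developed into essentially this argument to handle the $\pm 1$ length changes.
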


\begin{proof}
We show by induction that these properties hold for the residual graph
in each iteration. Since $G'$ is obtained from the residual graph only
by deleting edges with zero flow (and thus length 0 or 1 only),
distances can only increase in $G'$.

Initially, all edges have length 0 or 1, and there are no backward
edges, so the claim clearly holds.
If the residual graph contained a negative-length cycle $C$, then $C$ would
have to contain at least one flow-carrying edge $e = (u,v)$. Since $e$ has
incoming flow from $s$ and outgoing flow to $t$, the residual graph
would contain a path of backward edges from $u$ to $s$ and one from $t$ to
$v$. Thus, any negative cycle would give arbitrarily negative-length paths from
$t$ to $s$. It is therefore enough to establish the first claim.

Consider an iteration when flow is augmented along a path $P$.
Suppose that this generates a $t$-$s$ path $P'$ in the residual graph of
length strictly less than $-1$.
$P \cup P'$ gives a cycle.
If we assign edges in $P$ their length prior to the augmentation, and
edges in $P'$ their length after the augmentation,
then the total length of the cycle $P \cup P'$ is negative.
The only edges in $P'$ whose length can have decreased through the
augmentation are the backward versions of edges to which relief was
added by $P$. They were saturated before the augmentation, so their
forward length was 1, and their backward length after augmentation is
$-1$.

Now consider removing edges that appear both forward
and backward in $P \cup P'$. We obtain a union of edge-disjoint
cycles, such that all edges in these cycles were present in the
residual graph prior to the flow augmentation.
Of these edge-disjoint cycles, by the argument of the previous paragraph,
at least one cycle $C$ has negative length with respect
to the previous paragraph's definition. The edges in $P' \setminus P$
don't change their length, so $C$ had negative length before the
augmentation, contradicting the induction hypothesis.
\end{proof}

For two nodes $u,v$, we write $u \dreaches v$ if there is a path of
length at most 0 from $u$ to $v$ in $G'$. We now define the cuts.
Let $S_1 := \Set{v}{\Dist{v} \leq 0}$.
Define $E' := \Set{(u,v) \in R}{\Dist{u} > 0}$.
Now, let $U$ be the set of all vertices lying on a $v$-$t$ path for
some edge $(u,v) \in E'$, and let $\Compl{S_2}$ be the set of all vertices
$y$ such that $y \dreaches w$ for some $w \in U$.
Clearly, $(S_1, \Compl{S_1})$ and $(S_2,\Compl{S_2})$ define two
$s$-$t$ cuts using only saturated edges.

\begin{lemma}\label{lem:csconds}
No edge crosses both cuts $(S_1, \Compl{S_1})$ and $(S_2,\Compl{S_2})$.
Each relief edge $e \in R$ crosses one of the cuts
$(S_1, \Compl{S_1})$ or $(S_2,\Compl{S_2})$.
\end{lemma}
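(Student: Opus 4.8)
The plan is to verify the two complementary-slackness-style properties by carefully tracking how the distance labels \Dist{v} interact with the cut definitions. First I would record the basic monotonicity facts that follow from \Length{e} \in \{-1,0,1\} and the absence of negative cycles (Lemma \ref{lem:nonegcyc}): if $(u,v)$ is an edge of $G'$, then $\Dist{v} \leq \Dist{u} + \Length{(u,v)}$, so $\Dist{v} \leq \Dist{u} + 1$ always, and $\Dist{v} \leq \Dist{u}$ whenever $\Length{(u,v)} \leq 0$. In particular $u \dreaches v$ implies $\Dist{v} \leq \Dist{u}$, so both $S_1$ (vertices with $\Dist{v} \leq 0$) and $\Compl{S_2}$ (vertices that reach $U$ by a length-$\leq 0$ path) are ``downward closed'' under $\dreaches$-steps in the appropriate direction. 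I would also note that, by the definition of $U$ and $\Compl{S_2}$, every vertex in $U$ lies in $\Compl{S_2}$, and that $t \in \Compl{S_2}$ while $s \in S_1 \subseteq S_2$, so both are genuine $s$-$t$ cuts, crossed only by saturated edges (a length-$0$ forward edge out of $S_1$ would pull its head into $S_1$; similarly for $\Compl{S_2}$).

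Next I would establish $S_1 \subseteq S_2$, i.e.\ property 2 from the list preceding the lemma, which is needed so that ``crossing both cuts'' is even possible to rule out cleanly: if $v \in \Compl{S_2}$ then $v \dreaches w$ for some $w \in U$, and $w$ is reachable from the head $v'$ of some relief edge $(u',v') \in E'$ with $\Dist{u'} > 0$; since $(u',v')$ is saturated, $\Dist{v'} \geq \Dist{u'} - $ ... — more directly, $v'$ is reached from $u'$ only via the saturated forward edge of length $1$ (the backward edge would need positive relief on $(v',u')$), so actually I should argue $\Dist{v} \geq 1 > 0$ for every $v \in \Compl{S_2}$ that is an ``ancestor'' — wait, the inequality goes the wrong way. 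The cleaner route: show directly that no vertex of $\Compl{S_2}$ has $\Dist{\cdot} \leq 0$, equivalently $\Compl{S_2} \subseteq \Compl{S_1}$, which is exactly $S_1 \subseteq S_2$. For this, suppose $v \in \Compl{S_2}$ with $v \dreaches w \in U$; then $\Dist{w} \leq \Dist{v}$, and $w$ lies on a $v'$-$t$ path for a relief edge $(u',v') \in E'$, so $\Dist{w} \geq \Dist{v'} \geq \Dist{u'} > 0$ using that relief edges are saturated forward edges of length $1$ and that along the $v'$-to-$w$ path one never decreases distance (each edge on a ``forward'' $v'$-$t$ path in $G'$ — here I need that the $v'$-$t$ paths witnessing $U$ use only non-negative-length edges, which holds because they are forward flow-carrying or saturated edges). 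Hence $\Dist{v} > 0$, so $v \notin S_1$.

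With $S_1 \subseteq S_2$ in hand, the first claim (no edge crosses both cuts) reduces to: no saturated edge $(u,v)$ has $u \in S_1$ (so $\Dist{u} \leq 0$, hence $v$ crosses only if $v \in \Compl{S_1}$, i.e.\ $\Dist{v} > 0$, forcing $\Dist{v} = 1$ and the edge being a length-$1$ saturated forward edge out of $s$-side) while simultaneously $v \in \Compl{S_2}$ and $u \in S_2$. I would derive a contradiction by combining the two membership conditions: $u \in S_2$ means $u \notin \Compl{S_2}$, but $u \in S_1$ gives $\Dist{u} \leq 0$, and I'd show any vertex with $\Dist{\cdot} \leq 0$ that has a saturated out-edge into $\Compl{S_2}$ would itself be in $\Compl{S_2}$ (walking $u \to v \dreaches w \in U$ would need the $u\to v$ step to have length $\leq 0$, but it's saturated hence length $1$). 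The second claim — each relief edge $e = (u,v) \in R$ crosses one of the two cuts — splits on $\Dist{u}$: if $\Dist{u} \leq 0$ then $u \in S_1$ and $v \notin S_1$ (since $e$ is a saturated forward edge, so a length-$0$ path from $s$ through $e$ is impossible, giving $\Dist{v} \geq 1$), so $e$ crosses $(S_1,\Compl{S_1})$; if $\Dist{u} > 0$ then $e \in E'$ by definition, so $v \in U \subseteq \Compl{S_2}$, and I must check $u \in S_2$, i.e.\ $u \notin \Compl{S_2}$ — if $u \dreaches w \in U$ then combining with $e$ itself we'd get a relief/flow structure forcing a negative or zero cycle contradicting Lemma \ref{lem:nonegcyc}, or more simply contradicting $\Dist{u} > 0$ via $\Dist{u} \geq \Dist{v} - $ (backward length) arguments; I expect this last sub-case to be the main obstacle and will need to use the precise definition of $U$ together with acyclicity to rule out $u$ being $\dreaches$-below $U$. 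Assembling these, all complementary slackness conditions for LP~\eqref{LP:min-2-cut} and its dual LP~\eqref{LP:min-2-cut-dual} are satisfied, so $E(S_1,\Compl{S_1}) \cup E(S_2,\Compl{S_2})$ is an optimal double cut.
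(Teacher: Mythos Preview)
Your proposal has a systematic gap: you try to drive the argument with distance-label inequalities, but the triangle inequality $\Dist{v} \leq \Dist{u} + \Length{(u,v)}$ only ever gives \emph{upper} bounds on distances, while every crucial step you need is a \emph{lower} bound. Concretely: in your proof that $S_1 \subseteq S_2$, you assert $\Dist{w} \geq \Dist{v'}$ because $w$ lies on a forward $v'$--$t$ path with non-negative edge lengths; but that only yields $\Dist{w} \leq \Dist{v'} + \PLength{P_{v'\to w}}$, and nothing prevents $w$ from having a much shorter $s$--$w$ path not through $v'$. The same problem recurs in your second claim, case $\Dist{u}\leq 0$: ruling out length-$0$ paths to $v$ \emph{through} $e$ does not rule out other $s$--$v$ paths of length $\leq 0$. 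And in the case $\Dist{u}>0$ you acknowledge the obstacle but do not resolve it. Your first-claim paragraph likewise ends by noting the needed step $u \dreaches v$ fails because the edge has length $1$, without offering an alternative.

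The paper's proof avoids distance lower bounds entirely. Instead, for each assumed violation it assembles an explicit walk in $G'$ whose total length contradicts Lemma~\ref{lem:nonegcyc} (either a negative cycle or a $t$--$s$ path of length $\leq -2$, or an $s$--$u'$ path of length $\leq 0$ for some $u'$ with $\Dist{u'}>0$). The key ingredient you are missing is the systematic use of \emph{reversed flow paths}: every relief edge carries positive flow, so there is a forward flow path through it from $s$ to $t$; reversing any segment of such a path yields a valid path in $G'$ consisting of backward edges, each of length $\leq 0$. For the first claim, the paper concatenates an $s$--$u$ path of length $\leq 0$, the edge $(u,v)$ of length $\leq 1$, the witness $v \dreaches w$, the reversed $v'$-to-$w$ segment (length $\leq 0$), and the backward relief edge $(v',u')$ of length $-1$, obtaining an $s$--$u'$ walk of length $\leq 0$ that contradicts $(u',v') \in E'$. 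The second claim is handled by analogous constructions producing a negative cycle (when $u\in S_1$) or a $t$--$s$ path of length $\leq -2$ (when $u\notin S_1$). If you want to salvage your approach, you would need to supply those reversed-path constructions anyway; the distance labels alone will not close the argument.
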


\begin{emptyproof}
To prove the first claim, suppose that $e=(u,v)$ crosses both cuts.
By the definition of $\Compl{S_2}$, that means that there is an edge
$e'=(u',v') \in E' \subseteq R$ such that there is a path of length at
most 0 from $v$ to some node $w$ on a $v'$-$t$ path.
Consider the path from $s$ to $u$ (of length at most 0), followed by
$e$ (of length at most 1), followed by the path from $v$ to $w$, and then the
path to $v'$ backwards, followed by $e'$ backwards. This is a path of
length at most 0 from $s$ to $u'$, meaning that $u'$ should have been in
$S_1$, and contradicting that $e'$ was in $E'$.

To prove the second claim, suppose that a relief edge $e=(u,v)$ crosses neither of the
cuts. We distinguish two cases:
\begin{enumerate}
\item If $u \in S_1$ then, since $e$ does not cross either cut,
  $v \in S_1$, so $\Dist{v} \leq 0$.
  But then the $s$-$v$ path of length at most 0, followed by
  $e$ backwards (of length $-1$), followed by the $u$-$s$ path backward (of length
  at most 0) gives a negative cycle, contradicting Lemma \ref{lem:nonegcyc}.
\item If $u \notin S_1$, then $\Dist{u} > 0$ and $e \in E'$.  Thus $v
  \in U$, and since $v \dreaches v$, we have $v \in \Compl{S_2}$.
  Since $e$ does not cross either cut, we also have $u \in \Compl{S_2}$.
  This means that there is a $w \in U$ and $e'=(u',v') \in E'$ such that $w$
  lies on a $v'$-$t$ path, and $u \dreaches w$.
  Now consider the path from $t$ to $v$ (of length at most 0), then
  using $e$ backwards (of length $-1$), then the length-0 path from $u$ to
  $w$ and the path from $w$ to $v$ backwards (of length at most 0),
  followed by $e'$ backwards, and the path from $u'$ to $s$ backwards
  (of length at most 0). This gives a $t$-$s$ path of total length at most $-2$, 
  again contradicting Lemma \ref{lem:nonegcyc}. \QED
\end{enumerate}
\end{emptyproof}

Lemma~\ref{lem:csconds} implies that the set $R$ of relief edges 
forms a double cut. Thus our algorithm finds a minimum-cost double cut in polynomial time.

\begin{remark}
As in Remark \ref{rem:flows-tight} for the flow mechanism \FLOWMECH,
we can show that on instances derived from the pruning step, \EVMECH
coincides with the mechanism of \cite{chen:elkind:gravin:petrov}.
Thus, the tighter analysis shows that \CUTMECH is in fact
2-competitive. 
We conjecture that \CUTMECH is indeed optimal when compared against
the buyer-pessimal lower bound
of \cite{elkind:goldberg:goldberg:frugality}.
\end{remark}


\section{Directions for Future Work} \label{sec:conclusions}

We have presented novel truthful and competitive mechanisms
for three important combinatorial problems: Vertex Covers, $k$-flows,
and $s$-$t$ cuts. The Vertex Cover mechanism was based on scaling the
submitted bids by multipliers derived as components of the 
dominant eigenvector of a suitable matrix.
Both the flow and cut mechanisms were based on pruning the input graph,
and then applying the Vertex Cover mechanism to the pruned version.
Besides the individual mechanisms, we believe that the methodology of
reducing input instances to Vertex Cover problems may be of interest
for future frugal mechanism design.

In general, the Vertex Cover mechanism does not run in polynomial
time, due to two obstacles. First, computing the matrix \MAT
requires computing the largest fractional clique size in the
neighborhood of each node $v$.
Subsequently, computing the solution with respect to scaled
costs requires finding a cheapest vertex cover. For the second
obstacle, it seems quite likely that monotone algorithms such as the
one in \cite{elkind:goldberg:goldberg:frugality} could be adapted to
our setting, and yield constant-factor approximations.
However, the difficulty of computing the entries of \MAT
seems more severe. In fact, we conjecture that no polynomial-time
truthful mechanism for Vertex Cover can be constant-competitive.
This result would be quite interesting, in that it would show that the
requirements of incentive-compatibility and computational tractability
together can lead to significantly worse guarantees than either
requirement alone. 
A positive resolution of this conjecture would thus
be akin to the types of hardness results demonstrated recently for
the Combinatorial Public Project Problem
\cite{papadimitriou:schapira:singer:hardness}.

While our methodology of designing composable pre-processing
algorithms will likely be useful for other problems as well, it does
not apply to all set systems. It is fairly easy to construct set
systems for which no such pruning algorithm is possible.
Even when pruning is possible in principle, it may come with a large
blowup in costs.

Thus, the following bigger question still stands: which classes of
set systems admit constant-competitive mechanisms?
The main obstacle is our inability to prove strong lower bounds on frugality
ratios. To date, all lower bounds (here, as well as in
\cite{elkind:sahai:steiglitz,BeyondVCG}) are based on pairwise
comparisons between agents, which can then be used to show that
certain agents, by virtue of losing, will cause large payments.
This technique was exactly the motivation for our Vertex Cover approach.
In order to move beyond Vertex Cover based mechanisms, it will be
necessary to explore lower bound techniques beyond the one used in
this paper.

In recent joint work with the authors of
\cite{chen:elkind:gravin:petrov}, we have shown that the 
factor \half in the lower bound of Lemma \ref{lem:vertex-lower} can be
removed, thus showing that \EVMECH is optimal.
The proof of this result will be presented in a joint full version of
the present paper with \cite{chen:elkind:gravin:petrov}.

\subsection*{Acknowledgment}
We would like to thank Edith Elkind, Uriel Feige, Nick Gravin, Anna
Karlin, Tami Tamir, and Mihalis Yannakakis for useful discussions and
pointers, and anonymous reviewers for useful feedback.  
D.K.~is supported in part by an NSF CAREER Award, an ONR Young
Investigator Award and an award from the Sloan Foundation.
C.M.~is supported in part by the McDonnell Foundation.



%
\bibliographystyle{plain}
\bibliography{names,conferences,publications,bibliography}

\end{document}